\newcommand\redout{\bgroup\markoverwith{\textcolor{red}{\rule[0.5ex]{2pt}{0.8pt}}}\ULon}
\DeclarePairedDelimiter\bra{\langle}{\rvert}
\DeclarePairedDelimiter\ket{\lvert}{\rangle}
\DeclarePairedDelimiterX\braket[2]{\langle}{\rangle}{#1\,\delimsize\vert\,\mathopen{}#2}
\newtheorem{theorem}{Theorem}[section]
\newtheorem{lemma}[theorem]{Lemma}
\newtheorem{proposition}[theorem]{Proposition}
\newtheorem{construction}{Construction}[section]
\theoremstyle{remark}
\newtheorem{definition}{Definition}[section]
\newtheorem{example}{Example}
\newcommand\nc\newcommand
\nc\bfa{{\boldsymbol a}}\nc\bfA{{\boldsymbol A}}\nc\cA{{\EuScript A}}
\nc\bfb{{\boldsymbol b}}\nc\bfB{{\boldsymbol B}}\nc\cB{{\EuScript B}}
\nc\bfc{{\boldsymbol c}}\nc\bfC{{\boldsymbol C}}\nc\cC{{\mathscr C}}
\nc\bfd{{\boldsymbol d}}\nc\bfD{{\boldsymbol D}}\nc\cD{{\mathscr D}}
\nc\bfe{{\boldsymbol e}}\nc\bfE{{\boldsymbol E}}\nc\cE{{\EuScript E}}
\nc\bff{{\boldsymbol f}}\nc\bfF{{\boldsymbol F}}\nc\cF{{\mathscr F}}
\nc\bfg{{\boldsymbol g}}\nc\bfG{{\boldsymbol G}}\nc\cG{{\EuScript G}}
\nc\bfh{{\boldsymbol h}}\nc\bfH{{\boldsymbol H}}\nc\cH{{\mathcal H}}
\nc\bfi{{\boldsymbol i}}\nc\bfI{{\boldsymbol I}}\nc\cI{{\mathcal I}}
\nc\bfj{{\boldsymbol j}}\nc\bfJ{{\boldsymbol J}}\nc\cJ{{\EuScript J}}
\nc\bfk{{\boldsymbol k}}\nc\bfK{{\boldsymbol K}}\nc\cK{{\EuScript K}}
\nc\bfl{{\boldsymbol l}}\nc\bfL{{\boldsymbol L}}\nc\cL{{\EuScript L}}
\nc\bfm{{\boldsymbol m}}\nc\bfM{{\boldsymbol M}}\nc\cM{{\EuScript M}}
\nc\bfn{{\boldsymbol n}}\nc\bfN{{\boldsymbol N}}\nc\cN{{\EuScript N}}
\nc\bfo{{\boldsymbol o}}\nc\bfO{{\boldsymbol O}}\nc\cO{{\EuScript O}}
\nc\bfp{{\boldsymbol p}}\nc\bfP{{\boldsymbol P}}\nc\cP{{\EuScript P}}
\nc\bfq{{\boldsymbol q}}\nc\bfQ{{\boldsymbol Q}}\nc\cQ{{\mathcal Q}}
\nc\bfr{{\boldsymbol r}}\nc\bfR{{\boldsymbol R}}\nc\cR{{\EuScript R}}
\nc\bfs{{\boldsymbol s}}\nc\bfS{{\boldsymbol S}}\nc\cS{{\EuScript S}}
\nc\bft{{\boldsymbol t}}\nc\bfT{{\boldsymbol T}}\nc\cT{{\EuScript T}}
\nc\bfu{{\boldsymbol u}}\nc\bfU{{\boldsymbol U}}\nc\cU{{\EuScript U}}
\nc\bfv{{\boldsymbol v}}\nc\bfV{{\boldsymbol V}}\nc\cV{{\mathscr V}}
\nc\bfw{{\boldsymbol w}}\nc\bfW{{\boldsymbol W}}\nc\cW{{\mathscr W}}
\nc\bfx{{\boldsymbol x}}\nc\bfX{{\boldsymbol X}}\nc\cX{{\EuScript X}}
\nc\bfy{{\boldsymbol y}}\nc\bfY{{\boldsymbol Y}}\nc\cY{{\mathscr Y}}
\nc\bfz{{\boldsymbol z}}\nc\bfZ{{\boldsymbol Z}}\nc\cZ{{\EuScript Z}}
\nc{\remove}[1]{}
\DeclareSymbolFont{bbold}{U}{bbold}{m}{n}
\DeclareSymbolFontAlphabet{\mathbbold}{bbold}
\DeclareMathOperator{\supp}{supp}
\DeclareMathOperator{\Tr}{Tr}
\newcommand{\new}[1]{\textcolor{black}{#1}}
\newcommand{\E}{{\mathbb E}}
\nc\reals{{\mathbb R}}
\nc{\ff}{{\mathbb F}}
\nc{\PP}{{\mathbb P}}
\nc{\complex}{{\mathbb C}}
\newcommand{\krA}{\mathfrak{K}_{\mathcal{A}}}
\DeclarePairedDelimiter\ceil{\lceil}{\rceil}
\begin{document}

\title{A family of permutationally invariant quantum codes}

\author{Arda Aydin}
\affiliation{Department of ECE and Institute for Systems Research, University of Maryland, College Park, MD 20742}
\email{aaydin@umd.edu}
\author{Max A. Alekseyev}
\affiliation{Department of Mathematics, The George Washington University, Washington, DC 20052}
\author{Alexander Barg}
\affiliation{Department of ECE and Institute for Systems Research, University of Maryland, College Park, MD 20742}
\email{abarg@umd.edu}
\orcid{0000-0003-1985-4623}
\maketitle

\begin{abstract}
 We construct a new family of permutationally invariant codes that correct $t$ Pauli errors for any $t\ge 1$.  We also show that codes in the new family correct quantum deletion errors as well as spontaneous decay errors. \new{Our construction contains some of the previously known permutationally invariant quantum codes as particular cases, which also admit transversal gates}. In many cases, the codes in the new family are shorter than the best previously known explicit permutationally invariant codes for Pauli errors and deletions. \new{Furthermore, our new code family includes a new $((4,2,2))$ optimal single-deletion-correcting code}.  As a separate result, we generalize the conditions for permutationally invariant codes to correct $t$ Pauli errors from the previously known results for $t=1$ to any number of errors. For small $t$, these conditions can be used to construct new examples of codes by computer.
\end{abstract}

\section{Introduction}

Quantum error correction is one of the essential components of quantum computing that aims to protect quantum information from errors caused by quantum noise, such as decoherence.  Mapping a quantum state to be protected into a higher-dimensional Hilbert space of the physical system is a significant part of quantum error correction. A subspace of the Hilbert space of a physical system is called a quantum code for a given type of errors if it satisfies certain specific conditions for error correction \cite{knill}. In many applications, it is desirable to construct quantum codes that lie within the ground space of the system. Motivated by this goal, in this paper, we study permutation-invariant quantum codes whose codewords form ground states of the ferromagnetic Heisenberg model.

Recall that Heisenberg's model characterizes interactions between spins in the system. Two spin-$1/2$ particles are coupled by an interaction described by the Hamiltonian $ \hat{H} \propto\;\bfS_i\bfS_j,$ where $ \bfS_i $ and $ \bfS_j $ are spin operators for the particles $ i $ and $ j $, respectively. In the absence of an external magnetic field, the Heisenberg ferromagnetic model is described by the Hamiltonian that can be written in the form
\begin{align*}
    \hat{H}=-2\sum\limits_{i<j}J_{ij}\bfS_i\bfS_j,
\end{align*}
where $ J_{ij} $ is the exchange (coupling) 
constant between particles $i$ and $j$ in the system. Note that
$ J_{ij} > 0 $ since we are considering the ferromagnetic model. See \cite[Ch.~1,4]{blundell} for a detailed discussion of this model. 
It can be shown that 
    \begin{align*}
        \bfP_{ij} = \frac{1}{2}{\bfI} + 2\bfS_i\bfS_j,
    \end{align*}
where $ \bfI $ is the identity and where $\bfP_{ij}$ is the swap operator that exchanges spin $ i $
and spin $j$,
essentially swapping the spin-$\frac12$ particles $i$ and $j$, e.g., $ \bfP_{12}\ket{\updownarrows\upuparrows} = \ket{\downuparrows\upuparrows}$. The Hamiltonian of the Heisenberg model can be written in terms of the swap operators in the following way:
\begin{align*}
     \hat{H}=-\sum\limits_{i<j}J_{ij}\left(\bfP_{ij}- \frac{1}{2}{\bfI}\right).
\end{align*}
A state $\ket\psi$ is called {\em permutation-invariant} if it is preserved by all swap operators $\bfP_{ij},$ i.e., $\ket\psi$ is a common eigenstate of the swap operators with eigenvalue 1.
Denoting $ J=\sum_{i<j} J_{ij}$, we observe that for any permutation-invariant state $ \ket{\psi}, $ 
\begin{align*}
    \left(\hat{H}-\frac{J}{2}{\bfI}\right)\ket{\psi} = -\sum_{i<j} J_{ij}\bfP_{ij}\ket{\psi}=-J\ket{\psi}.
\end{align*}
Since $ J_{ij}>0 $, the spectral norm of $ \hat{H}-\frac{J}{2}{\bfI} $ is bounded above by $ J $,  
so the smallest eigenvalue of the Hamiltonian is $ -J/2 ,$ and its corresponding eigenstate is $ \ket{\psi} $. Therefore, any permutation-invariant state is a ground state in the ferromagnetic Heisenberg model \cite{ouyangPI}.

Permutation-invariant codes were introduced in the works of Ruskai and Pollatsek \cite{ruskaiExchange,ruskai-polatsek}. The
codes they constructed encode a single logical qubit, and are capable of correcting all one-qubit errors and certain types of two-qubit errors. In particular, Ruskai's $((9,2,3))$ code \cite{ruskaiExchange} has the basis
   \begin{align*}
   \ket {0_L}&=\ket{0^9}+\frac1{\sqrt{28}}\sum_{\pi}\ket{1^60^3}\\
   \ket {1_L}&=\ket{1^9}+\frac1{\sqrt{28}}\sum_{\pi}\ket{0^61^3},
   \end{align*}
where the sum is extended to all permutations of the argument state. This code is obtained as a symmetrized
version of Shor's \eczoo[{9-qubit code}]{shor_nine} \cite{ShorCode}. Generalizing this construction, Ouyang \cite{ouyangPI} found a family of permutation-invariant codes that correct $t$ 
arbitrary errors and $t$ spontaneous decay errors. The family is parameterized by integers
$g$, $n$, and $u$ (hence the name ``gnu codes"), and the shortest $t$-error-correcting codes in it are of length $(2t+1)^2$. Ouyang subsequently showed that permutation-invariant codes are capable of supporting reliable quantum storage, quantum sensing, and decoherence-free communication
\cite{ouyangSensors, ouyangStorage, ouyanggCommunication}. 

In hindsight, it is clear that permutation-invariant codes also support recovery of encoded states from deletion errors: since
permutations preserve the states, deletion of arbitrary $t$ positions is not different from deleting the {\em first
$t$ qubits}, and thus deletions are equivalent to erasures. However, making this idea formal requires a rigorous
definition of the quantum deletion channel as well as proving the equivalence. This was accomplished in the works of Nakayama and Hagiwara \cite{nakayamaFirst,hagiwara4qubit,hagiwaraSingle}, who also observed that permutation-invariant codes are capable of correcting deletion errors and constructed single-deletion-correcting codes. Subsequently, works \cite{hagiwaraDeletion,ouyangDeletion} showed that Ouyang's {\em gnu} codes can correct $t$ deletions. In particular, the shortest known code to correct $t$ deletions comes from this family, and it has length $(t+1)^2.$ 

Recall that correcting deletions has a long history in classical coding theory. This problem was introduced as far back as 1965 by Levenshtein \cite{levenshtein1966binary}, and it has been studied both in combinatorial and probabilistic versions. Despite a number of spectacular advances in recent years, both problems are still far from solution: for instance, the only
case when the optimal codes are known is correcting a single deletion \cite{varshamov1965codes}. We refer the reader to two recent surveys dealing with constructive and capacity aspects of transmission over the deletion channel, \cite{haeupler2021synchronization} and \cite{cheraghchi2020overview}, published in the special issue devoted to the scientific legacy of Vladimir Levenshtein.

In quantum coding theory, a deletion can be modeled as a partial trace operation where the traced-out qubits are unknown. It turns out that the performance of permutation-invariant codes for correcting errors or deletions is sometimes amenable to analysis. Focusing on this code family, we study the error correction (Knill--Laflamme) conditions for general permutation-invariant codes. Using deletion correction as motivation, we propose a new family of permutation-invariant codes defined by their parameters $g$, $m$, $\delta$, and $ \epsilon $. The shortest codes in this family have length $(2t+1)^2-2t$ and can correct all $t$ patterns of qubit errors and $ 2t $ deletion errors. The shortest $t$-error-correcting permutation-invariant codes known previously are due to Ouyang and require $2t$ more physical qubits than the codes that we propose. Specializing our construction to $t=1,$ we observe that the length of our code is the same as the Pollatsek--Ruskai's $((7,2,3))$ permutation-invariant code \cite{ruskai-polatsek}, although the two codes are different. The authors of \cite{ruskai-polatsek} also derived explicit conditions for correcting a single error with permutation-invariant codes, and we extend this result to $t\ge1$ errors.

In Sections \ref{sec:preliminaries} and \ref{sec:deletion channel} we collect the necessary definitions and some basic facts about quantum deletions. In particular, in Sec.~\ref{sec:deletion channel} we recall the definition of deletion operators \cite{ouyangEquivalence} and prove some of their properties. Sec.~\ref{sec:KL} contains a detailed form of the error correcting conditions for permutation-invariant codes. Our main result (the new code family) is presented in Sec.~\ref{sec:new family} (checking the error correcting conditions turns out to be technically involved, and the proof is moved to Appendix~\ref{sec:Proofs}).
In Sec.~\ref{sec:spontaneous} we find the conditions on the code parameters for the codes to correct a given number $t\ge 1$ of spontaneous decay errors. Finally, in Sec.~\ref{sec:PR conditions} we present a generalization of the 
Pollatsek--Ruskai conditions for error correction with their permutation-invariant codes, and show that it potentially leads to new examples of such codes.


\section{Preliminaries}\label{sec:preliminaries}
Throughout this paper, we use the following notation. Let $ \ket{\Psi} = \ket{\psi_1\psi_2,\ldots,\psi_n} = 
\ket{\psi_1} \otimes \ket{\psi_2} \otimes \ldots \otimes \ket{\psi_n} \subset \complex^{2\otimes n} $ be a pure state, where $\complex^{2\otimes n}$ is a shorthand for $(\complex^{2})^{\otimes n}$, and we assume that $\braket{\psi_i}{\psi_i} = 1$ for all
$i=1,2,\dots,n$. A general quantum state is identified with its density matrix, i.e.,  
a positive semidefinite Hermitian matrix of trace 1. The density matrix of a pure state is simply
$\rho = \ket{\psi}\bra{\psi} $. For a collection of pure states $\ket{\psi_1}, \ket{\psi_2},\ldots, \ket{\psi_n}$ such that $\Pr(\ket{\psi_i})=p_i$ for all $i$ and $\sum_ip_i=1,$ the density matrix is defined as $\rho = \sum_i 
p_i\ket{\psi_i}\bra{\psi_i} $. Denote by $S( \complex^{2\otimes n}) $ the set of all density matrices of order $ 
2^n $. 

\begin{definition}\label{DefPartialTrace} Consider an $n\times n$ matrix  
$A = \sum_{\bfx,\bfy \in \{ 0,1 \}^n} a_{\bfx,\bfy}\ket{\bfx}\bra{\bfy}$, 
where $ a_{\bfx,\bfy} \in \complex $. For an integer $i \in \{1,2,\dots,n\} $, the {\em partial trace} of $A$ is a mapping
\begin{align*}
    \Tr_i : S( \complex^{2\otimes n} ) &\xrightarrow{} S( \complex^{2\otimes (n-1)} ) \\
    A &\mapsto \sum_{\bfx, \bfy \in\{0,1\}^n} a_{\bfx, \bfy} \Tr(\ket{x_i}\bra{y_i})\ket{\bfx_{\sim i}}\bra{\bfy_{\sim i}},
\end{align*}
where $\bfx_{\sim i} = \ket{x_1,\ldots ,x_{i-1},x_{i+1},\ldots,x_n } $ and $ \bfy_{\sim i} = \ket{y_1,\ldots ,y_{i-1},y_{i+1},\ldots,y_n } $. 
\end{definition}

\new{
Throughout the paper we use following standard definition of binomial coefficients: for a real $x$ and integer $r$ 
  $$
  \binom xr=\begin{cases}
  \frac{x(x-1)\dots(x-r+1)}{r!} &\text{if } r>0\\
  1&\text{if } r=0\\
  0&\text{otherwise}.
  \end{cases}
 $$
 }

Permutation-invariant quantum states are conveniently described in terms of Dicke states \cite{Dicke1,Dicke2,Dicke3}.
 \begin{definition}\label{DefDickeStates}
A {\em Dicke state} $ \ket{D_w^n}$ is a linear combination of all qubit states of length $n$ of 
``Hamming weight'' $w$, i.e.
    \begin{align*}
            \ket{D_w^n} = \frac{1}{\sqrt{\binom{n}{w}}}\sum_{\substack{\bfx\in\{ 0,1 \}^n \\ |\bfx|=w}}\ket{\bfx}.
    \end{align*}
    Sometimes we also use unnormalized Dicke states given by $\ket{H^n_w}= \sqrt{\binom{n}{w}}\ket{D_w^n}$.
\end{definition}
\new{Note that $\braket{D^n_i}{D^n_j} = \delta_{ij}$, where $ \delta_{ij} $ is the Kronecker delta.}

For spin-$\frac{1}{2}$ particles, a Dicke state $\ket {D_n^w}$ can be viewed as a superposition of the tensor product of states of an $n$-particle system in which $w$ particles are in the spin-up, and $n-w$ are in the spin-down configuration; for instance,
\begin{align*}
    \ket{D_1^3} =\frac{\ket{001}+\ket{010}+\ket{100}}{\sqrt{3}} = \frac{\ket{\downdownarrows\uparrow}+\ket{\downarrow\uparrow\downarrow}+\ket{\uparrow\downdownarrows}}{\sqrt{3}}. 
\end{align*}
A quantum code $\cC$ maps a $2^k$-dimensional Hilbert space into a subspace of the $2^n$-dimensional Hilbert space $\complex^{2\otimes n}$, 
i.e.,  it encodes $k$ logical qubits into $n$ physical qubits. Throughout this paper, we will be dealing with two-dimensional codes and denote their basis codewords by $\ket{\bfc_0}$ and $\ket{\bfc_1}$.

The following definition originates with \cite{ruskai-polatsek}.

\begin{definition}\label{DefPICode}
 A \eczoo[permutation-invariant code]{permutation_invariant} is a pair of basis vectors of the form
    \begin{align} \label{eq:codewords}
        \ket{\bfc_0} = \sum_{j=0}^n\alpha_j\ket{D^n_j} \quad \text{and} \quad \ket{\bfc_1} = \sum_{j=0}^n\beta_j\ket{D^n_j},
    \end{align}
where $ \alpha_j,\beta_j\in\complex, j=0,1,\ldots, n$ and $ \sum_j\Bar{\alpha_j}\beta_j = 0.$
\end{definition}
\subsection{Kraus Operators and the Knill--Laflamme conditions}
A {\em quantum channel} $\cA$ is a linear operator acting on density matrices such that it admits the Kraus decomposition
\begin{align}\label{eq:Kraus}
    \mathcal{A}(\rho)=\sum_{\mathbf{A} \in \mathfrak{K}_{\mathcal{A}}} \mathbf{A} \rho \mathbf{A}^{\dagger},
\end{align}
where $ \sum_{\krA}\bfA \bfA^\dagger = {\bfI} $ and $ \krA $ is the Kraus set of the channel. 
Elements of this set are called \textit{Kraus operators}.
\begin{example}(Depolarizing Channel). Let $ \bfX, \bfY, \bfZ$ (bit flip, phase flip, combined flip)
be the set of Pauli errors that act on a state $\rho \in S(\complex^2)$ with probability $p/3$ each. 
Defining $ \krA = \{ ({\bfI}\sqrt{1-p}), (\bfX\sqrt{p/3}),$ $(\bfY\sqrt{p/3}), (\bfZ\sqrt{p/3}) \} $ to be the
Kraus set of the depolarizing channel, we can write its action on $\rho$ in the form
\begin{align*}
    \cA(\rho) = (1-p)\rho + \frac{p}{3}\left(\bfX\rho \bfX + \bfY\rho \bfY + \bfZ\rho \bfZ\right).
\end{align*}
\end{example}
Observe that $ \sum_{\bfA\in\krA}\bfA\bfA^\dagger={\bfI} $. 

The necessary and sufficient conditions for the quantum error correction were formulated by Knill and Laflamme \cite{knill}.
\begin{theorem}[Knill--Laflamme conditions]\label{factKnill} Let $\cC$ be a quantum code with an orthonormal basis
$ \ket{\bfc_0}, \ket{\bfc_1},\ldots, \ket{\bfc_{k-1}} $, and let $ \cA $ be a quantum channel with Kraus operators $ \bfA_i $. There exists a
quantum recovery operator $\cR$ such that $\cR(\cA(\rho))=\rho $ for every density matrix supported on $\cC$ if and only if 
for every $a,b$,
    \begin{align}
        &\bra{\bfc_i}\bfA_a^\dagger \bfA_b \ket{\bfc_j} = 0 \quad \text{for all $ i\neq j $ }, \label{eq:KL1}\\
        &\bra{\bfc_i}\bfA_a^\dagger \bfA_b \ket{\bfc_i} = g_{ab} \quad \text{for all $ i = 0,1,\ldots,k-1 $},
        \label{eq:KL2}
    \end{align}
for some constants $ g_{ab} \in \complex $.
\end{theorem}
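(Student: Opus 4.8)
The plan is to prove the two implications separately, with the Hermitian matrix $G=(g_{ab})$ serving as the common object. First I would record two preliminary observations. (i) Whenever \eqref{eq:KL2} holds, $G$ is positive semidefinite: for any vector $(v_a)$ of complex scalars and any basis codeword $\ket{\bfc_i}$ one has $\sum_{a,b}\bar v_a v_b\,g_{ab}=\bigl\|\bigl(\sum_b v_b\bfA_b\bigr)\ket{\bfc_i}\bigr\|^2\ge 0$, and Hermiticity follows by polarization. (ii) Replacing the Kraus set $\{\bfA_a\}$ by any unitarily mixed set $\{\bfF_\ell=\sum_a u_{\ell a}\bfA_a\}$ leaves the channel $\cA$ unchanged, and for a unitary $u$ that diagonalizes $G$ the conditions \eqref{eq:KL1}--\eqref{eq:KL2} turn into $\bra{\bfc_i}\bfF_\ell^\dagger\bfF_m\ket{\bfc_j}=d_\ell\,\delta_{\ell m}\delta_{ij}$, where the $d_\ell\ge 0$ are the eigenvalues of $G$; moreover $\sum_\ell d_\ell=\Tra G=\sum_a\bra{\bfc_i}\bfA_a^\dagger\bfA_a\ket{\bfc_i}=1$ by the trace-preservation identity $\sum_a\bfA_a^\dagger\bfA_a=\bfI$ (which is the form of the completeness relation I would use throughout).

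For the necessity direction, suppose $\cR$ with Kraus operators $\{\bfR_k\}$ satisfies $\cR(\cA(\rho))=\rho$ for every $\rho$ supported on $\cC$, and let $P$ be the projector onto $\cC$. By linearity of completely positive maps, the composed channel sends $P\rho P$ to $P\rho P$ for arbitrary $\rho$, i.e. $\sum_{k,a}(\bfR_k\bfA_a P)\,\rho\,(\bfR_k\bfA_a P)^\dagger=P\rho P$. The right-hand side is a completely positive map with the one-term Kraus representation $\{P\}$, so by uniqueness of Kraus decompositions up to an isometry there exist scalars $c_{ka}\in\complex$ with $\bfR_k\bfA_a P=c_{ka}P$. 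Then $P\bfA_a^\dagger\bfR_k^\dagger\bfR_k\bfA_b P=\bar c_{ka}c_{kb}\,P$; summing over $k$ and using $\sum_k\bfR_k^\dagger\bfR_k=\bfI$ gives $P\bfA_a^\dagger\bfA_b P=\bigl(\sum_k\bar c_{ka}c_{kb}\bigr)P$, and reading off the $(i,j)$ entry yields \eqref{eq:KL1}--\eqref{eq:KL2} with $g_{ab}=\sum_k\bar c_{ka}c_{kb}$.

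For the sufficiency direction I would pass to the diagonalized operators $\{\bfF_\ell\}$ of observation (ii), discarding those with $d_\ell=0$ (they annihilate $\cC$). The relation $\bra{\bfc_i}\bfF_\ell^\dagger\bfF_m\ket{\bfc_j}=d_\ell\delta_{\ell m}\delta_{ij}$ says that each $\bfF_\ell$ maps $\cC$ isometrically, up to the factor $\sqrt{d_\ell}$, onto a subspace $\cC_\ell:=\bfF_\ell\cC$, and that distinct $\cC_\ell$ are mutually orthogonal. Taking the polar decomposition $\bfF_\ell P=\sqrt{d_\ell}\,\bfU_\ell$ with $\bfU_\ell$ a partial isometry from $\cC$ onto $\cC_\ell$, I would set $\bfR_\ell:=\bfU_\ell^\dagger$. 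Since $\cC_m\perp\cC_\ell$ for $m\ne\ell$ and $\bfU_\ell^\dagger\bfU_\ell=P$, a direct computation gives $\sum_\ell\bfR_\ell\bigl(\sum_m\bfF_m(P\rho P)\bfF_m^\dagger\bigr)\bfR_\ell^\dagger=\bigl(\sum_\ell d_\ell\bigr)P\rho P=P\rho P$, so $\cR=\{\bfR_\ell\}$ inverts $\cA$ on $\cC$. It remains only to make $\cR$ trace-preserving, which one does by adjoining one more Kraus operator $\bfR_\infty$ with $\bfR_\infty^\dagger\bfR_\infty=\bfI-\sum_\ell\bfR_\ell^\dagger\bfR_\ell$ (positive semidefinite by construction) and with image orthogonal to $\cC$, so that it leaves the recovered state untouched.

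I expect the sufficiency direction to carry the weight: recognizing that the off-diagonal Knill--Laflamme conditions are exactly what forces the error subspaces $\cC_\ell$ to be orthogonal, extracting the partial isometries $\bfU_\ell$ cleanly, and carrying the normalization $\sum_\ell d_\ell=1$ through to the trace-preserving completion of $\cR$ are the points needing care. The necessity direction is short once uniqueness-up-to-isometry of Kraus representations is taken as known. As this is a classical result \cite{knill}, the main task is really to present it consistently with the notation fixed in Section~\ref{sec:preliminaries}.
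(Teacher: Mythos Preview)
The paper does not prove Theorem~\ref{factKnill}; it is stated as a classical result and attributed to~\cite{knill} without argument. Your proposal supplies a complete proof, and the argument you outline is essentially the standard one (diagonalize the Hermitian matrix $G=(g_{ab})$, observe that the resulting error operators map $\cC$ isometrically onto mutually orthogonal subspaces, and invert each isometry). A couple of minor points: in the trace-preserving completion, what you actually need is that $\bfR_\infty$ annihilates the channel output $\cA(P\rho P)$ for code states, which follows because that output is supported on $\bigoplus_{d_\ell>0}\cC_\ell$ and $\bfR_\infty^\dagger\bfR_\infty$ projects onto its complement; the phrase ``image orthogonal to $\cC$'' is not quite the relevant condition. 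Also note that the paper writes the completeness relation as $\sum_{\bfA}\bfA\bfA^\dagger=\bfI$, whereas you (correctly, for trace preservation) use $\sum_a\bfA_a^\dagger\bfA_a=\bfI$; you may want to flag or silently fix that discrepancy if you keep the proof.
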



\section{Quantum Deletion Channel}\label{sec:deletion channel}
In the classical coding theory, a $t$-deletion error is defined as a map from an $n$-bit string $x$ 
to the set of its subsequences of length $n-t$. 
Following \cite{ouyangEquivalence}, in this section, we define deletions and a deletion channel for quantum codes.   We begin with the following definition.
\begin{definition}\label{def3.1}
($ t $-Deletion channel) Let $t\in\{1,2,\dots,n\}$ and let $\rho\in S(\complex^{2\otimes n})$ be a quantum state. For a set $ I=\{ i_1,i_2\,\ldots,i_t \} \subset \{1,2,\ldots,n\} $ with $ i_1<i_2<\ldots<i_t$, define a map $D_I : S( \complex^{2\otimes n})  \to 
S(\complex^{2\otimes(n-t)}) $ as
\begin{align*}
    D_I^n(\rho):=\Tr_{i_1} \circ \cdots \circ \Tr_{i_t}(\rho).
\end{align*}
The action of $D_I$ deletes the qubits in locations contained in $I$, and is called a $t$-deletion error.
A $t$-{\em deletion channel} $\operatorname{Del}_t^n $ is a convex combination of all $t$-deletion errors, where $t$ is a fixed integer with $ |I|=t $, i.e., 
\begin{align}\label{eq:deletion-channel}
    \operatorname{Del}_t^n(\rho) = \sum_{I : |I|=t}p(I)D_I^n(\rho),
\end{align}
where $ p(I) $ is a probability distribution.
\end{definition}

Let $n\ge t\ge 1$ be integers. Define the set $ I = \{ i_1,i_2,\ldots, i_t \} 
\subset \{1,2,\ldots n\}  $ with $ i_1 < i_2 < \ldots < i_t $. Let $ 
\ket{\bfc}=\ket{c_1c_2\ldots c_t} $ be a pure quantum state with
$ \bfc \in \{ 0,1 \}^t $. Define the operator $ \bfA_{I,\bra{c}}^n = A_1 \otimes A_2 \otimes 
\ldots \otimes A_{n} $ \cite{ouyangEquivalence}, where
\begin{align*}
    A_j = \begin{cases}
    \bra{c_i} \quad j=e_i\in E, \\
    {\bfI}_2 \quad\quad j\notin E.
    \end{cases}
\end{align*}
Here $ {\bfI}_2 $ is the $ 2\times 2 $ identity matrix.
\begin{lemma}[\cite{ouyangEquivalence}, Lemma 3.1] \label{lemma3.1}
 Let $ t\geq 1 $, $ n \geq t $ be integers. Define the set $ I = \{i_1,i_2,\ldots, i_t \} \subset \{1,2,\ldots n\}  $ with $ i_1 < i_2 < \ldots < i_t $. Let $ \ket{\bfc}=\ket{c_1c_2\ldots c_t} $ be a pure quantum state with $ (c_1c_2\ldots c_t) \in \{ 0,1 \}^t $. Then,
\begin{align*}
   \bfA_{I,\bra{\bfc}}^n = \bfA_{i_1,\bra{c_1}}^{n-t+1}\bfA_{i_2,\bra{c_2}}^{n-t+2}\ldots\bfA_{i_{t-1},\bra{c_{t-1}}}^{n-1}\bfA_{i_t,\bra{c_t}}^n .
\end{align*}
\end{lemma}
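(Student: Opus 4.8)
The plan is to verify that the two operators in the identity agree on every computational basis vector $\ket{\bfx}=\ket{x_1\cdots x_n}$, $\bfx\in\{0,1\}^n$; since these span $\complex^{2\otimes n}$ and everything in sight is linear, that is enough. On the left, $\bfA_{E,\bra{\bfc}}^n=A_1\otimes\cdots\otimes A_n$ acts factorwise, so using $A_{e_i}\ket{x_{e_i}}=\braket{c_i}{x_{e_i}}$ and $A_j\ket{x_j}=\ket{x_j}$ for $j\notin E$ one gets immediately
\[
\bfA_{E,\bra{\bfc}}^n\ket{x_1\cdots x_n}=\Big(\prod_{i=1}^t\braket{c_i}{x_{e_i}}\Big)\ket{x_j:j\notin E},
\]
where $\braket{c_i}{x_{e_i}}=\delta_{c_i,x_{e_i}}\in\{0,1\}$ and $\ket{x_j:j\notin E}\in\complex^{2\otimes(n-t)}$ denotes the basis vector on the $n-t$ surviving coordinates, taken in increasing order of index.

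For the right-hand product I would argue by induction on $t$, applying the factors from right to left. The case $t=1$ is just the definition of $\bfA_{e_1,\bra{c_1}}^n$. For $t\ge2$, the rightmost factor $\bfA_{e_t,\bra{c_t}}^n$ sends $\ket{x_1\cdots x_n}$ to $\braket{c_t}{x_{e_t}}\,\ket{x_1\cdots x_{e_t-1}x_{e_t+1}\cdots x_n}\in\complex^{2\otimes(n-1)}$. Because $e_1<\cdots<e_{t-1}<e_t$, in this shorter vector the coordinates originally at positions $e_1,\dots,e_{t-1}$ are unmoved (only coordinates originally indexed above $e_t$ shift down by one), so $\{e_1,\dots,e_{t-1}\}$ is again an increasing index set, now inside $\{1,\dots,n-1\}$. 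The induction hypothesis, applied with parameters $t-1$ and $n-1$, identifies the remaining product $\bfA_{e_1,\bra{c_1}}^{\,n-t+1}\cdots\bfA_{e_{t-1},\bra{c_{t-1}}}^{\,n-1}$ with $\bfA_{\{e_1,\dots,e_{t-1}\},\bra{c_1\cdots c_{t-1}}}^{\,n-1}$, whose action on the vector above is, by the computation of the first paragraph, multiplication by $\prod_{i=1}^{t-1}\braket{c_i}{x_{e_i}}$ followed by deletion of the coordinates indexed by $e_1,\dots,e_{t-1}$. Combining with the factor $\braket{c_t}{x_{e_t}}$ already produced, the right-hand product sends $\ket{\bfx}$ to exactly the displayed expression, which closes the induction.

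The only genuinely delicate point --- and hence the one to spell out carefully --- is the index bookkeeping across the successive deletions: one must check that removing the \emph{largest} marked index $e_t$ first leaves $e_1,\dots,e_{t-1}$ in place, so that each later operator $\bfA_{e_i,\bra{c_i}}$ still acts on the coordinate $x_{e_i}$ rather than on a shifted one. This is precisely why the product in the statement is ordered with the superscript growing to $n$ on the right, i.e.\ delete the rightmost marked qubit first. Everything else is linearity together with the elementary action $\bra{c}\ket{x}=\delta_{c,x}$ of a bra on a computational basis vector.
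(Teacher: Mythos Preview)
Your argument is correct. The paper does not actually prove this lemma; it merely cites \cite{ouyangEquivalence} and remarks that it ``can be easily proved by direct calculations.'' Your induction on $t$, checking the two operators agree on computational basis vectors and using the key observation that deleting the largest index $e_t$ first leaves the positions of $e_1,\dots,e_{t-1}$ unchanged, is precisely the direct calculation the paper has in mind.
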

Lemma \ref{lemma3.1} can be easily proved by direct calculations. 
We are now in a position to describe the Kraus decomposition of the deletion channel. 
First, we cite another auxiliary result.
\begin{lemma}[\cite{ouyangEquivalence}, Lemma 4.2]\label{lemma3.2}
Let $ \rho \in S\left( \complex^{2\otimes n} \right) $ be a quantum state. The output state after deleting the qubits on the positions labeled by the set $ I\subset \{1,2,\ldots,n\} $ can be expressed as
    \begin{align*}
        D_I^n(\rho) = \sum_{\bfc \in \{ 0,1 \}^t} \bfA_{I,\bra{\bfc}}^n \rho \bfA_{I,\bra{\bfc}}^{n^{\dagger}}.
    \end{align*}
\end{lemma}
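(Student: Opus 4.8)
The statement to prove is Lemma~\ref{lemma3.2}, expressing the $t$-deletion error $D_E^n(\rho)$ as a sum $\sum_{\bfc\in\{0,1\}^t}\bfA_{E,\bra{\bfc}}^n\rho\bfA_{E,\bra{\bfc}}^{n\dagger}$.

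\medskip

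\emph{Plan.} The natural approach is induction on $t=|E|$, peeling off one deleted position at a time and using Lemma~\ref{lemma3.1} to recombine the single-site deletion operators into the multi-site operator $\bfA_{E,\bra\bfc}^n$. First I would establish the base case $t=1$: for a single position $e$, the definition $D_{\{e\}}^n(\rho)=\Tr_e(\rho)$ needs to be rewritten in operator form. Writing $\rho=\sum_{\bfx,\bfy}a_{\bfx,\bfy}\ket\bfx\bra\bfy$ and using Definition~\ref{DefPartialTrace}, one has $\Tr_e(\rho)=\sum_{\bfx,\bfy}a_{\bfx,\bfy}\,\Tr(\ket{x_e}\bra{y_e})\,\ket{\bfx'}\bra{\bfy'}$. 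Since $\Tr(\ket{x_e}\bra{y_e})=\braket{y_e}{x_e}=\sum_{c\in\{0,1\}}\braket{c}{x_e}\braket{y_e}{c}$, and since $\bfA_{\{e\},\bra c}^n=\bfI\otimes\cdots\otimes\bra c\otimes\cdots\otimes\bfI$ (with $\bra c$ in slot $e$) acts as $\ket\bfx\mapsto\braket{c}{x_e}\ket{\bfx'}$, we get exactly $\Tr_e(\rho)=\sum_{c\in\{0,1\}}\bfA_{\{e\},\bra c}^n\,\rho\,\bfA_{\{e\},\bra c}^{n\dagger}$. This is the $t=1$ case.

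\medskip

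\emph{Inductive step.} Assume the claim for $t-1$. Given $E=\{e_1<\dots<e_t\}$, write $D_E^n(\rho)=\Tr_{e_1}\circ\cdots\circ\Tr_{e_t}(\rho)$. Peel off the outermost deletion: set $E'=\{e_1,\dots,e_{t-1}\}$ acting on the $(n-1)$-qubit space obtained after $\Tr_{e_t}$; note the indices $e_1,\dots,e_{t-1}$ are all $<e_t$, so they label the same positions in the shortened register. By the $t=1$ case, $\Tr_{e_t}(\rho)=\sum_{c_t}\bfA_{e_t,\bra{c_t}}^n\,\rho\,\bfA_{e_t,\bra{c_t}}^{n\dagger}$, and then by the inductive hypothesis applied to each summand (each is a valid density-matrix-like operator on $\complex^{2\otimes(n-1)}$, up to normalization — linearity is what actually matters here, not positivity or trace), $D_{E'}^{n-1}$ of it equals $\sum_{c_1,\dots,c_{t-1}}\bfA_{E',\bra{\bfc'}}^{n-1}(\,\cdot\,)\bfA_{E',\bra{\bfc'}}^{n-1\dagger}$ where $\bfc'=(c_1,\dots,c_{t-1})$. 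Combining and using $\bfA_{E',\bra{\bfc'}}^{n-1}\bfA_{e_t,\bra{c_t}}^n=\bfA_{E,\bra\bfc}^n$ from Lemma~\ref{lemma3.1} (this is precisely the statement of that lemma, reading the RHS product there as the composition of the first $t-1$ factors times the last), we obtain $D_E^n(\rho)=\sum_{\bfc\in\{0,1\}^t}\bfA_{E,\bra\bfc}^n\,\rho\,\bfA_{E,\bra\bfc}^{n\dagger}$, as desired.

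\medskip

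\emph{Main obstacle.} The step I expect to require the most care is bookkeeping of the position labels under successive partial traces: after deleting position $e_t$ (the largest), the remaining positions $e_1,\dots,e_{t-1}$ keep their labels, so no reindexing is needed — but one must verify this and confirm that Lemma~\ref{lemma3.1}'s factorization is stated with exactly this convention (it is: the superscripts $n-t+1,\dots,n-1,n$ track the register length at each stage). A secondary subtlety is that the inductive hypothesis and Lemma~\ref{lemma3.2} are nominally stated for genuine quantum states $\rho\in S(\complex^{2\otimes n})$, but in the induction we feed in the (unnormalized, possibly non-positive) intermediate operators; this is harmless since every map involved is linear and the identity is really one between linear maps on operator spaces, so it extends from $S(\complex^{2\otimes n})$ to all of $\mathrm{End}(\complex^{2\otimes n})$ by linearity. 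I would note this explicitly. Everything else is the routine index-pushing that the authors presumably intend.
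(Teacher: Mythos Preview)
Your proof is correct. The paper does not actually prove Lemma~\ref{lemma3.2}; it is quoted from \cite{ouyangEquivalence} (their Lemma~4.3) without argument, so there is no in-paper proof to compare against. Your induction on $t$, using the $t=1$ computation from Definition~\ref{DefPartialTrace} and then Lemma~\ref{lemma3.1} to recombine the single-site factors, is the standard and expected route, and your remarks on label bookkeeping and on extending by linearity are exactly the points that need to be checked.
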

Lemma ~\ref{lemma3.2} together with Definition \ref{def3.1} implies that the
Kraus decomposition \eqref{eq:Kraus} of the quantum $t$-deletion channel is given by
\begin{align}\label{eq:A}
    \operatorname{Del}_t^n(\rho)=\sum_{I, \bfc} p(I) \bfA_{I,\langle\bfc|}^{n} \rho \bfA_{I,\langle\bfc|}^{n}{ }^{\dagger},
\end{align}
where $ p(E) $ is a probability distribution.

It will be convenient to distinguish between two types of deletions.
\begin{definition} \label{DefDelErrorOperator}
    (Deletion operators) A {\em $0$-type deletion} applied to the $i$-th qubit in an $n$-qubit system 
is the operator $\bfF^{(n)}_i:=\bfA_{i,\bra{0}}^n $. Likewise,  a {\em $1$-type deletion} is the operator $\bfG^{(n)}_i:= \bfA_{i,\bra{1}}^n.$ In other words, given $ \bfx\in\{0,1\}^n $
the action of these operators on the state $\ket \bfx$ is 
    \begin{align*}
        \bfF^{(n)}_i\ket{\bfx} = \braket{0}{x_i}\ket{\bfx_{\sim i}} \quad \text{and} \quad \bfG^{(n)}_i\ket{\bfx}=\braket{1}{x_i}\ket{\bfx_{\sim i}},
    \end{align*}
    where $ \ket{\bfx_{\sim i}} = \ket{x_1\ldots x_{i-1}x_{i+1}\ldots x_n} $.
\end{definition}
At first glance, it is not clear how the channel representation \eqref{eq:A} fits with the definition in \eqref{eq:deletion-channel}. In the next example we illustrate their equivalence for the case of 2-qubit systems.
\new{\begin{example}
    (Single deletion channel) For a $2$-qubit system,
Lemma \ref{lemma3.2} together with Definitions \ref{def3.1} and \ref{DefDelErrorOperator}
imply the following form of the single-deletion channel:
    \begin{align}\label{eq:KrausSingleDel}
        \operatorname{Del}_1^2(\rho) = p_1\left( \bfF_1^{(2)}\rho\bfF_1^{(2)\dagger}+\bfG_1^{(2)}\rho\bfG_1^{(2)\dagger}\right) +  p_2\left( \bfF_2^{(2)}\rho\bfF_2^{(2)\dagger}+\bfG_2^{(2)}\rho\bfG_2^{(2)\dagger}\right),
    \end{align}
    where $ p_1+p_2=1 $. Our point is that each of the two brackets represents a partial trace, see \eqref{eq:ptrace} below.
        On account of \eqref{eq:A},  \eqref{eq:Kraus}, and Definition \ref{DefDelErrorOperator}, Kraus operators for the deletion channel have the following explicit form:
    \begin{align}
        \bfA_1=\sqrt{p_1}\bfF_1^{(2)} =\sqrt{p_1}\begin{bmatrix}
            1&0&0&0\\
            0&1&0&0
        \end{bmatrix}, \quad &\bfA_2=\sqrt{p_2}\bfF_2^{(2)} =\sqrt{p_2}\begin{bmatrix}
            1&0&0&0\\
            0&0&1&0
        \end{bmatrix}, \nonumber\\
        \bfA_3=\sqrt{p_1}\bfG_1^{(2)} =\sqrt{p_1}\begin{bmatrix}
            0&0&1&0\\
            0&0&0&1
        \end{bmatrix}, \quad &\bfA_4=\sqrt{p_2}\bfG_2^{(2)} =\sqrt{p_2}\begin{bmatrix}
            0&1&0&0\\
            0&0&0&1 
        \end{bmatrix}.\label{eq:KrausMatrixSingle}
    \end{align}
    It is easy to verify
    \begin{align*}
        \bfA_1^\dagger\bfA_1+\bfA_2^\dagger\bfA_2+\bfA_3^\dagger\bfA_3+\bfA_4^\dagger\bfA_4 = \bfI_4,
    \end{align*}
where $ \bfI_4 $ is the identity matrix of order 4. Consider the action of the channel on the 
$\ket{\psi}=\frac{1}{\sqrt{2}}( \ket{00} + \ket{01} ) $, whose density matrix is
    \begin{align}\label{eq:DensityMatSingle}
        \rho = \ket{\psi}\bra{\psi} = \begin{bmatrix}
            1/2&1/2&0&0\\
            1/2&1/2&0&0\\
            0&0&0&0\\
            0&0&0&0
        \end{bmatrix}.
    \end{align}
    Inserting \eqref{eq:KrausMatrixSingle} and \eqref{eq:DensityMatSingle} into \eqref{eq:KrausSingleDel}, we obtain
    \begin{align}\label{eq:ptrace}
        \operatorname{Del}_1^2(\rho) &= p_1\begin{bmatrix}
            1/2&1/2\\
            1/2&1/2
        \end{bmatrix} + p_2\begin{bmatrix}
            1&0\\\
            0&0
        \end{bmatrix}=p_1\Tr_1(\rho)+p_2\Tr_2(\rho),
    \end{align}
    meaning that the deletion operation removes on the first qubit with probability $ p_1 $ and the second qubit with probability $ p_2 $.
\end{example}
}

In the next lemma, we present explicit action of the deletion operations on Dicke states.
\new{\begin{lemma}\label{Lemma:DeletionOperatorAction}
    Let $\ket{D_w^n}$ be a Dicke state. Then for all $ i\in\{1,2,\ldots,n\}$,
    \begin{align*}
        \bfF^{(n)}_i\ket{D_w^n} = \sqrt{\frac{\binom{n-1}{w}}{\binom{n}{w}}}\ket{D_w^{n-1}} ,\\
        \bfG^{(n)}_i\ket{D_w^n} = \sqrt{\frac{\binom{n-1}{w-1}}{\binom{n}{w}}}\ket{D_{w-1}^{n-1}} .
    \end{align*}
\end{lemma}
\begin{proof}
    For any $ i \in \{1,2,\ldots, n\} $, acting by $\bfF^{(n)}_i$ on the state $\ket{H^n_w}=\sum_{\bfx: |\bfx|=w}\ket{\bfx}$ annihilates the terms $\ket \bfx$ with $x_i=1$ and deletes one zero from the states $\ket\bfx$ with $x_i=0$. Thus, the only retained states are those with $x_i=0$, and 
  $$
  \bfF^{(n)}_i\ket{H^n_w}=\ket{H^{n-1}_w}.
  $$
Likewise,
  $$
  \bfG^{(n)}_i\ket{H^n_w}=\ket{H^{n-1}_{w-1}}. 
  $$    
\end{proof}}

By the nature of permutation-invariant states, the statements we make below in the paper do not depend on the location of the deleted qubit, and we write $0$-type and $1$-type deletions simply as $\bfF,\bfG$, omitting the subscripts and superscripts from the notation. 

Let us write out explicitly the action of powers of the operators $\bf F$ and $\bfG$ on Dicke states.
\new{\begin{lemma}\label{LemmaOperatorAction}
Let $\ket{D_w^n}$ be a Dicke state and let $a\in \{1,\dots,n\}.$ Then for any $ k \in \{1,2,\ldots, a\} $ and $i_k\in\{1,2,\ldots,n-k+1\}$
    \begin{align*}
        &(\bfF)^a\ket{D^n_w} = \bfF^{(n-a+1)}_{i_a}\ldots\bfF^{(n-1)}_{i_2}\bfF^{(n)}_{i_1}\ket{D^n_w} = \sqrt{\frac{\binom{n-a}{w}}{\binom{n}{w}}}\;\ket{D^{n-a}_w},\\
        &(\bfG)^a\ket{D^n_w} = \bfG^{(n-a+1)}_{i_a}\ldots\bfG^{(n-1)}_{i_2}\bfG^{(n)}_{i_1}\ket{D^n_w} = \sqrt{\frac{\binom{n-a}{w-a}}{\binom{n}{w}}}\;\ket{D^{n-a}_{w-a}}.
    \end{align*}
\remove{where by definition $\binom{n}{r}=0$ if $n<r$ or $r<0$.}
\end{lemma}
\begin{proof} 
    By a direct calculation using Lemma \ref{Lemma:DeletionOperatorAction},
    \begin{align*}
        (\bfF)^a\ket{D^n_w}=
\sqrt{  \frac{\binom{n-a}{w}\dots\binom{n-2}{w}\binom{n-1} w}
     {\binom{n-a+1}{w}\dots \binom{n-1}w\binom nw}}\ket{D^{n-a}_w}=\sqrt{\frac{\binom{n-a}{w}}{\binom{n}{w}}}\ket{D^{n-a}_w}.    
    \end{align*}
 Similarly,
    \begin{align*}
        (\bfG)^a\ket{D^n_w}&=\sqrt{ \frac{\binom{n-a}{w-a}\dots\binom{n-2}{w-2}\binom{n-1}{w-1}}
        {\binom{n-a+1}{w-a}\dots\binom{n-1}{w-1}\binom nw}}\ket{D^{n-a}_{w-a}}
=\sqrt{\frac{\binom{n-a}{w-a}}{\binom{n}{w}}}\ket{D^{n-a}_{w-a}}. \qedhere
    \end{align*}
\end{proof}}

\new{The Kraus operators for the deletion channel will be written as combinations of powers of $F$ and $G$. It helps that the actions of these operators on any permutation-invariant state
commute.
\begin{lemma}\label{Lemma:Commutativity}
    Let $\ket{D_w^n}$ be a Dicke state. For any $ i_1,j_1\in \{1,2,\ldots,n\} $ and $ i_2,j_2\in\{1,2,\ldots,n-1\} $,
    \begin{align*}
        \bfG^{(n-1)}_{i_2}\bfF^{(n)}_{i_1}\ket{D_w^n}=\bfF^{(n-1)}_{j_2}\bfG^{(n)}_{j_1}\ket{D_w^n}.
    \end{align*}
\end{lemma}
\begin{proof}
    By a direct calculation using Lemma \ref{Lemma:DeletionOperatorAction},
    \begin{align*}
         \bfG^{(n-1)}_{i_2}\bfF^{(n)}_{i_1}\ket{D_w^n} =  \sqrt{\frac{\binom{n-1}{w}}{\binom{n}{w}}}\bfG^{(n-1)}_{i_2}\ket{D_w^{n-1}}=\sqrt{\frac{\binom{n-2}{w-1}}{\binom{n}{w}}}\ket{D_{w-1}^{n-2}}.
    \end{align*}
    Similarly,
     \begin{align*}
         \bfF^{(n-1)}_{j_2}\bfG^{(n)}_{j_1}\ket{D_w^n}& =  \sqrt{\frac{\binom{n-1}{w-1}}{\binom{n}{w}}}\bfF^{(n-1)}_{j_2}\ket{D_{w-1}^{n-1}}=\sqrt{\frac{\binom{n-2}{w-1}}{\binom{n}{w}}}\ket{D_{w-1}^{n-2}}. \qedhere
   \end{align*}
\end{proof}
}

\new{Lemmas \ref{lemma3.1}, \ref{lemma3.2}, \ref{LemmaOperatorAction}, and \ref{Lemma:Commutativity}} imply that
the Kraus set of the $ t$-deletion channel for a permutation-invariant code has the form
$\{ \bfG^a\bfF^{t-a} : a\in\{ 0,1,\ldots, t \} \}$. Throughout the paper, we will write this set as
\begin{align} \label{eq:Kraus deletion}
    \varepsilon_t = \{ \E_0, \E_1 , \ldots, \E_t \},
\end{align}
where $ \E_a = \bfG^a\bfF^{t-a} $. The following lemma describes the action of the error operator $ \E_a\in\varepsilon_t $ on permutation-invariant states. 
\begin{lemma}\label{LemmaKrausOperators}
Let $ \E_a $ be an element of the Kraus set of the $ t$-deletion channel $\varepsilon_t$ for a permutation-invariant code. Then, for any permutation-invariant state $\ket{D^n_w}$, 
    \begin{align*}
        \E_a\ket{D^n_w} = \sqrt{\frac{\binom{n-t}{w-a}}{\binom{n}{w}}}\ket{D^{n-t}_{w-a}}.
    \end{align*}
\end{lemma}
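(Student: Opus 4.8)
The plan is to combine the two formulas from Lemma~\ref{LemmaOperatorAction} in the natural way, using the fact that $\E_a = \bfG^a\bfF^{t-a}$ and that the action of these operators on Dicke states is independent of the position index $i$. First I would apply $\bfF^{t-a}$ to $\ket{D^n_w}$: by the first identity of Lemma~\ref{LemmaOperatorAction} with $a$ replaced by $t-a$, this yields $\sqrt{\binom{n-t+a}{w}/\binom{n}{w}}\;\ket{D^{n-t+a}_w}$. Then I would apply $\bfG^a$ to the resulting Dicke state $\ket{D^{n-t+a}_w}$: by the second identity of Lemma~\ref{LemmaOperatorAction}, with $n$ replaced by $n-t+a$ and the exponent equal to $a$, this gives $\sqrt{\binom{n-t}{w-a}/\binom{n-t+a}{w}}\;\ket{D^{n-t}_{w-a}}$.

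Multiplying the two scalar prefactors, the intermediate binomial coefficient $\binom{n-t+a}{w}$ cancels, leaving exactly
\[
\E_a\ket{D^n_w} \;=\; \sqrt{\frac{\binom{n-t+a}{w}}{\binom{n}{w}}}\cdot\sqrt{\frac{\binom{n-t}{w-a}}{\binom{n-t+a}{w}}}\;\ket{D^{n-t}_{w-a}}
\;=\; \sqrt{\frac{\binom{n-t}{w-a}}{\binom{n}{w}}}\;\ket{D^{n-t}_{w-a}},
\]
which is the claimed identity. The only subtlety is bookkeeping with the degenerate cases where some binomial coefficients vanish: if $w-a<0$ or $w-a>n-t$ the right-hand side is zero, and one should check the left-hand side vanishes too. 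This is handled by the convention $\binom{n}{k}=0$ for $k<0$ or $k>n$ already adopted in Lemma~\ref{LemmaOperatorAction}, together with the observation that once an intermediate Dicke state becomes the zero vector (or once $\bfG$ is applied to a state with no ones available, resp.\ $\bfF$ to a state with no zeros), all subsequent operators return zero; so the chained application is consistent with the stated formula in every case.

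I do not expect any genuine obstacle here: the lemma is essentially an exercise in composing the two cases of Lemma~\ref{LemmaOperatorAction}, and the position-independence noted in the paragraph preceding the statement (``we write $0$-type and $1$-type deletions simply as $\bfF,\bfG$'') means we need not track which qubits are deleted at each stage. The mildest care required is to make sure the exponents are assigned correctly — $\bfF$ is applied $t-a$ times and $\bfG$ is applied $a$ times — and to apply them in the correct order ($\bfF$'s first, since $\E_a=\bfG^a\bfF^{t-a}$ acts on the ket from the right). A one-line remark justifying the cancellation of $\binom{n-t+a}{w}$ and citing the vanishing convention completes the proof.
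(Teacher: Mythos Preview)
Your proof is correct and follows essentially the same approach as the paper: apply $\bfF^{t-a}$ first using Lemma~\ref{LemmaOperatorAction}, then $\bfG^a$ to the resulting Dicke state, and cancel the intermediate factor $\binom{n-t+a}{w}$. The paper's version is slightly terser (it omits the discussion of degenerate cases), but the computation is identical.
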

\begin{proof}
    By Lemma \ref{LemmaOperatorAction},
    \begin{align*}
        \E_a\ket{D^n_w} = \bfG^a\bfF^{t-a}\ket{D^n_w} &= \sqrt{\frac{\binom{n-t+a}{w}}{\binom{n}{w}}}\bfG^a\ket{D^{n-t+a}_w}\\
        &=\sqrt{\frac{\binom{n-t+a}{w}}{\binom{n}{w}}}\sqrt{\frac{\binom{n-t+a-a}{w-a}}{\binom{n-t+a}{w}}}\ket{D^{n-t+a-a}_{w-a}}\\ 
        &=  \sqrt{\frac{\binom{n-t}{w-a}}{\binom{n}{w}}}\ket{D^{n-t}_{w-a}}.  \qedhere
    \end{align*}
\end{proof}

We make an important observation: Upon applying a deletion error to a permutation-invariant state,
we obtain a permutation-invariant state (on fewer qubits). Clearly, this does not hold for Pauli errors: for instance,
  $$ 
  \bfX_1\ket{D^3_1} = 
\frac{\ket{101}+\ket{110}+\ket{000}}{\sqrt{3}},
  $$
which is not permutation-invariant. Moreover, generally $X_i\ket{D_w^n}\ne X_j\ket{D_w^n}$ if $i\ne j$,
so the Kraus set for Pauli errors is much larger than for deletions, complicating the
analysis. A workaround proposed in \cite{ruskai-polatsek} suggests averaging
Pauli errors, but general constructions look difficult. At the same time, invariance
with respect to permutations plays the defining role for deletions, and it is also the
main property supporting the code construction we propose. We note that given a
deletion-correcting permutation-invariant code, we can argue about its distance and
make claims about its properties with respect to correcting Pauli errors. 
Indeed, the following proposition is true.
 \begin{proposition}\label{factErasureDeletionEquivalence}
A permutation-invariant code that corrects $2t$ deletions, also corrects all combinations of
$t$ Pauli errors.
    \end{proposition}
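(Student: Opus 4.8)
The plan is to reduce statement about Pauli error correction to a statement about deletion (equivalently erasure) correction via the Knill--Laflamme conditions. First I would recall that a $t$-Pauli-error channel on $n$ qubits has Kraus operators that are $n$-fold tensor products of single-qubit Pauli operators in $\{\bfI,\bfX,\bfY,\bfZ\}$ with at most $t$ non-identity factors, and that it suffices to verify \eqref{eq:KL1} and \eqref{eq:KL2} for such operators. Writing $\bfY=i\bfX\bfZ$ and expanding $\bfZ=\bfI-2\ket{1}\bra{1}$, any product $\bfA_a^\dagger\bfA_b$ of two weight-$\le t$ Pauli operators is, up to a scalar, a tensor product supported on at most $2t$ coordinates whose factors are drawn from $\{\bfI,\ket{0}\bra{1},\ket{1}\bra{0},\ket{0}\bra{0},\ket{1}\bra{1}\}$. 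The key point is that each such operator, restricted to the at most $2t$ coordinates it acts on nontrivially, is a linear combination of products of the ``insert/delete'' rank-one operators, and in particular the quadratic form $\bra{\bfc_i}\bfA_a^\dagger\bfA_b\ket{\bfc_j}$ becomes a linear combination of inner products $\bra{\bfc_i}(\text{deletion-type operators})^\dagger(\text{deletion-type operators})\ket{\bfc_j}$ acting on $2t$ coordinates.

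The second step is to make precise that ``correcting $2t$ deletions'' gives us exactly the data we need. By the discussion preceding the proposition (Lemmas~\ref{lemma3.1}--\ref{LemmaKrausOperators}), for a permutation-invariant code the Kraus set of the $2t$-deletion channel is $\varepsilon_{2t}=\{\bfG^i\bfF^{2t-i}:0\le i\le 2t\}$, and the Knill--Laflamme conditions for this channel state that $\bra{\bfc_\mu}\E_a^\dagger\E_b\ket{\bfc_\nu}$ vanishes for $\mu\ne\nu$ and equals a constant independent of $\mu$ for $\mu=\nu$. Here $\E_a^\dagger\E_b$ is, up to coordinate permutation, an operator that on $2t$ fixed coordinates reads off some pattern of $\ket{0}$'s and $\ket{1}$'s on the ``delete'' side and inserts some pattern on the ``insert'' side, while acting as the identity elsewhere. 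Since the code is permutation-invariant, the value of $\bra{\bfc_\mu}M\ket{\bfc_\nu}$ for such an $M$ depends only on the multiset of ``delete'' and ``insert'' labels, not on which $2t$ coordinates are used nor on their order; this is where permutation-invariance is essential, and it lets me identify the matrix elements arising from Pauli errors with (linear combinations of) the matrix elements controlled by the deletion KL conditions.

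Concretely, I would argue: fix weight-$\le t$ Pauli operators $\bfA_a,\bfA_b$; their product is supported on a set $S$ with $|S|\le 2t$, and $\bfA_a^\dagger\bfA_b|_S$ is a linear combination (with fixed scalar coefficients) of operators of the form $\prod_{j\in S_1}\ket{u_j}\bra{v_j}$ where the map $j\mapsto(u_j,v_j)$ is one of finitely many patterns; embedding this into $n$ coordinates and padding $S$ to a set $S'$ of size exactly $2t$ by adjoining identity factors (harmless since $\bfI=\ket{0}\bra{0}+\ket{1}\bra{1}$), each such operator equals $\pi\cdot(\E_a^\dagger\E_b)\cdot\pi^{-1}$ for a suitable $a,b$ and permutation $\pi$ of the coordinates. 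Because $\ket{\bfc_\mu}$ and $\ket{\bfc_\nu}$ are fixed by $\pi$, we get $\bra{\bfc_\mu}\pi(\E_a^\dagger\E_b)\pi^{-1}\ket{\bfc_\nu}=\bra{\bfc_\mu}\E_a^\dagger\E_b\ket{\bfc_\nu}$, which by hypothesis is $0$ for $\mu\ne\nu$ and a $\mu$-independent constant for $\mu=\nu$. Summing the fixed scalar combination then yields \eqref{eq:KL1} and \eqref{eq:KL2} for the Pauli channel, so $\cC$ corrects $t$ Pauli errors by Theorem~\ref{factKnill}.

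The main obstacle I anticipate is the bookkeeping in the expansion step: one must carefully write $\bfX$, $\bfY$, $\bfZ$ in terms of the rank-one operators $\ket{u}\bra{v}$, track that a product of two weight-$\le t$ operators has a support of size at most $2t$ (coordinates where only one of $\bfA_a,\bfA_b$ is nontrivial still produce a single Pauli, hence a genuine two-term combination of rank-one operators), and verify that after padding with identities every resulting ``monomial'' operator is genuinely of the form $\E_a^\dagger\E_b$ for some $a,b\in\{0,\dots,2t\}$ up to a coordinate permutation rather than some more general insert/delete pattern — equivalently, that the number of $\bra{1}$'s among the delete factors plus the number of $\ket{1}$'s among the insert factors matches the index structure of $\E_a^\dagger\E_b=(\bfF^{2t-a})^\dagger(\bfG^a)^\dagger\bfG^b\bfF^{2t-b}$. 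Once this normal-form lemma is in hand, the permutation-invariance reduction and the appeal to the deletion KL conditions are routine.
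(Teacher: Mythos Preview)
Your approach is sound in outline but takes a much more hands-on route than the paper. The paper's proof is three lines: for a permutation-invariant code, deleting any $2t$ qubits is the same as deleting the first $2t$, so $2t$-deletion correction coincides with $2t$-erasure correction at known locations; a code correcting $2t$ erasures has quantum distance at least $2t+1$ (citing standard theory); and distance $2t+1$ implies $t$-Pauli-error correction. No Kraus-operator bookkeeping is done---the implication is packaged entirely in the notion of code distance.

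Your direct reduction of the Pauli KL conditions to the deletion KL conditions can be made to work, but the ``normal-form lemma'' you isolate at the end is false as stated. With the paper's convention $\E_a=\bfG^a\bfF^{2t-a}$ deleting a fixed block, one has $\E_a^\dagger\E_b=\ket{0^{2t-a}1^a}\bra{0^{2t-b}1^b}\otimes\bfI$ on that block, whose single-qubit tensor factors never contain both $\ket{0}\bra{1}$ and $\ket{1}\bra{0}$; conjugation by a permutation only shuffles the factors. For $2t=2$ the monomial $\ket{0}\bra{1}\otimes\ket{1}\bra{0}$ (which appears when you expand $\bfX\otimes\bfX$) is therefore \emph{not} of the form $\pi\,\E_a^\dagger\E_b\,\pi^{-1}$ for any $a,b,\pi$. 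The fix is simple: write the monomial instead as $\bfA_{S',\bra{u}}^{n\,\dagger}\bfA_{S',\bra{v}}^n$ with \emph{general} deletion Kraus operators, and then use that on permutation-invariant states $\bfA_{S',\bra{c}}^n\ket{\bfc_\mu}$ depends only on $|c|$ (precisely the reduction leading to \eqref{eq:Kraus deletion}); hence the matrix element equals $\bra{\bfc_\mu}\E_{|u|}^\dagger\E_{|v|}\ket{\bfc_\nu}$ even though the operators themselves differ. With this correction your argument goes through, though it remains considerably longer than the paper's conceptual proof via distance.
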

\begin{proof} 
For a permutation-invariant state, deleting any $2t$ qubits is equivalent to deleting the first $2t$ qubits
in an $n$-qubit state, so deletions are equivalent to erasures. Of course, a code that
corrects $2t$ erasures has the quantum distance of at least 2$t+1$ (see, e.g., \cite{rains}). Thus,
correcting deletions is tied to the code distance, and distance $d=2t+1$ is a sufficient condition for correcting $t$ qubit errors.
\end{proof}


\section{Error correction conditions for permutation-invariant codes}\label{sec:KL}
\remove{
General conditions for a code to correct $t$ deletions were phrased in \cite{hagiwaraDeletion}, but using
them to construct codes is not immediate. In this section we focus on permutation-invariant codes and derive simple
conditions for such a code to correct deletions (by the above, this will also imply that they correct qubit errors).
}
Sufficient conditions for any code to correct $t$ deletions were previously derived in \cite{hagiwaraDeletion}. 
In this section, we focus on permutation-invariant codes and derive the necessary and sufficient conditions for such a code to correct deletions by showing the
equivalence between them and the Knill--Laflamme conditions for the $2t$-deletion channel. By Proposition \ref{factErasureDeletionEquivalence}, this also implies that they correct $t$ qubit errors.
\begin{theorem}\label{theoremConditions}
    Let $ \cC $ be a permutation-invariant quantum error correction code as given in Definition \ref{DefPICode}, and suppose that the coefficients $\alpha_j$ and $\beta_j$, $j=1,\dots,n$ in the codewords \eqref{eq:codewords} are real. Then the code $ \cC$ corrects all $ t$-qubit errors if and only if its coefficient vectors $ \underline{\alpha}= (\alpha_0,\alpha_1,\ldots,\alpha_n) $ and $ \underline{\beta} = (\beta_0,\beta_1,\ldots,\beta_n) $ satisfy the following conditions:
          \begin{align*}
        &\text{\rm(C1)}\quad \sum_{j=0}^n\alpha_j\beta_j = 0;\\
        &\text{\rm(C2)}\quad \sum_{j=0}^n\alpha_j^2 = \sum_{j=0}^n\beta_j^2 = 1;\\
        &\text{\rm(C3) \;For all $ 0\leq  a,b  \leq 2t$},\\
        & \hspace*{1in}\sum_{j=0}^n\frac{\binom{n-2t}{j}}{\sqrt{\binom{n}{j+a} \binom{n}{j+b}}} \alpha_{j+a}\beta_{j+b}=0;\\
        &\text{\rm(C4) \;For all $ 0\leq  a,b  \leq 2t$},\\
        & \hspace*{1in}\sum_{j=0}^n\frac{\binom{n-2t}{j}}{\sqrt{\binom{n}{j+a} \binom{n}{j+b}}}\left(\alpha_{j+a}\alpha_{j+b}-\beta_{j+a}\beta_{j+b}\right)=0,
    \end{align*}
   \new{ In {\rm(C3), (C4)} we assume by definition that $\frac{\alpha_s}{\sqrt{\binom{n}{s}}}=\frac{\beta_s}{\sqrt{\binom{n}{s}}}=0 $ if $ s>n $.}
\end{theorem}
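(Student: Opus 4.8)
The plan is to translate the statement ``$\cC$ corrects all $t$-qubit errors'' into the Knill--Laflamme conditions of Theorem~\ref{factKnill} for the $2t$-deletion channel, and then to evaluate those conditions on codewords of the form~\eqref{eq:codewords}. The bridge is the equivalence: $\cC$ corrects all $t$ Pauli errors if and only if $\cC$ corrects $2t$ deletions. The ``if'' direction is Proposition~\ref{factErasureDeletionEquivalence}. For the ``only if'' direction I would use the standard chain of implications: correcting all $t$ Pauli errors forces quantum distance at least $2t+1$; a code of distance $d$ corrects every pattern of $d-1$ located erasures; and, for a permutation-invariant code, deleting any $2t$ coordinates produces the same state as erasing the first $2t$, so correcting $2t$ erasures and correcting $2t$ deletions coincide. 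Granting this, it suffices to characterize when $\cC$ corrects the $2t$-deletion channel. By the discussion preceding Lemma~\ref{LemmaKrausOperators}, the Kraus set of this channel on a permutation-invariant code reduces to $\varepsilon_{2t}=\{\E_0,\dots,\E_{2t}\}$ with $\E_a=\bfG^a\bfF^{2t-a}$, so Theorem~\ref{factKnill} applies with $\bfA_a=\E_a$ provided $\{\ket{\bfc_0},\ket{\bfc_1}\}$ is orthonormal --- which, for real coefficient vectors, is precisely (C1) (orthogonality, $\sum_j\alpha_j\beta_j=0$) together with (C2) (normalization, $\sum_j\alpha_j^2=\sum_j\beta_j^2=1$). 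I would record this reduction first.

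Next I would compute the inner products entering \eqref{eq:KL1}--\eqref{eq:KL2}. Applying Lemma~\ref{LemmaKrausOperators} termwise and substituting $j=w-a$,
\[
  \E_a\ket{\bfc_0}=\sum_{j}\alpha_{j+a}\sqrt{\frac{\binom{n-2t}{j}}{\binom{n}{j+a}}}\,\ket{D^{n-2t}_j},\qquad
  \E_a\ket{\bfc_1}=\sum_{j}\beta_{j+a}\sqrt{\frac{\binom{n-2t}{j}}{\binom{n}{j+a}}}\,\ket{D^{n-2t}_j},
\]
where terms with $j+a\notin\{0,\dots,n\}$ or $j\notin\{0,\dots,n-2t\}$ vanish by the convention $\binom{n}{k}=0$ out of range. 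Since $\bra{\bfc_i}\E_a^\dagger\E_b\ket{\bfc_{i'}}$ is just the inner product of $\E_a\ket{\bfc_i}$ with $\E_b\ket{\bfc_{i'}}$ and the Dicke states $\{\ket{D^{n-2t}_j}\}_j$ are orthonormal, one gets, for all $0\le a,b\le 2t$ and all $i,i'\in\{0,1\}$,
\[
  \bra{\bfc_i}\E_a^\dagger\E_b\ket{\bfc_{i'}}
  =\sum_{j=0}^{n}\frac{\binom{n-2t}{j}}{\sqrt{\binom{n}{j+a}\binom{n}{j+b}}}\;\gamma^{(i)}_{j+a}\,\gamma^{(i')}_{j+b},
\]
with $\gamma^{(0)}=\boldsymbol{\alpha}$ and $\gamma^{(1)}=\boldsymbol{\beta}$; realness of the coefficients is what allows dropping the complex conjugate on the bra side.

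Finally I would match the two families of conditions. The off-diagonal requirement \eqref{eq:KL1}, i.e.\ $\bra{\bfc_0}\E_a^\dagger\E_b\ket{\bfc_1}=0$ for all $0\le a,b\le 2t$, is literally (C3). The diagonal requirement \eqref{eq:KL2} asks that $\bra{\bfc_0}\E_a^\dagger\E_b\ket{\bfc_0}=\bra{\bfc_1}\E_a^\dagger\E_b\ket{\bfc_1}$ for all $a,b$ (the common value then simply \emph{defines} the constant $g_{ab}$, so its existence is automatic), and subtracting the two displayed expressions turns this into (C4). Running the argument backwards, (C1)--(C2) make the basis orthonormal and (C3)--(C4) reinstate \eqref{eq:KL1}--\eqref{eq:KL2}, so by Theorem~\ref{factKnill} the code corrects $2t$ deletions and therefore all $t$ Pauli errors. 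I expect the only real friction to be index bookkeeping: keeping the shifts $w\mapsto w-a$ on the bra and $w\mapsto w-b$ on the ket consistent, and checking that rewriting the sums as $\sum_{j=0}^{n}$ neither drops nor adds nonzero terms, which is where the out-of-range binomial convention must be invoked cleanly. The reduction step --- the ``only if'' half of the Pauli/deletion equivalence --- is conceptually the subtler point, though it is entirely standard once phrased via code distance and erasures.
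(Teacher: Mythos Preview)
Your proposal is correct and follows essentially the same route as the paper: reduce $t$-qubit error correction to the Knill--Laflamme conditions for the $2t$-deletion Kraus set $\varepsilon_{2t}$, apply Lemma~\ref{LemmaKrausOperators} to compute $\E_a\ket{\bfc_i}$, and read off (C3)--(C4) from the orthonormality of the Dicke states, with (C1)--(C2) encoding orthonormality of the basis. If anything, you are slightly more careful than the paper in spelling out the ``only if'' half of the Pauli/deletion equivalence via code distance and erasures; the paper invokes Proposition~\ref{factErasureDeletionEquivalence} and leaves that converse implicit.
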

\begin{proof}
Since the Dicke states are orthonormal by construction, conditions (C1), (C2)
are required for the codewords $\ket{\bfc_0},\ket{\bfc_1}$ to form
orthonormal states. We will argue that conditions (C3) and (C4) are equivalent to the Knill--Laflamme conditions for the $2t$-deletion channel. 
By Proposition~\ref{factErasureDeletionEquivalence} this suffices to prove the theorem. 

Recall the form of the Kraus set of the $ 2t$-deletion channel \eqref{eq:Kraus deletion}.
By \eqref{eq:codewords} and Lemma \ref{LemmaKrausOperators}, we have
    \begin{align*}
        &\E_a\ket{\bfc_0} = \sum_{j=0}^{n}\alpha_j\sqrt{\frac{\binom{n-2t}{j-a}}{\binom{n}{j}}}\ket{D^{n-2t}_{j-a}},\\
        &\E_a\ket{\bfc_1} = \sum_{j=0}^{n}\beta_j\sqrt{\frac{\binom{n-2t}{j-a}}{\binom{n}{j}}}\ket{D^{n-2t}_{j-a}}.
    \end{align*}
To show the equivalence of (C3) and \eqref{eq:KL1}, we compute
    \begin{align*}
\bra{\bfc_0}\E_a^\dagger \E_b \ket{\bfc_1}&=\sum_{j=0}^n\sum_{j'=0}^n\alpha_j\beta_{j'}\sqrt{\frac{\binom{n-2t}{j-a}\binom{n-2t}{j'-b}}{\binom{n}{j}\binom{n}{j'} }} \braket{D^{n-2t}_{j-a}}{D^{n-2t}_{j'-b}}\\
&=\sum_{j=0}^n\sum_{j'=0}^n\alpha_j\beta_{j'}\sqrt{\frac{\binom{n-2t}{j-a}\binom{n-2t}{j'-b}}{\binom{n}{j}\binom{n}{j'} }}\delta_{j-a,j'-b}\\
&= \sum_{j=0}^n\frac{\binom{n-2t}{j}}{\sqrt{\binom{n}{j+a} \binom{n}{j+b}}} \alpha_{j+a}\beta_{j+b}
    \end{align*}
for all $ 0\leq a,b\leq 2t $. 
Turning to condition (C4), we find
    \begin{align*}
\bra{\bfc_0}\E_a^\dagger \E_b \ket{\bfc_0}-\bra{\bfc_1}\E_a^\dagger \E_b \ket{\bfc_1} 
&=\sum_{j=0}^n\sum_{j'=0}^n\sqrt{\frac{\binom{n-2t}{j-a}\binom{n-2t}{j'-b}}{\binom{n}{j}\binom{n}{j'} 
}}\left( \alpha_j\alpha_{j'} - \beta_j\beta_{j'} \right) \braket{D^{n-2t}_{j-a}}{D^{n-2t}_{j'-b}}\\
&=\sum_{j=0}^n\sum_{j'=0}^n\sqrt{\frac{\binom{n-2t}{j-a}\binom{n-2t}{j'-b}}{\binom{n}{j}\binom{n}{j'} 
}}\left( \alpha_j\alpha_{j'} - \beta_j\beta_{j'} \right) \delta_{j-a,j'-b}\\
&=\sum_{j=0}^n\frac{\binom{n-2t}{j}}{\sqrt{\binom{n}{j+a} \binom{n}{j+b}}}\left(\alpha_{j+a}\alpha_{j+b}-
\beta_{j+a}\beta_{j+b}\right)
    \end{align*}
for all $ 0\leq a,b\leq 2t$, and thus (C4) is equivalent to \eqref{eq:KL2}. 
Together with Proposition~\ref{factErasureDeletionEquivalence} this proves the theorem.
\end{proof}
\new{\begin{example}\label{example : gnu codes}
 Ouyang's permutation-invariant codes \cite{ouyangPI} with parameters $ (g,m,u) $ can be defined via the logical computational basis 
\begin{align*}
    \ket{\bfc_0} = \sum_{\substack{\text{$l$ even}\\0\leq l \leq m}}\sqrt{\frac{\binom{m}{l}}{2^{m-1}}}\ket{D^n_{gl}},\quad
    \ket{\bfc_1} = \sum_{\substack{\text{$l$ odd}\\0\leq l \leq m}}\sqrt{\frac{\binom{m}{l}}{2^{m-1}}}\ket{D^n_{gl}},
\end{align*}
where $ n=gmu $ is the code length. Consider a $ (2t+1,2t+1,1)$ code from this family. Its coefficient vectors trivially satisfy conditions (C1)-(C3) because of the choice of the gap parameter $g=2t+1$, and  condition (C4) turns into 
\begin{align*}
    \sum_{l=0}^m(-1)^l\binom{m}{l}\frac{\binom{n-2t}{gl-a}}{\binom{n}{gl}},
\end{align*}
which is zero for all $ 0\leq a \leq 2t $ (see Lemmas 1 and 2 in \cite{ouyangPI}). Coupled with Theorem 
\ref{theoremConditions}, this shows that this code corrects $ t $ qubit errors, 
recovering one of the results in \cite{ouyangPI}. This example will prove useful in Prop.~\ref{prop:gm1} below, where we relate our code construction to Ouyang's codes.
\end{example}}


\section{A new family of permutation-invariant Codes}\label{sec:new family}
In this section, we present a new family of permutation-invariant codes, defined by the parameters $g,m,\delta$, and $\epsilon$. Here, $g$ is
the gap parameter, $m$ is the occupancy number, $\delta$ is a parameter to adjust the code length, and the parameter $  \epsilon $ determines the sign of the coefficients. The code we construct encodes one logical qubit into $n=2gm+\delta+1$ physical qubits. 
The following combinatorial identities, proved in Appendix~\ref{sec:Proofs}, will play a role in the construction.
\begin{lemma}\label{lemma:C1}
    Let $n, g, m, a ,r$ be integers such that $g>0$ and $0\leq a \leq r \leq 2m < n/g$. Then 
    \begin{align}
        \sum_{l=0}^{m}(-1)^l\frac{\binom{m}{l}}{\binom{n/g-l}{m+1}}\left(\frac{\binom{n-r}{gl-a}}{\binom{n}{gl}}-\frac{\binom{n-r}{gl-r+a}}{\binom{n}{gl}}\right)=0. \label{eq:E1}
    \end{align}
\end{lemma}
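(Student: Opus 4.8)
The plan is to recast \eqref{eq:E1} as the statement that a sum of residues of an explicit rational function vanishes, and then to kill that sum by exploiting an antisymmetry. First I would rewrite the bracket in \eqref{eq:E1} in polynomial form, using the identity that, for every integer $k\ge 0$,
\[
\binom{n-r}{k-a}=\frac{(n-r)!}{n!}\,\binom{n}{k}\,[k]_a\,[n-k]_{r-a},
\]
where $[y]_j:=y(y-1)\cdots(y-j+1)$ is the falling factorial and both sides vanish when $k<a$ or $k>n-r+a$. Applying this with $k=gl$ to both terms of the bracket (with $a$ in the first term, $r-a$ in the second, and using $[y]_{r-a}[n-y]_a=P_a(n-y)$ where $P_a(y):=[y]_a[n-y]_{r-a}$) yields
\[
\frac{\binom{n-r}{gl-a}}{\binom{n}{gl}}-\frac{\binom{n-r}{gl-r+a}}{\binom{n}{gl}}=\frac{(n-r)!}{n!}\bigl(P_a(gl)-P_a(n-gl)\bigr).
\]
Setting $x:=n/g$ and $Q(l):=P_a(gl)-P_a(g(x-l))$, which equals $P_a(gl)-P_a(n-gl)$, one sees that $Q$ is a polynomial in $l$ of degree $\le r\le 2m$ obeying the antisymmetry $Q(x-l)=-Q(l)$, and that the left-hand side of \eqref{eq:E1} equals $\frac{(n-r)!}{n!}\,S$ with $S:=\sum_{l=0}^{m}(-1)^{l}\binom{m}{l}\frac{Q(l)}{\binom{x-l}{m+1}}$. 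Thus it suffices to prove $S=0$.

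Next I would pass to residues. Combining the partial-fraction identity $\sum_{l=0}^{m}\frac{(-1)^{l}\binom{m}{l}}{z-l}=\frac{(-1)^{m}m!}{z(z-1)\cdots(z-m)}$ with $\frac{1}{\binom{x-l}{m+1}}=\frac{(m+1)!}{(x-l)(x-l-1)\cdots(x-l-m)}$, the latter being regular at $z\in\{0,1,\dots,m\}$ since $x>2m$, one obtains $S=\sum_{l=0}^{m}\operatorname{Res}_{z=l}\Phi(z)$ with
\[
\Phi(z):=\frac{(-1)^{m}m!\,(m+1)!\;Q(z)}{\bigl(z(z-1)\cdots(z-m)\bigr)\bigl((x-z)(x-z-1)\cdots(x-z-m)\bigr)}.
\]
Since $x>2m$, the poles of $\Phi$ are precisely the simple poles at the points of $\{0,1,\dots,m\}$ and of $\{x,x-1,\dots,x-m\}$, and these two sets are disjoint. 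As $\deg Q\le 2m$ is two below the denominator degree $2m+2$, we have $\Phi(z)=O(z^{-2})$ as $|z|\to\infty$, so the sum of all residues of $\Phi$ equals zero; hence $S=-\sum_{j=0}^{m}\operatorname{Res}_{z=x-j}\Phi(z)$.

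To close the loop I would use the antisymmetry. The substitution $z\mapsto x-z$ leaves the denominator of $\Phi$ unchanged (it merely interchanges its two factors) and sends $Q(z)$ to $-Q(z)$, so $\Phi(x-z)=-\Phi(z)$; as this map carries the pole $x-j$ to $j$, a direct change of variable gives $\operatorname{Res}_{z=x-j}\Phi(z)=\operatorname{Res}_{w=j}\Phi(w)$, so that $\sum_{j=0}^{m}\operatorname{Res}_{z=x-j}\Phi(z)=S$. Together with $S=-\sum_{j=0}^{m}\operatorname{Res}_{z=x-j}\Phi(z)$ from the previous step, this forces $S=-S$, i.e. $S=0$, and \eqref{eq:E1} follows.

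I expect the main obstacle to be organizational rather than conceptual: carrying the falling-factorial rewriting, the $(-1)^{m}$ sign and the factorial constants accurately through the residue computation, and invoking the hypothesis $2m<n/g$ in the two places where it is needed — to keep the two families of poles of $\Phi$ disjoint, and to ensure $1/\binom{x-l}{m+1}$ is regular at $l=0,\dots,m$. The conceptual core is brief: the relation $[y]_{r-a}[n-y]_a=P_a(n-y)$ gives the antisymmetry $Q(x-l)=-Q(l)$, while the bound $r\le 2m$ is exactly what makes the sum of all residues of $\Phi$ vanish; with both in hand the remaining steps are routine.
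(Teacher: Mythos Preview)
Your proof is correct and takes a genuinely different route from the paper's. The paper rewrites \eqref{eq:E1} as the equality of two sums of products of binomials (their \eqref{eq:two_sums}), then packages all values of $a$ into a bivariate generating function identity $F(x,y)\equiv G(x,y)\pmod{y^{2m+1}}$, which it proves via the B\"urmann--Lagrange inversion formula: after inversion, odd-indexed terms in the expansion vanish, making the expression symmetric under swapping $(1+y)^g$ and $(1+xy)^g$. Your argument instead isolates the antisymmetry $Q(x-l)=-Q(l)$ at the outset and turns the sum into a sum of residues of a rational function $\Phi$; the degree bound $\deg Q\le r\le 2m$ forces $\sum\operatorname{Res}\Phi=0$, and the antisymmetry $\Phi(x-z)=-\Phi(z)$ pairs the two families of poles, giving $S=-S$. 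Both proofs exploit the same underlying $l\leftrightarrow n/g-l$ symmetry, but yours is more elementary---no Lagrange inversion is needed---and it makes the role of each hypothesis transparent: $r\le 2m$ is exactly the decay condition at infinity, and $2m<n/g$ is exactly what keeps the two pole sets disjoint and $1/\binom{x-l}{m+1}$ regular at $l=0,\dots,m$. The paper's approach, on the other hand, treats all $a$ at once and yields the somewhat stronger power-series congruence $F\equiv G\pmod{y^{2m+1}}$.
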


\begin{lemma}\label{lemma:C2}
     For any real $x$ and integer $m$ such that $x>m>0,$
     \begin{equation}\label{eq:Z}
         \sum_{l=0}^m\frac{\binom{m}{l}}{\binom{2x-l}{m+1} } = \binom{x}{m}^{-1}\frac{(m+1)}{2(x-m)}.
     \end{equation}
\end{lemma}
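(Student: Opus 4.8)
The plan is to recognize the left-hand side as a finite difference / partial-fraction sum and evaluate it in closed form using the Beta-integral representation of the reciprocal binomial coefficient. Concretely, for a positive integer $N$ and $0\le l\le m$ one has the standard identity
\begin{equation*}
  \binom{N}{m+1}^{-1} = (m+1)\int_0^1 s^{m}(1-s)^{N-m-1}\,ds,
\end{equation*}
valid whenever $N>m$ (and extended by analytic continuation in $N$ to all real $N>m$, since both sides are rational in $N$ of the same degree). Applying this with $N = 2x-l$ and summing, I would write
\begin{equation*}
  \sum_{l=0}^m \frac{\binom{m}{l}}{\binom{2x-l}{m+1}}
  = (m+1)\int_0^1 s^{m}(1-s)^{2x-m-1}\sum_{l=0}^m\binom{m}{l}(1-s)^{-l}\,ds
  = (m+1)\int_0^1 s^{m}(1-s)^{2x-m-1}\Big(1+\tfrac{1}{1-s}\Big)^{m}\,ds.
\end{equation*}
Since $1+\frac{1}{1-s} = \frac{2-s}{1-s}$, the integrand collapses to $(m+1)\,s^{m}(2-s)^{m}(1-s)^{2x-2m-1}$, so the whole sum equals $(m+1)\int_0^1 \big(s(2-s)\big)^{m}(1-s)^{2x-2m-1}\,ds$.

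Next I would evaluate this last integral by the substitution $u = 1-s$, $s(2-s) = 1-(1-s)^2 = 1-u^2$, giving
\begin{equation*}
  (m+1)\int_0^1 (1-u^2)^{m}u^{2x-2m-1}\,du
  = \frac{m+1}{2}\int_0^1 (1-v)^{m}v^{x-m-1}\,dv
  = \frac{m+1}{2}\,B(x-m,\,m+1),
\end{equation*}
after the further substitution $v=u^2$. Now $B(x-m,m+1) = \frac{\Gamma(x-m)\Gamma(m+1)}{\Gamma(x+1)} = \frac{m!}{(x-m)(x-m+1)\cdots x} = \binom{x}{m}^{-1}\frac{1}{x-m}$, which yields exactly $\binom{x}{m}^{-1}\frac{m+1}{2(x-m)}$, as claimed. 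The hypotheses $x>m>0$ guarantee convergence of all the integrals and non-vanishing of the denominators.

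I expect the only genuine subtlety to be the justification that the Beta-integral identity for $\binom{N}{m+1}^{-1}$ may be used with the non-integer argument $N=2x-l$: the cleanest route is to observe that the claimed identity \eqref{eq:Z} is an equality of rational functions of $x$ (the left side is a rational function of $x$ since each $\binom{2x-l}{m+1}$ is a polynomial in $x$ of degree $m+1$, and the right side is manifestly rational), so it suffices to prove it for all sufficiently large integers $x$, where the integral manipulations above are literally valid; two rational functions agreeing at infinitely many points coincide. Alternatively, one can avoid integrals altogether and instead verify the identity by partial fractions: expand $\binom{2x-l}{m+1}^{-1} = \frac{(m+1)!}{(2x-l)(2x-l-1)\cdots(2x-l-m)}$ into partial fractions in the variable $2x$, interchange the two finite sums, and collapse the inner sum over $l$ using the Vandermonde-type identity $\sum_{l=0}^m (-1)^l\binom{m}{l}\binom{m}{k-l}\big/(2x-k)$-type telescoping; this is more bookkeeping but keeps everything within elementary algebra. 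Either way the computation is routine once the representation is in place.
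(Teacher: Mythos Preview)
Your proof is correct and takes a genuinely different route from the paper. The paper proves Lemma~\ref{lemma:C2} by Zeilberger's creative telescoping: it exhibits an explicit certificate $G(m,l)=F(m,l)\,\frac{l(m+2)}{m-l+1}$ for which $2(m+2)F(m,l)-2(x-m-1)F(m+1,l)=G(m,l+1)-G(m,l)$, sums over $l$ to obtain the first-order recurrence $s(m+1)=s(m)\,\frac{m+2}{x-m-1}$, and then unwinds this recurrence from $s(0)=1/(2x)$. Your argument instead converts each term $\binom{2x-l}{m+1}^{-1}$ into a Beta integral, collapses the $l$-sum by the binomial theorem, and evaluates the resulting integral after the substitutions $u=1-s$ and $v=u^2$. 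Both methods are short; yours is self-contained analysis that anyone comfortable with Euler integrals can follow and requires no guessed certificate, while the paper's approach is purely algebraic and, once the certificate is known (or produced by a WZ implementation), avoids any convergence discussion. Your closing remark that the identity is one between rational functions of $x$, hence determined by its values at large integers, is a clean way to sidestep the non-integer Beta integral---though in fact under the hypothesis $x>m$ every integral you write already converges, so the direct analytic argument is valid as stated.
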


\begin{construction}\label{constructionGmdelta}
Let $g,m,\delta $ be nonnegative integers, and let $\epsilon\in\{-1,+1\}$. Define a permutation-invariant code $\cQ_{g,m,\delta,\epsilon}$ via its logical computational basis
    \begin{align*}
        &\ket{\bfc_0} =\sum_{\substack{\text{$l$ {\rm even}}\\0\leq l \leq m}} \gamma b_l\ket{D^n_{gl}} + 
        \sum_{\substack{\text{$l$ {\rm odd}}\\0\leq l \leq m}} \gamma b_l\ket{D^n_{n-gl}},\\
        &\ket{\bfc_1} = \sum_{\substack{\text{$l$ {\rm odd}}\\0\leq l \leq m}} \gamma b_l\ket{D^n_{gl}} +\epsilon
        \sum_{\substack{\text{$l$ {\rm even}}\\0\leq l \leq m}} \gamma b_l\ket{D^n_{n-gl}},
    \end{align*}
    where $ n=2gm+\delta+1, $
    $
    b_l=\sqrt{{\binom{m}{l}}/{\binom{n/g-l}{m+1} }},
    $
and $ \gamma =  \sqrt{\binom{n/(2g)}{m} \frac{n-2gm}{g(m+1)} } $ is the normalizing factor. 
\end{construction}
The next theorem establishes the error correction properties of the code $\cQ_{g,m,\delta,\epsilon}$.
\new{\begin{theorem}\label{theoremGMdelta}
   Let $ t $ be a nonnegative integer and let $ m\geq t $ and $\delta\geq 2t$. If
    $$
(   g\ge 2t, \epsilon=-1) \text{ or }(g\ge 2t+1,\epsilon=+1),
   $$
then the code $\cQ_{m,l,\delta,\epsilon}$ encodes one qubit into $n = 2gm+\delta+1 $ qubits and corrects any $t$ qubit errors. 
\end{theorem}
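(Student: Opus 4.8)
The plan is to verify that the real coefficient vectors $\boldsymbol\alpha,\boldsymbol\beta$ of the code of Construction~\ref{constructionGmdelta} satisfy conditions (C1)--(C4) of Theorem~\ref{theoremConditions} for the prescribed $t$; by that theorem this is equivalent to correcting all $t$-qubit errors, while (C1)--(C2) also express that $\ket{\bfc_0},\ket{\bfc_1}$ are orthonormal, so the code genuinely encodes one qubit into $n=2gm+\delta+1$ qubits. The whole argument rests on the support structure, which I would record first. Writing $T_1=\{gl:0\le l\le m\}$ and $T_2=\{n-gl:0\le l\le m\}$, Construction~\ref{constructionGmdelta} gives $\supp\boldsymbol\alpha\subseteq\{gl:l\text{ even}\}\cup\{n-gl:l\text{ odd}\}$ and $\supp\boldsymbol\beta\subseteq\{gl:l\text{ odd}\}\cup\{n-gl:l\text{ even}\}$. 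Since $n-2gm=\delta+1>0$ we have $g(l+l')<n$ for all $l,l'\le m$, so $T_1\cap T_2=\emptyset$; together with the parity separation this forces $\supp\boldsymbol\alpha\cap\supp\boldsymbol\beta=\emptyset$, which is exactly (C1). Moreover, two distinct weights occurring in a single codeword differ either by at least $2g$ (same block, with index gap $\ge2$ forced by parity) or by at least $\delta+1$ (different blocks); as $g\ge2t$ and $\delta\ge2t$, this difference exceeds $2t$ whenever $t\ge1$. (The cases $t=0$ and $m=0$ are immediate, so assume $t,m\ge1$.)

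For (C2) I would reduce the claim $\|\ket{\bfc_0}\|^2=\|\ket{\bfc_1}\|^2=1$ to $\gamma^2\sum_{l=0}^m b_l^2=1$, substitute $b_l^2=\binom{m}{l}/\binom{n/g-l}{m+1}$ and apply Lemma~\ref{lemma:C2} with $x=n/(2g)=m+(\delta+1)/(2g)>m$; this gives $\sum_{l=0}^m b_l^2=\binom{n/(2g)}{m}^{-1}g(m+1)/(\delta+1)$, which is precisely $\gamma^{-2}$.

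For (C4), a nonzero summand requires $j+a$ and $j+b$ to be weights occurring in one codeword, so by the first paragraph $j+a=j+b$, i.e. $a=b$, and the sum vanishes identically whenever $a\ne b$. For $a=b$ I would collect the contributions of $\ket{\bfc_0}$ and $\ket{\bfc_1}$; using $\binom{n}{n-gl}=\binom{n}{gl}$ this reduces (C4) to $\gamma^2\sum_{l=0}^m(-1)^l b_l^2\binom{n}{gl}^{-1}\bigl(\binom{n-2t}{gl-a}-\binom{n-2t}{n-gl-a}\bigr)$, the alternating sign coming from the way even and odd $l$ alternate between $\boldsymbol\alpha$ and $\boldsymbol\beta$ in each block. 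The complementation identity $\binom{n-2t}{n-gl-a}=\binom{n-2t}{gl-2t+a}$ (with the usual convention $\binom{n}{k}=0$ for $k<0$) turns this into $\gamma^2$ times the left-hand side of \eqref{eq:E1} with $r=2t$, and the hypotheses $0\le a\le r\le2m<n/g$ hold since $a\le2t=r$, $m\ge t$, and $\delta+1>0$; Lemma~\ref{lemma:C1} then gives (C4).

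Finally, (C3) is the cross-codeword condition: a nonzero summand needs $j+a\in\supp\boldsymbol\alpha$ and $j+b\in\supp\boldsymbol\beta$, hence $a\ne b$ (disjoint supports) and $|a-b|\le2t$. Two such weights in different blocks differ by at least $\delta+1>2t$, and in the same block by at least $g\ge2t$, with equality only when their index gap equals $1$. So if $g>2t$ the sum in (C3) is empty, while if $g=2t$ only $(a,b)\in\{(0,2t),(2t,0)\}$ survives, each contributing pairs of consecutive weights; there each term coming from $T_1$ cancels against one from $T_2$ after using $\binom{n}{n-gl}=\binom{n}{gl}$ and $\binom{n-2t}{gl}=\binom{n-2t}{n-g(l+1)}$, the latter holding precisely because $g=2t$. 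The bulk of the technical effort is carried by the combinatorial identities of Lemmas~\ref{lemma:C1}--\ref{lemma:C2} (proved in Appendix~\ref{sec:Proofs}); within this argument the only delicate points are the support bookkeeping and the boundary case $g=2t$ in (C3), where one must exploit the reflection symmetry $\binom{n}{gl}=\binom{n}{n-gl}$ linking the two blocks.
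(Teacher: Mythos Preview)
Your proposal is correct and follows essentially the same approach as the paper's proof: verify (C1)--(C4) of Theorem~\ref{theoremConditions}, with (C2) reduced to Lemma~\ref{lemma:C2}, (C4) for $a=b$ reduced to Lemma~\ref{lemma:C1}, and (C3) handled by the support-gap argument together with the explicit cancellation in the boundary case $g=2t$. The only difference is presentational: where the paper expands everything into double sums with Kronecker deltas and eliminates them one by one, you phrase the same eliminations as statements about the supports of $\boldsymbol\alpha$ and $\boldsymbol\beta$ and the minimum gaps between them, which is more concise but logically identical.
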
}
\begin{proof} 
We need to prove that the coefficient vectors $\underline{\alpha},\underline{\beta}$ of the basis states 
satisfy conditions (C1)-(C4). Writing these coefficients for the code  $\cQ_{m,l,\delta,-}$ explicitly, we obtain
    \begin{align*}
        \alpha_j = \sum_{\substack{\text{$l$ even}\\0\leq l \leq m}}f(l)\delta_{j,gl} +  \sum_{\substack{\text{$l$ odd}\\0\leq l \leq m}}f(l)\delta_{j,n-gl}, \\
        \beta_j = \sum_{\substack{\text{$l$ odd}\\0\leq l \leq m}}f(l)\delta_{j,gl} -  \sum_{\substack{\text{$l$ even}\\0\leq l \leq m}}f(l)\delta_{j,n-gl} ,
    \end{align*}
    where $f(l) = \gamma b_l$.  
Throughout the proof, all the sums on $l$ are taken over $l=0,1,\dots, m$.
We start with
        \begin{multline*}
            \alpha_j\beta_j = \sum_{\text{ $l$ even}}\sum_{\text{ $ l^\prime$odd}}f(l)f(l^\prime)\delta_{gl,gl^\prime}\delta_{j,gl}- \sum_{\text{ $ l $ even}}\sum_{\text{ $l^\prime $ even}}f(l)f(l^\prime)\delta_{gl,n-gl^\prime}\delta_{j,gl}\\ +\sum_{ \text{ $ l $ odd}}\sum_{\text{ $l^\prime $ odd}}f(l)f(l^\prime)\delta_{n-gl,gl^\prime }\delta_{j,n-gl}-\sum_{\text{ $l$ odd}}\sum_{\text{ $ l^\prime  $ even}}f(l)f(l^\prime)\delta_{n-gl,n-gl^\prime}\delta_{j,n-gl},
        \end{multline*}
    where we use the fact $ \delta_{jk}\delta_{jl}=\delta_{jl}\delta_{kl}$. The first sum in this expression is
clearly zero since $ gl \neq gl^\prime  $ if $l$ is even and $l^\prime$ is odd, and thus $\delta_{gl , gl^\prime} =0.$. Turning to the second sum, now $ l $ and $ l^\prime $ are even.
It is easy to see that $n=2gm+\delta+1 \neq g(l+l^\prime)$ since $ l+l^\prime \leq 2m,$ and so
$ \delta_{gl,n-gl^\prime}=0$, and the second sum is zero. The third and fourth sums are treated as the
first and second, respectively, and they are easily seen to be zero. This shows that condition (C1) holds. 

To check condition (C2), write
 \begin{multline*}
            \alpha_j^2 = \sum_{\text{$ l$ even}}\sum_{ \text{ $ l^\prime $ even}}f(l)f(l^\prime)\delta_{gl,gl^\prime}\delta_{j,gl}+ \sum_{\text{ $ l $ even}}\sum_{\text{ $ l^\prime $ odd}}f(l)f(l^\prime)\delta_{gl,n-gl^\prime}\delta_{j,gl}\\ +\sum_{\text{ $ l $ odd}}\sum_{\text{ $ l^\prime $ even}}f(l)f(l^\prime)\delta_{n-gl,gl^\prime }\delta_{j,n-gl}+\sum_{\text{ $ l $ odd}}\sum_{\text{ $ l^\prime $ odd}}f(l)f(l^\prime)\delta_{n-gl,n-gl^\prime}\delta_{j,n-gl}.
       \end{multline*}
Notice that the second and third sums are zero since $n=2gm+\delta+1\neq g(l+l^\prime) $ for any $ l,l^\prime \leq m $. The first sum is not 0 only if $ l=l^\prime$, and therefore it can be written as $ \sum_{\text{ $ l $ even}}f(l)^2\delta_{j,gl} $. Similarly, the fourth sum can be expressed as $ \sum_{ \text{ $ l $ odd}}f(l)^2\delta_{j,n-gl} $. Hence,
    \begin{align*}
        \sum_{j=0}^n \alpha_j^2 &= \sum_{j=0}^n\Big( \sum_{\text{$l$ even}}f(l)^2\delta_{j,gl} + \sum_{\text{$l$ odd}}f(l)^2\delta_{j,n-gl} \Big)=\sum_{l=0}^m f(l)^2 \\
        &=\gamma^2\sum_{l=0}^m b_l^2 = 1,
    \end{align*}
where we used Lemma \ref{lemma:C2}. A similar sequence of steps confirms that $\sum_{j=0}^n \beta_j^2=1,$ and thus
condition (C2) is also satisfied.

    For condition $ (C3) $, let us first evaluate the term
    \begin{eqnarray*}
        \begin{multlined}
            \alpha_{j+a}\beta_{j+b} = \sum_{\text{ $ l $ even}}\sum_{\text{ $ l^\prime  $ odd}}f(l)f(l^\prime)\delta_{gl-a,gl^\prime-b}\delta_{j,gl-a}- \sum_{\text{ $ l $ even}}\sum_{\text{ $ l^\prime  $ even}}f(l)f(l^\prime)\delta_{gl-a,n-gl^\prime-b}\delta_{j,gl-a}\\ +\sum_{\text{ $ l $ odd}}\sum_{\text{ $ l^\prime $odd}}f(l)f(l^\prime)\delta_{n-gl-a,gl^\prime-b }\delta_{j,n-gl-a}-\sum_{\text{ $ l $ odd}}\sum_{\text{ $ l^\prime  $even}}f(l)f(l^\prime)\delta_{n-gl-a,n-gl^\prime-b}\delta_{j,n-gl^\prime-b}.
        \end{multlined}   
    \end{eqnarray*}
First, observe that the second sum is zero. To see this, recall that $ \delta\geq 2t $, $l+l^\prime\leq 2m $, and $ b-
a\leq 2t $. Therefore, $ g(l+l^\prime) + b-a \leq 2gm+2t < 2gm+\delta+1 = n $. Similarly, the third sum is also
zero. For the first sum to be nonzero, it should be that $ g(l-l^\prime)= a-b.$
 If $ g>2t $, then $ |l-l^\prime|\geq 1 $ and $ |a-b| \leq 2t $, so this condition cannot be fulfilled. 
The equality $g(l-l^\prime)= a-b $ is possible only if $g=2t$ and $a-b = \pm 2t$, equivalently $ l^\prime = l \mp 1 $. By the same argument, the fourth sum is not 0 only if $g=2t$ and $ a-b = \pm 2t $, hence 
$ l^\prime = l \pm 1 $. Therefore, the first sum can be written as 
$\sum_{\text{$l$ even}}f(l)f(l\mp1)\delta_{j,gl-
a} $, while the fourth sum can be written as $\sum_{\text{$ l $ even}}f(l)f(l\mp1)\delta_{j,n-
gl-b} $. Hence, we have
    \begin{align*}
        \sum_{j=0}^n\frac{\binom{n-2t}{j}}{\sqrt{\binom{n}{j+a} \binom{n}{j+b}}} \alpha_{j+a}\beta_{j+b} &= \sum_{\text{$l$ even}}f(l)f(l\mp1)\sum_{j=0}^n\frac{\binom{n-2t}{j}}{\sqrt{\binom{n}{j+a} \binom{n}{j+b}}}\left(\delta_{j,gl-a}- \delta_{j,n-gl-b}  \right)\\
        &= \sum_{\text{$l$ even}}f(l)f(l\mp1)\biggl(\frac{\binom{n-2t}{gl-a}}{\sqrt{\binom{n}{gl} \binom{n}{gl+b-a}}}-\frac{\binom{n-2t}{n-gl-b}}{\sqrt{\binom{n}{n-gl-b+a} \binom{n}{n-gl}}}  \biggr)\\
        &= \sum_{\text{$l$ even}}f(l)f(l\mp1)\biggl(\frac{\binom{n-2t}{gl-a}}{\sqrt{\binom{n}{gl} \binom{n}{gl+b-a}}}-\frac{\binom{n-2t}{gl+b-2t}}{\sqrt{\binom{n}{gl+b-a} \binom{n}{gl}}}\biggr).
    \end{align*}
Now, observe that we have two different cases: either $a=0$ and $b=2t$, or $ a=2t $ and $ b=0 $. For both of
them, the difference inside the parentheses is zero, meaning that we have proved condition (C3).

Finally, consider condition $(C4).$ Let us start with evaluating the term
        \begin{multline*}
            \alpha_{j+a}\alpha_{j+b} =  \sum_{\text{ $l $even}}\sum_{\text{$l^\prime$even}}f(l)f(l^\prime)\delta_{gl-a,gl^\prime-b}\delta_{j,gl-a} + \sum_{\text{$l$even}}\sum_{\text{$l^\prime $ odd}}f(l)f(l^\prime)\delta_{gl-a,n-gl^\prime-b}\delta_{j,gl-a}\\ + \sum_{\text{ $ l $ odd}}\sum_{\text{ $ l^\prime  $ odd}}f(l)f(l^\prime)\delta_{n-gl-a,n-gl^\prime-b }\delta_{j,n-gl-a}\\+\sum_{\text{ $ l $ odd}}\sum_{\text{ $ l^\prime  $ even}}f(l)f(l^\prime)\delta_{n-gl-a,gl^\prime-b}\delta_{j,n-gl-a}.
        \end{multline*}
First, recall that $ \delta\geq 2t, $ $ l+l^\prime \leq 2m $ and $ |b-a|\leq 2t $. Therefore, the second sum is zero since $ g(l+l^\prime) +b-a \leq 2gm+2t < 2gm+\delta+1 = n,$ and the fourth sum is zero by a similar argument. 
For the first sum to be nonzero, for both $ l $ and $ l^\prime $ even, the equality $ g(l-l^\prime)= a-b $ should hold. Since $ a-b\leq 2t $ and $ g\geq 2t $, it can hold only when
$ a=b $ and $ l=l^\prime $. As before, what remains of the sum is the diagonal, and it can be written as
$ \sum_{\text{$ l $ even}}f(l)^2\delta_{j,gl-a} $. For the same reasons, the third sum degrades to
$ \sum_{\text{$ l$ odd}}f(l)^2\delta_{j,n-gl-a}  $, which yields
   \begin{align*}
        \alpha_{j+a}\alpha_{j+b} =  \sum_{\text{ $l $ even}}f(l)^2\delta_{j,gl-a}+\sum_{\text{ $ l $ odd}}f(l)^2\delta_{j,n-gl-a}.
    \end{align*}
By following similar steps, we find that
    \begin{align*}
        \beta_{j+a}\beta_{j+b} = \sum_{\text{ $ l $ even}}f(l)^2\delta_{j,n-gl-a}+\sum_{\text{ $ l $ odd}}f(l)^2\delta_{j,gl-a}.
    \end{align*}
In summary, we have
    \begin{align*}
        \alpha_{j+a}\alpha_{j+b}- \beta_{j+a}\beta_{j+b} =  \sum_{l=0}^m(-1)^lf(l)^2\biggl( \delta_{j,gl-a} - \delta_{j,n-gl-a} \biggr).
    \end{align*}
Then, recalling that $a=b$, the expression in condition $(C4)$ has the form
    \begin{align*}
        \sum_{j=0}^n\frac{\binom{n-2t}{j}}{\sqrt{\binom{n}{j+a} \binom{n}{j+b}}}\left(\alpha_{j+a}\alpha_{j+b}-\beta_{j+a}\beta_{j+b}\right)&=
        \sum_{l=0}^m(-1)^lf(l)^2\sum_{j=0}^n\frac{\binom{n-2t}{j}}{\binom{n}{j+a}}\left( \delta_{j,gl-a} - \delta_{j,n-gl-a} \right)\\
        &=\gamma^2\sum_{l=0}^m(-1)^l\frac{\binom{m}{l}}{\binom{n/g-l}{m+1} }\left(\frac{\binom{n-2t}{gl-a}}{\binom{n}{gl}} - \frac{\binom{n-2t}{gl-2t+a}}{\binom{n}{gl}}  \right),
    \end{align*}
which is zero by Lemma~\ref{lemma:C1}.  Following the same sequence of steps, it is possible
to show the error correction property of the code $ \cQ_{m,l,\delta,+} $. The proof is now complete.
\end{proof}

\begin{example}
(PI Code $\cQ_{2,1,2,-}$) Suppose $ g=2 $, $ m=1 $, $ \delta=2 $, and $ \epsilon=-1 $. Then the length of the code is $ n=2gm+\delta+1=7 $ and $ \gamma = \sqrt{\binom{n/(2g)}{m} \frac{n-2gm}{g(m+1)} } = \sqrt{21}/4$. 
The coefficients $b_l$ have the form
   $$ 
   b_0 = \sqrt{{\binom{m}{0}}/{\binom{n/g}{m+1} }} = \sqrt{\frac8{35}}, \quad b_1=\sqrt{{\binom{m}{1}}/{\binom{n/g-1}{m+1} }} = \sqrt{\frac8{15}}.
   $$
   Using Construction $ \ref{constructionGmdelta} $, the code $ \cQ_{2,1,2,-} $ can be defined via its logical codewords
    \begin{align}
    \ket{\bfc_0} = \sqrt{\frac{3}{10}}\ket{D^7_0} +  \sqrt{\frac{7}{10}}\ket{D^7_5} \quad \text{and} \quad \ket{\bfc_1} = \sqrt{\frac{7}{10}}\ket{D^7_2} -  \sqrt{\frac{3}{10}}\ket{D^7_7} \label{eq:Q212}
    \end{align}
Note that it has the same length as the $ 7$-qubit permutation-invariant code of \cite{ruskai-polatsek}, and it can correct a single error owing to Theorem \ref{theoremGMdelta}. 
\end{example}
\begin{example}
(PI Code $\cQ_{4,2,4,-}$) Suppose $ g=4 $, $ m=2 $, $ \delta=4 $, and $ \epsilon=-1 $. Then the length of the code is $ n=2gm+\delta+1=21 $ and $ \gamma = \sqrt{455/512}$. Since $ m=2 $, $ l $ takes values $ 0$ , $ 1 $, and $ 2 $. The coefficients $b_l$ have the form $ b_0 = \sqrt{128/1547} $, $ b_1= 16/\sqrt{663}$, $b_2=\sqrt{128/195}$.
Using Construction \ref{constructionGmdelta}, the code $ \cQ_{4,2,4,-} $ is
\begin{align*}
    &\ket{\bfc_0} = \sqrt{\frac{5}{68}}\ket{D^{21}_0} +  \sqrt{\frac{7}{12}}\ket{D^{21}_8} + \sqrt{\frac{35}{102}}\ket{D^{21}_{17}},\\
    &\ket{\bfc_1} = \sqrt{\frac{35}{102}}\ket{D^{21}_4} -  \sqrt{\frac{7}{12}}\ket{D^{21}_{13}} - \sqrt{\frac{5}{68}}\ket{D^{21}_{21}}    
\end{align*}
This code is shorter than all currently known explicit permutation-invariant codes that correct double errors
\end{example}
Note that the permutation-invariant code $ \cQ_{2t,t,2t,-} $ of length $ (2t+1)^2-2t $ corrects arbitrary $ t $ qubit errors and has the best code parameters among all the previously known permutation-invariant codes with this property. The following proposition describes the relation between our code family and Ouyang's {\em gnu} codes \cite{ouyangPI}.
\new{\begin{proposition}\label{prop:gm1}
For all odd integers $ m>0 $ and for all integers $ g>0 $, the code 
$ \cQ_{g,\frac{m-1}2,g-1,+} $  coincides with Ouyang's {\em gnu} code with parameters $ (g,m,1)$.
\end{proposition}
\begin{proof}
       First notice that the length of the code $ \cQ_{g,\frac{m-1}2,g-1,+} $ is $ n = 2g\frac{m-1}{2}+g-1+1 = gm$, 
matching the length of Ouyang's code with parameters $ (g,m,1) $. Writing the entries of the coefficient vector $ \underline{\alpha} $ for the code $ \cQ_{g,\frac{m-1}2,g-1,+} $, we have
    \begin{align*}
        \alpha_j &= \sum_{\substack{\text{$l$ even}\\0\leq l \leq (m-1)/2}}f(l)\delta_{j,gl} +  \sum_{\substack{\text{$l$ odd}\\0\leq l \leq (m-1)/2}}f(l)\delta_{j,g(m-l)}\\
        &=\sum_{\substack{\text{$l$ even}\\0\leq l \leq (m-1)/2}}f(l)\delta_{j,gl} +  \sum_{\substack{\text{$l^\prime$ even}\\(m+1)/2\leq l^\prime \leq m}}f(l^\prime)\delta_{j,gl^\prime}\\
        &= \sum_{\substack{\text{$l$ even}\\0\leq l \leq m}}f(l)\delta_{j,gl}, 
    \end{align*}
where we made the change of variable $ l^\prime = m-l $ and used the fact that $ m-l $ is even for all $ l $ and $ m $ odd. Computing the function $ f(l)^2 $, we obtain
    \begin{align*}
        f(l)^2 = \gamma^2b_l^2 &=\frac{2}{m+1}\frac{\binom{m/2}{(m-1)/2}\binom{(m-1)/2}{l}}{\binom{m-l}{(m+1)/2}}\\
        &{=}\frac{2}{m+1}\frac{\binom{m/2}{(m-1)/2}}{\binom{m}{(m+1)/2}}\binom{m}{l}\\
        &{=}\frac{1}{2^{m-1}}\binom{m}{l},
    \end{align*}
where the second equality is obtained by rewriting the numerator on the first line, and the third one by writing out
the binomial coefficients on the second line, namely
 $ \binom{m/2}{(m-1)/2}=\frac{m(m-2)(m-4)\dots 1}{(m-1)(m-3)\ldots 2}  $ and  $ \binom{m}{(m+1)/2} = \frac{2^m m!}{(m+1)(m-1)^2(m-3)^2\ldots 2^2} .$   
Therefore, the coefficient vector $ \underline{\alpha} $ of the code $ \cQ_{g,\frac{m-1}2,g-1,+} $ is equal to the coefficient vector $ \underline{\alpha} $ of the code in Example \ref{example : gnu codes} with parameters $ (g,m,1) $. A similar argument establishes that the coefficient vector $ \underline{\beta} $ of two codes are also equal. 
\end{proof}}

\subsection{Deletion Correction Property}
We already know that the code $\cQ_{g,m,\delta,\epsilon}$ corrects deletions. A precise formulation of this claim
is given in the following proposition.
\begin{proposition}
 If $m\geq \ceil{\frac{s}{2}}, \delta\geq s$ and
   $$
  (g\ge s, \epsilon=-1) \text{ or } (g\geq s+1, \epsilon=+1),
  $$
then the code  $\cQ_{g,m,\delta,\epsilon}$ corrects all patterns of $s$ deletions. 
\end{proposition}
This claim follows from the fact that conditions (C3) and (C4) are equivalent to the Knill--Laflamme
conditions for correcting $2t$ deletions when they are phrased for a permutation-invariant code.

\remove{\textit{Proof.} Recall that in the proof of Theorem \ref{theoremConditions}, we proved that conditions $ (C3) $ and $ (C4) $ are equivalent to the Knill--Laflamme conditions for the $ 2t $-deletion channel for a permutation-invariant code. Therefore, it is easy to see that changing the variable $ 2t $ with $ s $ in conditions $ (C3) $ and $ (C4) $ will give us error correction conditions for a permutation-invariant code for the $s$-deletion channel. After that the proof can be easily done by repeating exactly the same steps in the proof of Theorem \ref{theoremGMdelta}. }

For an odd number of deletions, the shortest code $\cQ_{g,m,\delta,\epsilon}$ has length $ (s+1)^2$. This coincides
with the length of Ouyang's {\em gnu} codes, although the code families are different.
For any even number of deletions $>0$, the code $\cQ_{g,m,\delta.\epsilon},$ where $(g,m,\delta,\epsilon)=(s,\ceil{s/2},s,-),$ has length $(s+1)^2-s$, which is shorter than the existing constructions. 

In \cite{hagiwaraSingle}, Nakayama and Hagiwara showed that the smallest length of single quantum deletion-correcting codes is $ 4 $. They also constructed a code that meets this bound with equality.
We note that code $\cQ_{1,1,1,-}$ gives another construction of an optimal code correcting one deletion. Its
logical codewords are
\begin{align*}
    \ket{\bfc_0} &= \sqrt{\frac{1}{3}}\ket{0000} + \sqrt{\frac{1}{6}}\left(\ket{1110} + \ket{1101} + \ket{1011} + \ket{0111} \right),\\
    \ket{\bfc_1} &= \sqrt{\frac{1}{6}}\left(\ket{0001} + \ket{0010} + \ket{0100} + \ket{1000}\right) - \sqrt{\frac{1}{3}}\ket{1111}.
\end{align*}
\subsection{Transversality}
\new {In this section we make some remarks concerning the transversal action of logical gates on the codes that we propose. 
Let $ G+\tau $ be a universal set of gates, where $ G $ is a group of easily implementable gates (such as those that act transversally on the physical states), and $ \tau $ is a single gate outside of this group. Such collections of gates are known to support universal computations \cite{nielsen}. The search for codes that accept transversal action of a group of gates $G$ has been a frequent research topic in the literature, e.g., \cite{Moussa2016,BreuckmannBurton2022}. 
Sometimes such gate sets are called golden-gates, with a primary example of the form $ 2O+T $, where $ T $ is the square root of the phase gate and $ 2O $ is the binary octahedral group (a.k.a. the Clifford group). The authors of \cite{goldenGate} 
considered a universal set $ G+\tau $ that additionally minimizes the number of $\tau$ gates. They also defined another golden-gate set of the form $ 2I+\tau_{60},$
where $ 2I $ is the binary icosahedral group and $ \tau_{60} $ is another non-Clifford gate that they defined.}

\new{Permutation-invariant codes were linked to transversal gate sets in 
the recent paper \cite{exoticGates}, based on the results of \cite{gross}. Among other results, \cite{gross} constructed spin codes as representation of the group $2I$ that can be mapped onto permutation-invariant codes. For instance, \cite{gross} constructed a code spanned by the basis states
\begin{subequations}
  \begin{align}
    &\ket{c_0}=\sqrt{\frac{3}{10}}\;\Big|\frac{7}{2},\frac{7}{2}\Big\rangle+\sqrt{\frac{7}{10}}\;\Big|\frac72,-\frac{3}{2}\Big\rangle, \label{eq:2I codeC0}\\
    &\ket{c_1}=\sqrt{\frac{7}{10}}\;\Big|{\frac{7}{2},\frac{3}{2}}\Big\rangle-\sqrt{\frac{3}{10}}\;\Big|{\frac{7}{2},-\frac{7}{2}}\Big\rangle.\label{eq:2I codeC1}
  \end{align}
\end{subequations}
Following up on this work, the authors of \cite{exoticGates} defined a \textit{Dicke state mapping} $\mathscr{D}$ that converts a state of a spin-$j$ system into a permutation-invariant state on $n=2j$ qubits. It can be defined as follows:
\begin{align}  
    \mathscr{D} : \ket{j,m} \rightarrow \ket{D^{2j}_{j-m}}.
\end{align}
This mapping converts the logical gates of a spin code into the logical {\em transversal} gates of a permutation-invariant code. 
To link this line of work to our paper, observe that applying $\mathscr{D}$ to the spin code of \eqref{eq:2I codeC0}-\eqref{eq:2I codeC1}, we obtain exactly our
code $Q_{2,1,2,-}$ \eqref{eq:Q212}. Hence, this code admits the 2I group gates transversally \cite{exoticGates}.}

\new{Even more recently, paper \cite{kubischta2023notsosecret} introduced a family of 
permutation-invariant codes of distance $ 3 $ that admits transversal gates from $BD_{2b}$ (the binary dihedral group of degree $2b$). The group $BD_{2b}$ is a 
non-abelian subgroup of $ SU(2) $ of order $ 8b $ with generators
\begin{align*}
    X,Z,\begin{pmatrix}e^{-i\pi/2b}&0\\0&e^{i\pi/2b}\end{pmatrix}.
\end{align*}
For instance, $BD_2=\langle X,Z \rangle $, $BD_4=\langle X,Z,S \rangle $, and $BD_8=\langle X,Z,S,T \rangle $.
It is well known that $ [[2^{r+1}-1,1,3]] $ Reed-Muller codes implement the $BD_{2^r}$ group gates transversally. }

\new{\begin{proposition} Let  $b>0$ be an integer that is not of the form $ 2^r $ or $ 3(2^r) $.
    The codes in the family $Q_{3,1,2b-4,+}$ implement the group $ BD_{2b} $ transversally when $3\!\!\notdivides\!\! b$ and implement the group $ BD_{2b/3} $ transversally when $ 3|b $. The codes $ \cQ_{3,1,2^r-4,+} $  implement 
    the group $ BD_{2^r}  $ transversally for all integers $ r\geq 3 $.
\end{proposition}
This follows because the first code family in the proposition offers an alternative construction of the codes in Family 1 in \cite{kubischta2023notsosecret}, where the transversality properties are proved. The second code family in the proposition is the same as Family 2 in \cite{kubischta2023notsosecret}. }
  
\new{For example, the code $ \cQ_{3,1,4,+} $ of length $n=11$ with its basis codewords
\begin{align*}
    &\ket{\bfc_0} = \frac{\sqrt{5}}{4}\ket{D^{11}_0} + \frac{\sqrt{11}}{4}\ket{D^{11}_8},\\
    &\ket{\bfc_1} = \frac{\sqrt{11}}{4}\ket{D^{11}_3} + \frac{\sqrt{5}}{4}\ket{D^{11}_{11}}
\end{align*}
can correct one error and it implements the $T$ gate transversally. 
For comparison, the $[[15,1,3]]$ Reed-Muller code, which also has this property, is longer
than our construction.
Furthermore, the code $ \cQ_{3,1,12,+} $ with its codewords
\begin{align*}
    &\ket{\bfc_0} = \sqrt{\frac{13}{32}}\ket{D^{19}_0} + \sqrt{\frac{19}{32}}\ket{D^{19}_{16}},\\
    &\ket{\bfc_1} =  \sqrt{\frac{19}{32}}\ket{D^{19}_3} + \sqrt{\frac{13}{32}}\ket{D^{19}_{19}}.
\end{align*}
can correct one error, implements the $ \sqrt{T} $ gate transversally, and has better code parameters than the $[[31,1,3]]$ RM code that implements a tranversal $ \sqrt{T} $. These observations prompts us to inquire whether other codes in our family admit transversal logical gates.}


\section{Spontaneous Decay Errors}\label{sec:spontaneous}
In this section, we show that the codes constructed in Sec.~\ref{sec:new family}
correct errors of a different kind, arising from spontaneous photon emission.
\subsection{Basics of the amplitude damping channel}
This channel model arises from an approximation of noisy evolution in many physical systems. One of them is
the process in which an excited electron decays to its ground state, resulting in the emission of a photon. 
Say that the ground state is $\ket0$ and the excited state is $\ket 1$, and let the probability of decay be $p$, which is assumed to be small. Then, the behavior of this noise process on a single qubit system can be defined by the quantum channel
\begin{align}\label{eq:AD-1}
    {\cE}_p(\rho) = \bfA_0\rho\bfA_0^\dagger + \bfA_1\rho\bfA_1^\dagger,
\end{align}
where
\begin{align*}
    \bfA_0 = \begin{bmatrix}
        1&0\\
        0&\sqrt{1-p}
    \end{bmatrix},\quad \bfA_1 = \begin{bmatrix}
        0&\sqrt{p}\\
        0&0
    \end{bmatrix}.
\end{align*}
We clearly have $\bfA_0\ket0=\ket0,\bfA_0\ket 1=\sqrt{1-p}\ket1$ and $\bfA_1\ket0=0,\bfA_1\ket1=\sqrt p\ket0.$ Because of this, this channel model is called the \textit{amplitude damping channel} \cite{ADChuang}, \cite[Sec.4.4]{wilde},
and it forms a quantum analog of the classical $Z$-channel. The action of $\cE_p$ can be extended naturally to $n$-qubit systems by assuming that spontaneous decay affects independently each of the qubits in the superposition. We denote the $n$-qubit amplitude damping channel by ${\cE}_p^{\otimes n}.$ The Kraus set of this channel
has the form ${\cK}_{{\cE}_p^{\otimes n}} = \{ \otimes_{i=1}^n\bfK_i : \bfK_i \in \{ \bfA_0,\bfA_1 \}\}$. Let us further introduce the set 
of amplitude damping errors of multiplicity $t$, 
    \begin{align}\label{eq:truncatedKraus}
    \epsilon_{p,t} := \{ \bfK\in{\cK}_{{\cE}_p^{\otimes n}} : |\supp(\bfK)|\leq t \},
    \end{align}
where 
$ \bfK:= \otimes_{i=1}^n\bfK_i $ and $ \supp(\bfK) := \{ i \in \{1,2,\ldots, n\} :  \bfK_i = \bfA_1  \},$
calling it a \textit{truncated Kraus set} of $ {\cE}_p^{\otimes n}$ \cite{Ouyang_2013}.

In quantum coding theory, the problem of error correction is equivalent to minimizing the worst-case error of a code after the
recovery process. In other words, let $ {\cE} $ be a quantum channel, let $ \cC $ be a quantum code, and let ${\cR}$ be the recovery
operator that corrects errors introduced by the channel. Then, the worst-case error is 
\begin{align*}
    E_{{\cE},\cC}({\cR}) := \max_{\rho\in S(\cC)}\left[ 1 - F(\rho, {\cR}\circ {\cE}  ) \right],
\end{align*}
where $ S(\cC)=\{ \rho\in S(\complex^{q}) : \sum_{ \ket{c}\in{\cB}} \bra{c}\rho\ket{c} = 1  \} $ (here $ {\cB} $ is an orthonormal basis of $ \cC $ and $q\geq 2$ is an integer), and $  F(\rho, {\cR}\circ{\cE}) $ is the \textit{entanglement fidelity}, defined as 
\begin{align*}
     F(\rho, {\cR}\circ {\cE}  ) := \sum_{\bfA \in {\cK}_{{\cR}\circ{\cE} }}|\Tr(\bfA\rho)|^2, 
\end{align*}
where ${\cK}_{{\cR}\circ{\cE} }$ is the Kraus set of the channel $ {\cR}\circ{\cE} $ \cite{schumacher}, \cite[p.228]{wilde}.
The fidelity is a way to measure how close the recovered density matrix ${\cR}\circ{\cE}(\rho)$ is to the original matrix $\rho$. 

With this, the error correction problem can be stated as the following min-max problem: 
\begin{align*}
    \inf_{{\cR}}E_{{\cE},\cC}({\cR}) =  \inf_{{\cR}}\max_{\rho\in S(\cC)}\left[ 1 - F(\rho, {\cR}\circ {\cE}  ) \right].
\end{align*}
Following \cite{ouyangPI}, we say that the code corrects $ t $ amplitude damping errors if there exist some positive constants $ A $ and $ p_0 $ such that
\begin{align}\label{eq:tADBound}
     \inf_{{\cR}}E_{{\cE}_p^{\otimes n},\cC}({\cR}) \leq Ap^{t+1}
\end{align}
holds for all $ p\in[0,p_0] $. 

In this section, we quantify the error correction properties of the codes $\mathcal{Q}_{g,m,\delta,\epsilon}$. Our main result here is given in the following theorem.
\begin{theorem}\label{theorem:ADCorrection}
Let $ t $ be a nonnegative integer. Let $ g\geq t+1 $, $ m\geq \ceil{\frac{3t}{2}} $, $\delta\geq t$, and \new{$\epsilon =\pm 1$}. Then the code $\cQ_{m,l,\delta,\epsilon}$ corrects $t$ amplitude-damping errors.
\end{theorem}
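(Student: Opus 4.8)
The plan is to reduce the amplitude-damping criterion \eqref{eq:tADBound} to a finite set of Knill--Laflamme-type identities for the truncated Kraus set $\epsilon_{p,t}$, and then verify those identities using the same combinatorial machinery (Lemmas~\ref{lemma:C1} and~\ref{lemma:C2}) that drove the proof of Theorem~\ref{theoremGMdelta}. First I would recall (following \cite{ouyangPI, Ouyang_2013}) that it suffices to find a recovery map handling the errors in the truncated set $\epsilon_{p,t}$ of \eqref{eq:truncatedKraus}, since the tail contributes only $O(p^{t+1})$; concretely, one shows $\inf_{\cR} E_{\cE_p^{\otimes n},\cC}(\cR)\le Ap^{t+1}$ provided the code satisfies approximate error correction for $\epsilon_{p,t}$, and by the standard argument this reduces to the exact Knill--Laflamme conditions \eqref{eq:KL1}--\eqref{eq:KL2} for the leading-order parts of the operators $\bfK\in\epsilon_{p,t}$. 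The leading-order part of $\bfK$ with $|\supp(\bfK)|=s\le t$ is (up to the scalar $p^{s/2}$) a product of $s$ lowering operators $\sigma_-$ on distinct qubits together with identities elsewhere; by permutation invariance of the codewords, the relevant matrix elements depend only on $s$, not on which qubits are hit.

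The second step is to compute the action of these leading-order operators on the codewords of $\cQ_{g,m,\delta}$. A product of $s$ distinct lowering operators acts on a Dicke state $\ket{D^n_w}$ by sending it to a multiple of $\ket{D^{n-s}_{w-s}}$ (or $0$), with coefficient $\sqrt{\binom{w}{s}/\binom{n}{s}}$ after accounting for normalization — this is analogous to Lemma~\ref{LemmaOperatorAction} but with the qubit count preserved only if we keep the deleted positions as $\ket 0$; in the amplitude-damping case the length stays $n$ and $\sigma_-$ maps $\ket 1\mapsto\ket 0$, so $\ket{D^n_w}\mapsto c_{w,s}\ket{D^n_{w-s}}$ for an explicit $c_{w,s}$. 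Plugging in the codeword expansions from Construction~\ref{constructionGmdelta}, the Knill--Laflamme overlaps $\bra{\bfc_i}\bfK_a^\dagger\bfK_b\ket{\bfc_j}$ become finite sums over $l$ of terms $f(l)f(l')$ times binomial ratios, with Kronecker deltas forcing $gl-s_a = gl'-s_b$ (for the $\ket{D^n_{gl}}$ branch) or the analogous relation on the $\ket{D^n_{n-gl}}$ branch. The separation hypotheses $g\ge t+1$, $\delta\ge t$, and $m\ge\lceil 3t/2\rceil$ are exactly what is needed so that cross terms between the two branches vanish (because $g(l+l')+(\text{shift})<n$) and cross terms within a branch force $l=l'$ unless a boundary case $g=t+1$ occurs — mirroring the case analysis in the proof of Theorem~\ref{theoremGMdelta}, but now with the milder gap $g\ge t+1$ since a single application of $\sigma_-^{\otimes s}$ shifts weight by at most $t$ rather than $2t$.

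The diagonal ($i=j$, and the off-diagonal $i\ne j$) conditions then collapse, after extracting $\gamma^2$, to expressions of the form $\sum_{l=0}^m (-1)^l \binom{m}{l}/\binom{n/g-l}{m+1}\,[\,\cdots\,]$, and I expect the bracketed part to match the integrand of Lemma~\ref{lemma:C1} with $r$ replaced by $s$ (after possibly symmetrizing $a\leftrightarrow s-a$), so that identity kills it; the normalization-type sums reduce to Lemma~\ref{lemma:C2} as before. The $m\ge\lceil 3t/2\rceil$ bound deserves care: it is needed because applying $\bfK_a^\dagger\bfK_b$ can move weight by $s_a$ in one direction and $s_b$ in the other with $s_a,s_b\le t$, so the effective "reach" is up to $\lceil 3t/2\rceil$ in the occupancy variable once the worst pairing is accounted for — this is the parameter bookkeeping I would be most careful to get right. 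The main obstacle, as in Theorem~\ref{theoremGMdelta}, is not any single deep idea but the careful case analysis showing every cross term vanishes under the stated parameter constraints and that the surviving sums are precisely instances of Lemmas~\ref{lemma:C1}--\ref{lemma:C2}; I would also need to confirm that the standard reduction from the min-max fidelity bound \eqref{eq:tADBound} to exact correction of the leading terms of $\epsilon_{p,t}$ goes through cleanly, citing \cite{ouyangPI, Ouyang_2013}, since the lower-order ($<t$) errors must also be corrected exactly (their leading parts), not merely the weight-exactly-$t$ ones.
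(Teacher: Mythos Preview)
Your proposal has a genuine gap in the first step. Reducing to exact Knill--Laflamme conditions for the \emph{leading-order} parts of the operators $\bfK\in\epsilon_{p,t}$ is not sufficient: each $\bfK$ with $|\supp(\bfK)|=s$ carries $\bfA_0$ factors on the remaining $n-s$ qubits, and since $\bfA_0=I+O(p)$ these generate subleading powers of $p$ that are \emph{not} negligible for the target bound \eqref{eq:tADBound}. Concretely, take $\bfK=\bfA_0^{\otimes n}$ (so $s=0$): its leading part is $I$, for which the KL conditions are trivial, yet $\bra{\bfc_0}\bfK^\dagger\bfK\ket{\bfc_0}-\bra{\bfc_1}\bfK^\dagger\bfK\ket{\bfc_1}$ is a polynomial in $p$ of degree $n$ whose coefficients at orders $1,\dots,t$ must all be shown to vanish. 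Controlling only the leading term of each $\bfK$ misses exactly these contributions.

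What the paper actually does is work inside Ouyang's approximate error-correction bound (Theorem~\ref{theorem:OuyangTheorem10}), which controls the worst-case infidelity through $\Tr\bfM$, a KL-violation measure $\Delta$, and $\lambda_{\min}(\bfM)$. The heart of the argument (Lemma~\ref{Lemma:UpperBoundDelta}) expands $\bra{c_+}\bfA^\dagger\bfB\ket{c_-}$ as a \emph{full} power series in $p$ via \eqref{eq:DickeStatesADErrors} and shows that every coefficient $[p^k]$ with $k\le 2m-t$ vanishes; this is Lemma~\ref{lemma:C3}, which is the needed strengthening of Lemma~\ref{lemma:C1}. The constraint $m\ge\lceil 3t/2\rceil$ then enters not through any ``reach in the occupancy variable'' as you guessed, but because the final bound \eqref{eq:FinalUpperBound} contains $\eta\propto\Delta/\lambda_{\min}(\bfM)=O(p^{2m-2t+1})$ (using Lemma~\ref{Lemma:LowerBoundLambdaMin} for the denominator), and one needs $2m-2t+1\ge t+1$. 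Your combinatorial instinct that everything reduces to Lemma~\ref{lemma:C1} is correct, but the analytic wrapper---order-by-order vanishing in $p$ together with the $\Delta/\lambda_{\min}$ ratio---is what actually produces the parameter constraint, and your outline bypasses both pieces.
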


The general tool for proving error correction in the sense of \eqref{eq:tADBound} is given in the following theorem.
\begin{theorem}\label{theorem:OuyangTheorem10}
    (\cite{ouyangPI}, Theorem 10) Let $ \epsilon $ be a truncated Kraus set, and $ \bfA,\bfB \in \epsilon $. Let $ \cC $ be a code with an orthonormal basis $ {\cB} $. If $ \eta = \frac{(|\epsilon|-1)|\epsilon|^2\Delta}{\lambda_{\min}(\bfM)}$, then
    \begin{align}\label{eq:theorem10}
        \inf_{{\cR}}E_{{\cE},\cC}({\cR}) \leq 1 - \frac{\Tr\bfM - |\epsilon|^2\Delta}{1+\eta},
    \end{align}
    where 
    \begin{align}\label{eq:matrixM}
        \bfM := \sum_{\bfA,\bfB\in\epsilon}m_{\bfA,\bfB}\ket{\bfA}\bra{\bfB}, \quad m_{\bfA,\bfB}:=\frac{1}{|{\cB}|}\sum_{\ket{c_i}\in{\cB}}\bra{c_i}\bfA^\dagger\bfB\ket{c_i},
    \end{align}
    \begin{align}\label{eq:DefDelta}
        \Delta := \max_{\bfA,\bfB}\max_{i}\bfM_{\bfA,\bfB}^{ii} + \left( |{\cB}|-1 \right)\max_{\bfA,\bfB}\max_{\substack{i,j \\ i\neq j}}\bfM_{\bfA,\bfB}^{ij},
    \end{align}
    where   \begin{align}\label{eq:DefM_AB}
        \bfM_{\bfA,\bfB} := \sum_{\ket{c_i},\ket{c_j}\in{\cB}}\left( \bra{c_i}\bfA^\dagger\bfB\ket{c_j}-m_{\bfA,\bfB}\delta_{\ket{c_i},\ket{c_j}} \right)\ket{c_i}\bra{c_j},
    \end{align}
and $\bfM^{ij}$ denotes the matrix element indexed by $i,j.$
\end{theorem}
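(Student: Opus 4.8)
The plan is to prove the inequality \eqref{eq:theorem10} by exhibiting one \emph{explicit} recovery map and lower-bounding its worst-case entanglement fidelity; since the left-hand side is an infimum over all $\cR$, any single admissible recovery supplies an upper bound. Let $P=\sum_{\ket{c_i}\in\cB}\ket{c_i}\bra{c_i}$ be the code projector and let $\tilde\cE(\cdot)=\sum_{\bfA\in\epsilon}\bfA(\cdot)\bfA^\dagger$ be the channel truncated to $\epsilon$. I would take $\cR$ to be the \emph{transpose (Petz) recovery} of $\tilde\cE$ relative to $P$, with Kraus operators $R_\bfA=P\bfA^\dagger\,\tilde\cE(P)^{-1/2}$ on the range of $\tilde\cE(P)$, completed to a trace-preserving map by an operator supported off that range. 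In the exact Knill--Laflamme case this $\cR$ inverts $\cE$ perfectly; in the approximate case its deficiency is precisely what the matrices $\bfM$ and $\bfM_{\bfA,\bfB}$ measure.

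First I would reduce everything to a single Gram matrix. Collecting the vectors $\bfA\ket{c_i}$ (over $\bfA\in\epsilon$ and $\ket{c_i}\in\cB$), form the $|\epsilon||\cB|\times|\epsilon||\cB|$ Gram matrix $\Gamma$ with entries $\bra{c_i}\bfA^\dagger\bfB\ket{c_j}$. By the definitions \eqref{eq:matrixM}--\eqref{eq:DefM_AB} this splits as $\Gamma=\bfM\otimes\bfI_{|\cB|}+\Xi$, where $\Xi=\sum_{\bfA,\bfB\in\epsilon}\ket{\bfA}\bra{\bfB}\otimes\bfM_{\bfA,\bfB}$ collects the deviations from the averaged, exactly-correctable behaviour. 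The entanglement fidelity of $\cR\circ\cE$, restricted to the Kraus products arising from $\epsilon$, then reduces to a matrix-perturbation estimate for $\Gamma$ around the ideal part $\bfM\otimes\bfI_{|\cB|}$, and the worst case over $\rho\in S(\cC)$ is exactly what the max-over-$(i,j)$ structure of $\Delta$ is designed to control.

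Next I would treat numerator and denominator separately. For the numerator, the ideal part $\bfM\otimes\bfI_{|\cB|}$ contributes $\Tr\bfM$, namely $\frac{1}{|\cB|}\sum_i\bra{c_i}\big(\sum_\bfA\bfA^\dagger\bfA\big)\ket{c_i}$, the recoverable weight; meanwhile each of the $|\epsilon|^2$ blocks $\bfM_{\bfA,\bfB}$ of $\Xi$ is bounded in operator norm by the Gershgorin-type quantity $\Delta$ of \eqref{eq:DefDelta}, since $\Delta$ dominates the maximal absolute row sum of every $\bfM_{\bfA,\bfB}$. Summing these block contributions gives the correction $-\,|\epsilon|^2\Delta$ and hence the numerator $\Tr\bfM-|\epsilon|^2\Delta$. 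For the denominator, the factor $\tilde\cE(P)^{-1/2}$ in $\cR$ renormalizes by the inverse of $\tilde\cE(P)$, whose spectrum on its support coincides with that of $\Gamma$; its smallest eigenvalue is at least $\lambda_{\min}(\bfM)$ reduced by $\|\Xi\|$. Tracking the off-diagonal overlaps among the $|\epsilon|$ syndromes (the factor $(|\epsilon|-1)$ counting cross terms, each again bounded by $\Delta$) shows that this renormalization inflates the infidelity by at most $1+\eta$ with $\eta=\frac{(|\epsilon|-1)|\epsilon|^2\Delta}{\lambda_{\min}(\bfM)}$. Combining, for every $\rho\in S(\cC)$ one obtains $F(\rho,\cR\circ\cE)\ge\frac{\Tr\bfM-|\epsilon|^2\Delta}{1+\eta}$, and maximizing over $\rho$ and then taking the infimum over $\cR$ yields \eqref{eq:theorem10}.

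The hard part will be the denominator estimate. Because the syndrome subspaces $\bfA\cC$ for distinct $\bfA\in\epsilon$ overlap, $\tilde\cE(P)$ is not block-diagonal, so $\tilde\cE(P)^{-1/2}$ cannot be read off directly; the delicate step is to lower-bound $\lambda_{\min}\big(\tilde\cE(P)|_{\mathrm{supp}}\big)$ by $\lambda_{\min}(\bfM)$ minus a perturbation that is provably $O(\Delta)$, and then to convert this spectral gap into the exact multiplicative factor $(1+\eta)^{-1}$ in front of the fidelity. This is precisely where $\lambda_{\min}(\bfM)>0$ must be assumed and where the counting constants $|\epsilon|$, $|\epsilon|-1$, and $|\epsilon|^2$ enter with their stated powers. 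The Gram-matrix reduction and the numerator bound, by contrast, are routine once the recovery $\cR$ has been fixed.
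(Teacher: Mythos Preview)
The paper does not prove this theorem: it is quoted verbatim from \cite{ouyangPI} (Theorem~10 there) and used as a black box in the proof of Theorem~\ref{theorem:ADCorrection}. There is therefore no ``paper's own proof'' to compare your proposal against.

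As an independent sketch, your outline is broadly in the right spirit---construct an explicit recovery (transpose/Petz or the Leung--Nielsen near-optimal recovery), reduce to a Gram matrix $\Gamma=\bfM\otimes\bfI+\Xi$, and bound the deviation $\Xi$ blockwise by $\Delta$---and this is indeed the architecture of Ouyang's original proof. However, two steps in your proposal are vague enough that they would need real work to make rigorous: (i) the claim that the renormalization by $\tilde\cE(P)^{-1/2}$ inflates the infidelity by exactly the multiplicative factor $(1+\eta)^{-1}$, with the specific constant $(|\epsilon|-1)|\epsilon|^2$, does not follow from a generic spectral-gap argument and requires the more careful block-matrix perturbation analysis carried out in \cite{ouyangPI}; and (ii) your Gershgorin bound gives $\|\bfM_{\bfA,\bfB}\|\le\Delta$ only if $\Delta$ dominates the maximal absolute \emph{row sum}, but the definition \eqref{eq:DefDelta} bounds only the diagonal entry plus $(|\cB|-1)$ times the maximal off-diagonal entry, which is the same thing only when all off-diagonal entries have comparable magnitude---so the inequality you assert is correct, but the justification needs one more line. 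If you want the full argument, consult \cite{ouyangPI} directly.
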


Let $ {\cE}_p^{\otimes n} $ be the amplitude damping channel with decay probability $p$ and let $ \epsilon_{p,t} \subset {\cK}_{ {\cE}_p^{\otimes n} } $ be the truncated Kraus set as defined in \eqref{eq:truncatedKraus}. The following lemma provides a lower bound for the trace of matrix $ \bfM $ in \eqref{eq:matrixM}:
\begin{lemma}\label{Lemma:LowerBoundTrace}
    (\cite{ouyangPI}, Lemma 11) Let $ p>0 $ be a real number. Then 
    \begin{align*}
        \Tr\bfM \geq \lambda_{\min}\Big( \sum_{\bfA\in\epsilon_{p,t} }\bfA^\dagger\bfA \Big)\geq 1-\binom{n}{t+1}p^{t+1},
    \end{align*}
where $\lambda_{\min}$ is the smallest eigenvalue of $\bfM.$
\end{lemma}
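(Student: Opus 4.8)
\emph{Proof sketch.} The plan is to reduce $\Tr\bfM$ to the smallest eigenvalue of the positive semidefinite operator $\bfQ:=\sum_{\bfA\in\epsilon_{p,t}}\bfA^\dagger\bfA$, and then to estimate that eigenvalue by a direct combinatorial/probabilistic computation. For the first inequality I would start from \eqref{eq:matrixM}: the diagonal entries of $\bfM$ are $m_{\bfA,\bfA}=\frac1{|\cB|}\sum_{\ket{c_i}\in\cB}\bra{c_i}\bfA^\dagger\bfA\ket{c_i}$, so summing over $\bfA\in\epsilon_{p,t}$ gives $\Tr\bfM=\frac1{|\cB|}\sum_{\ket{c_i}\in\cB}\bra{c_i}\bfQ\ket{c_i}$. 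Since $\cB$ is orthonormal and $\bfQ$ is Hermitian, each Rayleigh quotient $\bra{c_i}\bfQ\ket{c_i}$ is at least $\lambda_{\min}(\bfQ)$; averaging over the $|\cB|$ basis vectors yields $\Tr\bfM\ge\lambda_{\min}(\bfQ)$.

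For the second inequality I would compute $\bfQ$ explicitly. Every $\bfA\in{\cK}_{\cE_p^{\otimes n}}$ has the form $\bigotimes_{i=1}^n\bfK_i$ with $\bfK_i\in\{\bfA_0,\bfA_1\}$, so $\bfA^\dagger\bfA=\bigotimes_{i=1}^n\bfK_i^\dagger\bfK_i$ is diagonal in the computational basis, with $\bfA_0^\dagger\bfA_0=\ket0\bra0+(1-p)\ket1\bra1$ and $\bfA_1^\dagger\bfA_1=p\,\ket1\bra1$. Hence $\bfQ$ is diagonal, and on a basis vector $\ket\bfx$ of Hamming weight $w=|\bfx|$ the eigenvalue of $\bfQ$ collects a contribution exactly from those $\bfA\in\epsilon_{p,t}$ whose support lies inside $\supp(\bfx)$, giving $\mu_w:=\sum_{s=0}^{\min(t,w)}\binom ws p^s(1-p)^{w-s}$. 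Since some $\ket\bfx$ of each weight $w\in\{0,1,\dots,n\}$ exists, $\lambda_{\min}(\bfQ)=\min_{0\le w\le n}\mu_w$, and it remains to bound $\mu_w$ from below for every $w\le n$.

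The final step is the delicate one. Note $1-\mu_w=\sum_{s=t+1}^w\binom ws p^s(1-p)^{w-s}$ is precisely $\Pr[\mathrm{Bin}(w,p)>t]$. A union bound over the $\binom w{t+1}$ choices of a $(t{+}1)$-subset of ``successes'' gives $1-\mu_w\le\binom w{t+1}p^{t+1}\le\binom n{t+1}p^{t+1}$ (equivalently, one may use the incomplete-beta identity $1-\mu_w=(t{+}1)\binom w{t+1}\int_0^p u^t(1-u)^{w-t-1}\,du$ and bound the integrand by $u^t$). Combining the three steps gives $\Tr\bfM\ge\lambda_{\min}(\bfQ)=\min_w\mu_w\ge1-\binom n{t+1}p^{t+1}$. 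The main obstacle is exactly this estimate: the naive bound $1-\mu_w\le\sum_{s>t}\binom ws p^s$ is too lossy, so one must exploit the binomial-tail structure to collapse the entire tail to its leading term $\binom n{t+1}p^{t+1}$; the rest of the argument is bookkeeping.
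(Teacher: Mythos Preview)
The paper does not supply its own proof of this lemma: it is quoted verbatim as Lemma~11 of \cite{ouyangPI}, so there is nothing in the present paper to compare your argument against. Your proof is correct and is essentially the standard argument one would expect: the first inequality is the Rayleigh-quotient bound applied to the diagonal computation of $\Tr\bfM$, and the second is the observation that $\bfQ=\sum_{\bfA\in\epsilon_{p,t}}\bfA^\dagger\bfA$ is diagonal with eigenvalue $\mu_w=\Pr[\mathrm{Bin}(w,p)\le t]$ on weight-$w$ basis vectors, followed by the union bound $\Pr[\mathrm{Bin}(w,p)>t]\le\binom{w}{t+1}p^{t+1}\le\binom{n}{t+1}p^{t+1}$. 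One cosmetic point: the clause ``where $\lambda_{\min}$ is the smallest eigenvalue of $\bfM$'' in the statement is evidently a slip; as you correctly read it, $\lambda_{\min}(\cdot)$ here denotes the smallest eigenvalue of its argument $\sum_{\bfA}\bfA^\dagger\bfA$, not of $\bfM$.
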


Define $ a:=|\supp(\bfA)|$, $ b:=|\supp(\bfB)| $ and $ c:=|\supp(\bfA)\cup\supp(\bfB)| - a $. The following lemma describes the action of amplitude damping errors on Dicke states.
\begin{lemma}
    (\cite{ouyangPI}, Lemma 13) Let $ \bfA,\bfB\in\epsilon_{p,t} $. Then
    \begin{align*}
        \bra{D^n_w}\bfA^\dagger\bfB\ket{D^n_w} = p^a(1-p)^{w-a}\frac{\binom{n-c-a}{w-a}}{\binom{n}{w}}\delta_{a,b}.
    \end{align*}
\end{lemma}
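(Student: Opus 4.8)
The plan is to compute $\bfA\ket{D^n_w}$ and $\bfB\ket{D^n_w}$ in closed form and then pair them, using that $\bra{D^n_w}\bfA^\dagger\bfB\ket{D^n_w}=\big(\bfA\ket{D^n_w}\big)^\dagger\big(\bfB\ket{D^n_w}\big)$. The only combinatorial ingredient needed is a ``splitting'' identity for Dicke states: for any subset $S\subseteq\{1,\dots,n\}$ with $|S|=s$,
\[
\ket{D^n_w}=\frac1{\sqrt{\binom nw}}\sum_{k}\sqrt{\binom sk\binom{n-s}{w-k}}\;\ket{D^s_k}_S\otimes\ket{D^{n-s}_{w-k}}_{\{1,\dots,n\}\setminus S},
\]
which follows by grouping the weight-$w$ strings according to the number $k$ of ones lying inside $S$; the Vandermonde identity $\binom nw=\sum_k\binom sk\binom{n-s}{w-k}$ confirms that these coefficients are consistently normalized. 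We keep the convention $\binom nk=0$ for $k<0$ throughout, which automatically absorbs the degenerate cases $w<a$ or $w<b$ in which the operators annihilate the Dicke state.

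First I would apply this identity with $S=\supp(\bfB)$ (so $s=b$), writing $\bfB=\big(\bigotimes_{i\in S}\bfA_1\big)\otimes\big(\bigotimes_{i\notin S}\bfA_0\big)$ after reordering the tensor factors so that the positions of $S$ come first, which is legitimate since $\ket{D^n_w}$ is permutation-invariant. Since $\bfA_1=\sqrt p\ket0\bra1$ kills $\ket0$ and sends $\ket1\mapsto\sqrt p\ket0$, the operator $\bigotimes_{i\in S}\bfA_1$ annihilates every term with $k<b$ and sends the unique surviving term $\ket{D^b_b}_S=\ket{1^b}$ to $p^{b/2}\ket{0^b}$; and $\bigotimes_{i\notin S}\bfA_0$, with $\bfA_0=\ket0\bra0+\sqrt{1-p}\ket1\bra1$, multiplies $\ket{D^{n-b}_{w-b}}$ by $(1-p)^{(w-b)/2}$. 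This yields
\[
\bfB\ket{D^n_w}=\sqrt{\tfrac{\binom{n-b}{w-b}}{\binom nw}}\;p^{b/2}(1-p)^{(w-b)/2}\;\ket{0^b}_{S}\otimes\ket{D^{n-b}_{w-b}}_{\{1,\dots,n\}\setminus S},
\]
i.e.\ a real nonnegative scalar times a normalized state of Hamming weight $w-b$; the analogous computation with $S=\supp(\bfA)$ gives the corresponding formula for $\bfA\ket{D^n_w}$ with $a$ in place of $b$.

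Finally I would pair the two closed forms. Their normalized parts have definite weights $w-a$ and $w-b$, so the inner product vanishes unless $a=b$, which produces the factor $\delta_{a,b}$. Assuming $a=b$, the remaining overlap is $\braket{\phi_A}{\phi_B}$ where $\ket{\phi_A}=\ket{0^a}_{\supp(\bfA)}\otimes\ket{D^{n-a}_{w-a}}$ and $\ket{\phi_B}=\ket{0^a}_{\supp(\bfB)}\otimes\ket{D^{n-a}_{w-a}}$; expanding in the computational basis, this equals $\binom{n-a}{w-a}^{-1}$ times the number of weight-$(w-a)$ strings vanishing on $\supp(\bfA)\cup\supp(\bfB)$, and since $|\supp(\bfA)\cup\supp(\bfB)|=a+c$ by the definition of $c$, that count is $\binom{n-a-c}{w-a}$. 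Multiplying the scalar prefactor $p^a(1-p)^{w-a}\binom{n-a}{w-a}/\binom nw$ by the overlap $\binom{n-a-c}{w-a}/\binom{n-a}{w-a}$ cancels $\binom{n-a}{w-a}$ and gives exactly $p^a(1-p)^{w-a}\binom{n-c-a}{w-a}/\binom nw$, which is the claimed value. There is no real obstacle here; the two points that need care are (i) pinning down the normalization constants in the splitting identity so that the $\binom{n-a}{w-a}$ factors cancel cleanly at the end, and (ii) extracting the Kronecker delta from the weight-mismatch argument, which is tidier than a direct case analysis of how $\supp(\bfA)$ and $\supp(\bfB)$ intersect.
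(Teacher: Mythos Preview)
The paper does not prove this lemma; it merely cites it as Lemma~13 of \cite{ouyangPI} and then rewrites the right-hand side as a polynomial in $p$ (equation \eqref{eq:DickeStatesADErrors}) for later use. So there is no in-paper proof to compare against.

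Your argument is correct and self-contained. The splitting identity for Dicke states with respect to an arbitrary subset $S$ is exactly the right tool: applying it with $S=\supp(\bfB)$ isolates the unique surviving term $k=b$, which gives the closed form for $\bfB\ket{D^n_w}$ you wrote down, and likewise for $\bfA\ket{D^n_w}$. The Kronecker delta then falls out cleanly from the weight mismatch, and the overlap computation $\braket{\phi_A}{\phi_B}=\binom{n-a-c}{w-a}/\binom{n-a}{w-a}$ is correct given the paper's definition $c=|\supp(\bfA)\cup\supp(\bfB)|-a$. The cancellation of $\binom{n-a}{w-a}$ at the end is exactly as you describe. One very minor point of presentation: the phrase ``after reordering the tensor factors so that the positions of $S$ come first'' is unnecessary, since your splitting identity is already stated for an arbitrary subset $S$ and the operator $\bfB$ factors across the bipartition $S\sqcup\bar S$ by construction; nothing needs to be permuted.
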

We will use this equality in the form
\begin{align}\label{eq:DickeStatesADErrors}
    \bra{D^n_w}\bfA^\dagger\bfB\ket{D^n_w} = \sum_{k=0}^{w}(-1)^{k-a}\frac{\binom{w-a}{k-a}\binom{n-c-a}{w-a}}{\binom{n}{w}}p^k\delta_{a,b}.
\end{align}
Let $ \{\ket{c_+}, \ket{c_-}\}$ be the Hadamard basis of the code $ \cQ_{g,m,\delta,\epsilon} $, namely $ \ket{c_+}=\frac{\ket{c_0}+\ket{c_1}}{\sqrt{2}} $ and $ \ket{c_-}=\frac{\ket{c_0}-\ket{c_1}}{\sqrt{2}} $. The following lemma provides an upper bound for $ \Delta $ in \eqref{eq:DefDelta}.
\begin{lemma}\label{Lemma:UpperBoundDelta}
    Let $ \bfA,\bfB \in \epsilon_{p,t} $. \remove{Let $ p_1\in(0,1) $ be a real number. Then, for all $ p<p_1 $.} Then, for all $ p\in(0,1),$
    \begin{align*}
        \Delta\leq Cp^{2m-t+1},
    \end{align*}
    where\footnote{$[p^k](\cdot)$ refers to the coefficient of $p^k$ in the expansion of the expression
    in the parentheses in powers of $p$.}
    \begin{align}\label{eq:DefC}
        C:=\max_{\bfA,\bfB}\sum_{k\geq 2m-t+1}|[p^k]\bra{c_+}\bfA^\dagger\bfB\ket{c_-}|. 
    \end{align}
 \end{lemma}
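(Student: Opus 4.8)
The plan is to work throughout in the Hadamard basis $\cB=\{\ket{c_+},\ket{c_-}\}$ of the code and to reduce the estimate of $\Delta$ to an estimate of the single off-diagonal quantity $\bra{c_+}\bfA^\dagger\bfB\ket{c_-}$. First I would record the Dicke expansions of the two basis vectors,
\[
\ket{c_+}=\frac{\gamma}{\sqrt2}\Big(\sum_{l=0}^m b_l\ket{D^n_{gl}}-\sum_{l=0}^m(-1)^l b_l\ket{D^n_{n-gl}}\Big),\qquad \ket{c_-}=\frac{\gamma}{\sqrt2}\Big(\sum_{l=0}^m(-1)^l b_l\ket{D^n_{gl}}+\sum_{l=0}^m b_l\ket{D^n_{n-gl}}\Big),
\]
which are supported on the two clusters of weights $\{gl:0\le l\le m\}$ and $\{n-gl:0\le l\le m\}$. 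Writing $a=|\supp(\bfA)|$, $b=|\supp(\bfB)|$, $c=|\supp(\bfA)\cup\supp(\bfB)|-a$, and observing that $\bfA_0$ is diagonal while $\bfA_1=\sqrt p\,\ket{0}\bra{1}$, the operator $\bfB$ lowers the Hamming weight of a computational-basis state by exactly $b$ and $\bfA^\dagger$ raises it by exactly $a$; hence $\bra{D^n_w}\bfA^\dagger\bfB\ket{D^n_{w'}}=0$ unless $w-w'=a-b$. Under the hypotheses $g\ge t+1$ and $\delta\ge t$ of Theorem~\ref{theorem:ADCorrection} we have $g>|a-b|$ and $n=2gm+\delta+1>2gm\ge g(l+l')$, so every surviving summand of $\bra{c_\pm}\bfA^\dagger\bfB\ket{c_\pm}$ and of $\bra{c_+}\bfA^\dagger\bfB\ket{c_-}$ must have $a=b$, $l=l'$, and the two Dicke states in the same cluster.

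Carrying out the resulting bookkeeping with the expansions above gives, for all $\bfA,\bfB$,
\[
\bra{c_+}\bfA^\dagger\bfB\ket{c_+}=\bra{c_-}\bfA^\dagger\bfB\ket{c_-}=\frac{\gamma^2}{2}\sum_{l=0}^m b_l^2\Big(\bra{D^n_{gl}}\bfA^\dagger\bfB\ket{D^n_{gl}}+\bra{D^n_{n-gl}}\bfA^\dagger\bfB\ket{D^n_{n-gl}}\Big),
\]
\[
\bra{c_+}\bfA^\dagger\bfB\ket{c_-}=\frac{\gamma^2}{2}\sum_{l=0}^m(-1)^l b_l^2\Big(\bra{D^n_{gl}}\bfA^\dagger\bfB\ket{D^n_{gl}}-\bra{D^n_{n-gl}}\bfA^\dagger\bfB\ket{D^n_{n-gl}}\Big)
\]
when $a=b$, all three quantities being $0$ when $a\ne b$. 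Since $\bra{c_+}\bfA^\dagger\bfB\ket{c_+}=\bra{c_-}\bfA^\dagger\bfB\ket{c_-}$, the constant $m_{\bfA,\bfB}$ equals both of these, so the two diagonal entries of $\bfM_{\bfA,\bfB}$ vanish and $\max_{\bfA,\bfB}\max_i\bfM_{\bfA,\bfB}^{ii}=0$. As the code encodes one qubit we have $|\cB|=2$, and since the code coefficients are real, $\bra{c_-}\bfA^\dagger\bfB\ket{c_+}$ and $\bra{c_+}\bfA^\dagger\bfB\ket{c_-}$ coincide up to swapping $\bfA$ and $\bfB$; hence $\Delta=\max_{\bfA,\bfB}\max_{i\ne j}\bfM_{\bfA,\bfB}^{ij}\le\max_{\bfA,\bfB}\bigl|\bra{c_+}\bfA^\dagger\bfB\ket{c_-}\bigr|$.

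It remains to show that each polynomial $\bra{c_+}\bfA^\dagger\bfB\ket{c_-}$ in $p$ has all its terms of degree $\ge 2m-t+1$. This is trivial when $a\ne b$, so assume $a=b$. Substituting \eqref{eq:DickeStatesADErrors} into the formula above, using $b_l^2=\binom ml/\binom{n/g-l}{m+1}$ and the subset-of-a-subset identity $\binom{w-a}{k-a}\binom{n-c-a}{w-a}=\binom{n-c-a}{k-a}\binom{n-c-k}{w-k}$ to pull the $l$-independent factor $\binom{n-c-a}{k-a}$ outside the sum, I obtain
\[
[p^k]\bra{c_+}\bfA^\dagger\bfB\ket{c_-}=\frac{\gamma^2}{2}(-1)^{k-a}\binom{n-c-a}{k-a}\sum_{l=0}^m(-1)^l\frac{\binom ml}{\binom{n/g-l}{m+1}}\Big(\frac{\binom{n-c-k}{gl-k}}{\binom n{gl}}-\frac{\binom{n-c-k}{n-gl-k}}{\binom n{gl}}\Big).
\]
Rewriting $\binom{n-c-k}{n-gl-k}=\binom{n-c-k}{gl-c}$, the inner sum is precisely the left-hand side of Lemma~\ref{lemma:C1} with $r$ replaced by $c+k$ and $a$ replaced by $k$. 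Since $c=|\supp(\bfB)\setminus\supp(\bfA)|\le b\le t$, for every $k$ with $0\le k\le 2m-t$ one has $0\le k\le c+k\le 2m<n/g$, so Lemma~\ref{lemma:C1} applies and the inner sum vanishes; thus $[p^k]\bra{c_+}\bfA^\dagger\bfB\ket{c_-}=0$ for all $k<2m-t+1$.

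Finally, for $p\in(0,1)$ the triangle inequality gives
\[
\bigl|\bra{c_+}\bfA^\dagger\bfB\ket{c_-}\bigr|\le\sum_{k\ge 2m-t+1}\bigl|[p^k]\bra{c_+}\bfA^\dagger\bfB\ket{c_-}\bigr|\,p^k\le p^{2m-t+1}\sum_{k\ge 2m-t+1}\bigl|[p^k]\bra{c_+}\bfA^\dagger\bfB\ket{c_-}\bigr|\le Cp^{2m-t+1},
\]
and combining with the reduction from the second paragraph finishes the proof. I expect the main obstacle to be the third paragraph: engineering the binomial identity so that the $l$-dependence factors out of $[p^k]\bra{c_+}\bfA^\dagger\bfB\ket{c_-}$, and then matching the residual alternating sum exactly to Lemma~\ref{lemma:C1} with the right parameter dictionary $r=c+k$, $a=k$ — in particular checking the inequality $c+k\le 2m$, which is where the overlap bound $c\le t$ and the range $k\le 2m-t$ (guaranteed by $m\ge\lceil 3t/2\rceil$) come into play. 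By contrast the first two paragraphs, though they rely on the separation hypotheses $g\ge t+1$ and $\delta\ge t$, are essentially mechanical expansions.
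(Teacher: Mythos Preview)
Your proof is correct and follows essentially the same route as the paper: reduce $\Delta$ to $\max_{\bfA,\bfB}|\bra{c_+}\bfA^\dagger\bfB\ket{c_-}|$, expand this inner product in Dicke states using the cluster separation hypotheses $g\ge t+1$ and $\delta\ge t$, and then show via the combinatorial identity that the coefficients of $p^k$ vanish for $k\le 2m-t$. The only cosmetic differences are that you compute $\bfM_{\bfA,\bfB}$ in the Hadamard basis (where the diagonal entries vanish) while the paper uses the computational basis $\{\ket{c_0},\ket{c_1}\}$ (where the off-diagonal entries vanish, since $\bra{c_0}\bfA^\dagger\bfB\ket{c_1}=0$), and that you inline the subset-of-a-subset rearrangement and invoke Lemma~\ref{lemma:C1} directly with $(r,a)=(c+k,k)$, whereas the paper packages exactly this step as the separate Lemma~\ref{lemma:C3}.
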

\begin{proof}
We start with writing $ m_{\bfA,\bfB} $ in \eqref{eq:matrixM} explicitly:
\begin{align*}
    m_{\bfA,\bfB} = \frac{1}{2}\bra{c_0}\bfA^\dagger\bfB\ket{c_0}+\frac{1}{2}\bra{c_1}\bfA^\dagger\bfB\ket{c_1}.
\end{align*}
Observe that $ \bra{c_0}\bfA^\dagger\bfB\ket{c_1} = \bra{c_1}\bfA^\dagger\bfB\ket{c_0} = 0 $ since $ g\geq t+1 $. Hence, the matrix $ \bfM_{\bfA,\bfB} $ in \eqref{eq:DefM_AB} can be written as
\begin{align*}
    \bfM_{\bfA,\bfB} &= \frac{\bra{c_0}\bfA^\dagger\bfB\ket{c_0} - \bra{c_1}\bfA^\dagger\bfB\ket{c_1} }{2}\left( \ket{c_0}\bra{c_0}-\ket{c_1}\bra{c_1} \right)\\
    &=\bra{c_+}\bfA^\dagger\bfB\ket{c_-}\left( \ket{c_0}\bra{c_0}-\ket{c_1}\bra{c_1} \right).
\end{align*}
We obtain
\begin{align}\label{eq:DeltaExplicit}
    \Delta = \max_{\bfA,\bfB}| \bra{c_+}\bfA^\dagger\bfB\ket{c_-} |.
\end{align}
The $ \cQ_{g,m,\delta,\epsilon} $ code in Construction \ref{constructionGmdelta} can be written in the Hadamard basis as follows:
\begin{align*}
    &\ket{c_+} = \sum_{l=0}^m\frac{\gamma b_l}{\sqrt{2}}\ket{D^n_{gl}} +\epsilon\sum_{l=0}^m\frac{\gamma b_l}{\sqrt{2}}(\epsilon)^l\ket{D_{n-gl}^n}\\
    &\ket{c_-} = \sum_{l=0}^m\frac{\gamma b_l}{\sqrt{2}}(-1)^l\ket{D^n_{gl}} -\epsilon \sum_{l=0}^m\frac{\gamma b_l}{\sqrt{2}}(-\epsilon)^l\ket{D_{n-gl}^n}.
\end{align*}
Let $ f(l)=(\gamma b_l)/\sqrt{2} $. Then, the inner product
\begin{multline*}
   \bra{c_+}\bfA^\dagger\bfB\ket{c_-} = \sum_{l=0}^m\sum_{l^\prime=0}^mf(l)f(l^\prime)(-1)^{l^\prime}\bra{D^n_{gl}}\bfA^\dagger\bfB\ket{D^n_{gl^\prime}}\\
   -\epsilon \sum_{l=0}^m\sum_{l^\prime=0}^mf(l)f(l^\prime)(-\epsilon)^{l^\prime}\bra{D^n_{gl}}\bfA^\dagger\bfB\ket{D^n_{n-gl^\prime}}\\  +\epsilon \sum_{l=0}^m\sum_{l^\prime=0}^mf(l)f(l^\prime)(\epsilon)^l(-1)^{l^\prime}\bra{D^n_{n-gl}}\bfA^\dagger\bfB\ket{D^n_{gl^\prime}} \\-\epsilon^2\sum_{l=0}^m\sum_{l^\prime=0}^mf(l)f(l^\prime)(\epsilon)^{l}(-\epsilon)^{l^\prime}\bra{D^n_{n-gl}}\bfA^\dagger\bfB\ket{D^n_{n-gl^\prime}}.
\end{multline*}   
Observe that $ \bra{D^n_{gl}}\bfA^\dagger\bfB\ket{D^n_{n-gl^\prime}} = \bra{D^n_{n-gl}}\bfA^\dagger\bfB\ket{D^n_{gl^\prime}} =0 $ since the weight of any state can decrease by at most 
$ t $ upon applying $ \bfA,\bfB,$ and $ n-gl-t \neq gl^\prime $ for any $ l,l^\prime$. To see this, recall 
that $ l+l^\prime\leq 2m $ and $ \delta\geq t $, which yields $ g(l+l^\prime)+t \leq 2gm + t < 
2gm+\delta+1 =n $. Furthermore, the inner products $ \bra{D^n_{gl}}\bfA^\dagger\bfB\ket{D^n_{gl^\prime}} $ and 
$ \bra{D^n_{n-gl}}\bfA^\dagger\bfB\ket{D^n_{n-gl^\prime}} $ are zero unless $ l=l^\prime,$ 
since $g\geq t+1.$ Therefore, recalling $ \epsilon^2=1 $, we obtain 
\begin{align}\label{eq:innerProduct}
    \bra{c_+}\bfA^\dagger\bfB\ket{c_-} = \sum_{l=0}^mf(l)^2(-1)^l\left( \bra{D^n_{gl}}\bfA^\dagger\bfB\ket{D^n_{gl}} - \bra{D^n_{n-gl}}\bfA^\dagger\bfB\ket{D^n_{n-gl}} \right).
\end{align}
By combining \eqref{eq:DickeStatesADErrors} with \eqref{eq:innerProduct} and interchanging the order of summation, we obtain
\begin{align}
    \bra{c_+}\bfA^\dagger\bfB\ket{c_-} &= \sum_{k \geq 0}(-1)^{k-a}p^k  \sum_{l=0}^m(-1)^l\frac{f(l)^2}{\binom{n}{gl}}\Big\{{\binom{gl-a}{k-a}\binom{n-c-a}{gl-a}}\notag\\ &
    \hspace*{.3in}-{\binom{n-gl-a}{k-a}\binom{n-c-a}{n-gl-a}}\Big\}  + \mathcal{O}(p^{gm+1}). \label{eq:innerProduct2}
\end{align}
Here we considered Kraus operators $ \bfA $ and $ \bfB $ such that $ |\supp(\bfA)| = |\supp(\bfB)| $. Recall that $ c,a\leq t,$ which together with Lemma \ref{lemma:C3} below implies that the inner sum in \eqref{eq:innerProduct2} is zero for all $ k\leq 2m-t.$ Now using \eqref{eq:DeltaExplicit} and \eqref{eq:innerProduct2} completes the proof.
\end{proof}

We will borrow the following lemma from \cite{ouyangPI} with a small change.
\begin{lemma}[\rm \cite{ouyangPI}, Lemma 15]\label{Lemma:LowerBoundLambdaMin}
     Let $ \bfA\in\epsilon_{p_1,t} $ and $ p_1<1/2 $ be a real number. Let
    \begin{align*}
        p_0 = n^{-t/(2m-2t+1)}\left( \frac{D}{2C} \right)^{1/(2m-2t+1)},
    \end{align*}
    where $ C $ is given by \eqref{eq:DefC} and
    \begin{align*}
        D:=\min_{\bfA}\min\{\bra{c_0}\bfA^\dagger\bfA\ket{c_0}, \bra{c_1}\bfA^\dagger\bfA\ket{c_1}   \}. 
    \end{align*}
    Suppose that $ p_0\leq p_1 $. Then, for all $ p < p_0 $,
    \begin{align*}
        \lambda_{\min}(\bfM)\geq \frac{D}{2}p^t.
    \end{align*}
\end{lemma}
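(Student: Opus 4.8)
The goal is to lower-bound $\lambda_{\min}(\bfM)$ in terms of $p^t$, and the plan is to exploit the block structure of $\bfM$ that was already established in the proof of Lemma~\ref{Lemma:UpperBoundDelta}. Recall that $\bfM = \sum_{\bfA,\bfB}m_{\bfA,\bfB}\ket{\bfA}\bra{\bfB}$, and by Theorem~\ref{theorem:OuyangTheorem10}'s perturbation bound the relevant quantity is how far $\bfM$ is from its ``diagonal part'' (the matrix $\bfM_{\bfA,\bfB}$ governs the off-diagonal/perturbative contribution). The first step is to write $\bfM = \bfM_0 + \bfM_1$, where $\bfM_0$ collects the contributions from pairs $\bfA,\bfB$ with $|\supp(\bfA)|=|\supp(\bfB)|=a$ and the ``leading'' term in each $m_{\bfA,\bfB}$, and $\bfM_1$ is an error term. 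From Lemma~\ref{lemma:C1} we know $m_{\bfA,\bfA}$ has leading behaviour $\sim (\text{const})\,p^a$, so restricting to the block with smallest support $a=0$ is not enough; rather, one should note $m_{\bfA,\bfB}=0$ unless $a=b$ (this uses $g\ge t+1$, as in Lemma~\ref{Lemma:UpperBoundDelta}), so $\bfM$ is block-diagonal across support sizes, and the smallest eigenvalue is the minimum over the blocks.

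Next I would analyze the block for a fixed support size $a$. Within this block the entries are $m_{\bfA,\bfB}$ for $\supp(\bfA),\supp(\bfB)$ of size $a$; using \eqref{eq:DickeStatesADErrors} and the same computation as in \eqref{eq:innerProduct2}, each such entry equals $(\text{const}_a)\,p^a + \mathcal{O}(p^{a+1})$ where the constant is \emph{independent of which size-$a$ sets} $\bfA,\bfB$ are (because, to leading order in $p$, $\bfA^\dagger\bfB$ acting between Dicke states only sees $a$ and not the actual supports when $|\supp(\bfA)\cup\supp(\bfB)|$ enters at higher order — more precisely the $c$-dependence is suppressed). Hence the leading-order block is $(\text{const}_a)p^a J$ where $J$ is an all-ones matrix, which is rank one with a single nonzero eigenvalue. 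This is the crux: a rank-one leading term could make $\lambda_{\min}$ of the block vanish faster than $p^a$. The resolution is that $\lambda_{\min}(\bfM)$ is \emph{not} what we bound directly block by block; instead, following Ouyang's original Lemma~15, one bounds $\lambda_{\min}(\bfM)$ from below by $\lambda_{\min}\big(\sum_{\bfA}\bfA^\dagger\bfA\big)$-type quantities restricted to the code, then uses the definition of $D$.

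Concretely, the cleanest route mirrors \cite{ouyangPI}: write, for any unit vector $\ket{v}=\sum_{\bfA}v_{\bfA}\ket{\bfA}$ in the Kraus-index space,
\begin{align*}
\bra{v}\bfM\ket{v} = \frac{1}{|\cB|}\sum_{\ket{c_i}\in\cB}\Big\|\sum_{\bfA}v_{\bfA}\bfB\ket{c_i}\Big\|^2 \ \text{(schematically)},
\end{align*}
so that $\bfM\succeq 0$ and, crucially, the minimum is controlled by $\min_{\bfA}\min_i\bra{c_i}\bfA^\dagger\bfA\ket{c_i}=D p^{t}(1+o(1))$ once one shows the cross terms between distinct $\bfA$'s do not destroy positivity at order $p^t$. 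The cross terms are exactly what Lemma~\ref{Lemma:UpperBoundDelta} controls: they contribute at order $p^{2m-t+1}$, which for $p<p_0$ with $p_0$ as defined is dominated by $\tfrac12 D p^t$ because $p_0^{2m-2t+1}=\tfrac{D}{2Cn^t}$ makes $C n^{t}p^{2m-t+1}\le \tfrac12 D p^t$. So the final step is the inequality chain $\lambda_{\min}(\bfM)\ge D p^t - (\text{number of cross terms})\cdot\Delta \ge Dp^t - |\epsilon|^2 C p^{2m-t+1}\ge \tfrac{D}{2}p^t$ for $p<p_0$, where the middle inequality uses $\Delta\le Cp^{2m-t+1}$ from Lemma~\ref{Lemma:UpperBoundDelta} and a Gershgorin/Weyl-type bound, and the last uses the choice of $p_0$ together with $p_1<1/2$ to absorb lower-order corrections to the $Dp^t$ leading term. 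The main obstacle is making the ``leading term is $Dp^t$'' statement rigorous: one must verify that $\bra{c_i}\bfA^\dagger\bfA\ket{c_i}$ genuinely has a positive leading coefficient at order $p^t$ for every $\bfA\in\epsilon_{p_1,t}$ with $|\supp(\bfA)|=t$ (for smaller support the term is even larger), and that the negative higher-order corrections from \eqref{eq:DickeStatesADErrors} can be bounded uniformly using $p<1/2$ — this is where the hypothesis $p_1<1/2$ and the structure of the code coefficients $b_l,\gamma$ from Construction~\ref{constructionGmdelta} enter, and it is essentially a careful bookkeeping of the binomial sums rather than a conceptual difficulty.
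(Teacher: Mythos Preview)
The paper does not actually prove this lemma; it merely quotes it from \cite{ouyangPI} (Lemma~15) ``with a small change,'' so there is no in-paper argument to compare against. That said, your sketch has a genuine gap that would prevent it from going through.

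The central confusion is between two different ``off-diagonal'' objects. The quantity $\Delta$ that Lemma~\ref{Lemma:UpperBoundDelta} bounds is
\[
\Delta=\max_{\bfA,\bfB}|\langle c_+|\bfA^\dagger\bfB|c_-\rangle|
      =\max_{\bfA,\bfB}\tfrac12\bigl|\langle c_0|\bfA^\dagger\bfB|c_0\rangle-\langle c_1|\bfA^\dagger\bfB|c_1\rangle\bigr|,
\]
i.e., the \emph{asymmetry between the two codewords} (the size of the Knill--Laflamme violation). By contrast, the off-diagonal entries of $\bfM$ that a Gershgorin/Weyl bound would need are
\[
m_{\bfA,\bfB}=\tfrac12\bigl(\langle c_0|\bfA^\dagger\bfB|c_0\rangle+\langle c_1|\bfA^\dagger\bfB|c_1\rangle\bigr),\qquad \bfA\neq\bfB,
\]
the \emph{average} over codewords. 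These are not controlled by $\Delta$. Indeed, for $\bfA\ne\bfB$ with $|\supp(\bfA)|=|\supp(\bfB)|=a$, formula~\eqref{eq:DickeStatesADErrors} gives $\langle D^n_w|\bfA^\dagger\bfB|D^n_w\rangle=p^a(1-p)^{w-a}\binom{n-c-a}{w-a}/\binom{n}{w}$, so $m_{\bfA,\bfB}$ is of order $p^a$, exactly the same order as the diagonal entry $m_{\bfA,\bfA}$. Your chain
\[
\lambda_{\min}(\bfM)\ge Dp^t-(\text{cross terms})\cdot\Delta\ge Dp^t-|\epsilon|^2Cp^{2m-t+1}
\]
therefore fails at the first inequality: within the support-size-$t$ block the row sums of off-diagonal entries are $\Theta(p^t)$, not $O(p^{2m-t+1})$, and no Gershgorin bound can separate them from the diagonal. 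You already noticed the symptom (``leading-order block is rank one''), but the proposed resolution --- invoking Lemma~\ref{Lemma:UpperBoundDelta} for the cross terms --- addresses the wrong quantity.

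What is actually needed (and what Ouyang's argument supplies) is a lower bound on $\lambda_{\min}$ that does not go through a naive diagonal-dominance estimate on $\bfM$; one must exploit the Gram-matrix structure $m_{\bfA,\bfB}=\tfrac12\sum_i\langle \bfA c_i,\bfB c_i\rangle$ together with the near-orthogonality of the vectors $\bfA|c_i\rangle$ for distinct $\bfA$ \emph{of different support}, and then handle the within-block degeneracy by a separate argument on the Gram matrices $M^{(i)}_{\bfA,\bfB}:=\langle c_i|\bfA^\dagger\bfB|c_i\rangle$. The role of $\Delta$ there is to control $\|M^{(0)}-M^{(1)}\|$, not the off-diagonal of $\bfM$ itself.
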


{\em Proof of Theorem~\ref{theorem:ADCorrection}:} 
By using Theorem \ref{theorem:OuyangTheorem10} and Lemmas \ref{Lemma:LowerBoundTrace}, \ref{Lemma:UpperBoundDelta}, and \ref{Lemma:LowerBoundLambdaMin}, we can bound the worst case error for the amplitude damping channel on $ n $ qubit as follows:
\begin{align}\label{eq:FinalUpperBound}
    \inf_{{\cR}}E_{{\cE}_p^{\otimes n},\cC}({\cR}) \leq 1 - \frac{1-\binom{n}{t+1}p^{t+1}-|\epsilon_{p,t}|^2Cp^{2m-t+1}}{1+\frac{2C|\epsilon_{p,t}|^2(|\epsilon_{p,t}|-1)}{D}p^{2m-2t+1}}.
\end{align}
Note that if $ m \geq \ceil{\frac{3t}{2}} $ holds, then the upper bound in \eqref{eq:FinalUpperBound} converges to zero in the rate of $ t+1 $ as $ p $ goes to zero. This proves the theorem. \hfil\qed

\vspace*{.1in} For example, consider the code $\cQ_{3,3,2,-}$ with basis states
\begin{align*}
    &\ket{c_0} = \frac{1}{8}\left( \ket{D_0^{21}} + \sqrt{21}\ket{D_6^{21}} + \sqrt{35}\ket{D_{12}^{21}} +\sqrt{7}\ket{D_{18}^{21}}\right),\\
    &\ket{c_1} = \frac{1}{8}\left( \sqrt{7}\ket{D_3^{21}} + \sqrt{35}\ket{D_9^{21}} - \sqrt{21}\ket{D_{15}^{21}} -\ket{D_{21}^{21}}\right),
\end{align*}
As shown above, this code is of length {21} and it corrects 2 amplitude damping errors. 

To compare the codes $\cQ_{g,m,\delta,\epsilon}$ with the existing constructions of {permutation-invariant} codes that correct amplitude damping errors, we note that
the shortest code in Ouyang's {\em gnu} family that corrects $t$ amplitude damping errors has length 
$(t+1)(3t+1)+t.$ At the same time, taking $(g,m,\delta,\epsilon) = (t+1,\ceil{\frac{3t}{2}}, t,\mp 1)$, we obtain a code $\cQ_{g,m,\delta,\epsilon}$ 
of length $(t+1)(1+2\ceil{\frac{3t}{2}})$. Thus, for an even $t$ our construction requires $t$ fewer
physical qubits than the best {permutation-invariant} codes known previously. 
At the same time, for odd $t$ it needs 1 more physical qubit compared to the {\em gnu} codes.

The next lemma was referenced toward the end of the proof of Lemma~\ref{Lemma:UpperBoundDelta}.
\begin{lemma}\label{lemma:C3}
    Let $ a,c,k,g,n,m,t $ be nonnegative integers. For all $ n>2gm $, $ k\leq 2m -t $, $ c\leq a\leq t\leq m $, 
    \begin{align}\label{eq:E2}
        \sum_{l=0}^{m}(-1)^l\frac{\binom{m}{l}}{\binom{n/g-l}{m+1}}\Big[\frac{\binom{gl-a}{k-a}\binom{n-(c+a)}{gl-a}}{\binom{n}{gl}}-\frac{\binom{n-gl-a}{k-a}\binom{n-(c+a)}{n-gl-a}}{\binom{n}{gl}}\Big]=0.
        \end{align}
\end{lemma}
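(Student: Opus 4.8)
My plan is to recognize Lemma~\ref{lemma:C3} as a disguised instance of Lemma~\ref{lemma:C1}. First I would dispose of the degenerate case $k<a$: then $k-a<0$, so every binomial coefficient of the form $\binom{\cdot}{k-a}$ in \eqref{eq:E2} vanishes and the identity is trivial; so assume $a\le k$ from now on. Note that the hypotheses $n>2gm$, $c\le t$, $k\le 2m-t$ force $n-c-k>2gm-t-(2m-t)=2m(g-1)\ge 0$, so all the binomials appearing below are well formed.

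The heart of the argument is a double application of the ``subset of a subset'' identity $\binom{N}{J}\binom{J}{K}=\binom{N}{K}\binom{N-K}{J-K}$. Taking $(N,J,K)=\bigl(n-(c+a),\,gl-a,\,k-a\bigr)$ gives
\begin{align*}
\binom{gl-a}{k-a}\binom{n-(c+a)}{gl-a}=\binom{n-(c+a)}{k-a}\binom{n-c-k}{gl-k},
\end{align*}
and taking $(N,J,K)=\bigl(n-(c+a),\,n-gl-a,\,k-a\bigr)$ gives
\begin{align*}
\binom{n-gl-a}{k-a}\binom{n-(c+a)}{n-gl-a}=\binom{n-(c+a)}{k-a}\binom{n-c-k}{n-gl-k}.
\end{align*}
Before using these I would check that they survive the convention $\binom{N}{J}=0$ for $J<0$ or $J>N$: in each boundary regime for $l$ (for instance $gl<a$, or $a\le gl<k$, and the analogous regimes for the second identity) one verifies that both sides vanish for the same reason, so the identities hold term by term.

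After substituting, I would pull the constant $\binom{n-(c+a)}{k-a}$ out of the sum and rewrite the second term using the complementation symmetry $\binom{n-c-k}{n-gl-k}=\binom{n-c-k}{gl-c}$ (again valid, including the degenerate case $gl<c$ where both sides are $0$). The left-hand side of \eqref{eq:E2} then equals $\binom{n-(c+a)}{k-a}$ times
\begin{align*}
\sum_{l=0}^{m}(-1)^l\frac{\binom{m}{l}}{\binom{n/g-l}{m+1}}\Big[\frac{\binom{n-c-k}{gl-k}}{\binom{n}{gl}}-\frac{\binom{n-c-k}{gl-c}}{\binom{n}{gl}}\Big],
\end{align*}
which is exactly the left-hand side of \eqref{eq:E1} under the substitution $a\leftarrow k$, $r\leftarrow c+k$. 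To invoke Lemma~\ref{lemma:C1} I only need its hypotheses for these parameters: $g>0$ is assumed; $0\le k$ is clear; $k\le c+k$ because $c\ge 0$; $c+k\le 2m$ from $c\le a\le t$ and $k\le 2m-t$; and $2m<n/g$ is precisely $n>2gm$. Lemma~\ref{lemma:C1} then gives $0$, proving \eqref{eq:E2}.

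The step I expect to require the most care is the bookkeeping with the zero convention in the two binomial identities above: one has to check, in each of the finitely many ranges of $l$ where $gl$ is small relative to $a$, $c$, or $k$, that the two sides of each identity degenerate to $0$ simultaneously, so that the rearrangement is legitimate as an equality of individual terms rather than merely a formal polynomial manipulation.
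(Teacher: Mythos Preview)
Your proposal is correct and follows essentially the same route as the paper: both arguments apply the ``subset of a subset'' identity $\binom{N}{J}\binom{J}{K}=\binom{N}{K}\binom{N-K}{J-K}$ to each product in the bracket, factor out $\binom{n-c-a}{k-a}$, use the complementation $\binom{n-c-k}{n-gl-k}=\binom{n-c-k}{gl-c}$, and then invoke Lemma~\ref{lemma:C1} with the same substitution $(a,r)\leftarrow(k,c+k)$. Your version is slightly more careful than the paper's in that you explicitly treat the degenerate case $k<a$ and flag the boundary checks needed for the binomial identities under the convention $\binom{N}{J}=0$ for $J<0$ or $J>N$, but the core argument is identical.
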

\begin{proof}
    Since $\binom{n-c-a}{gl-a}\binom{gl-a}{k-a}=\binom{n-c-a}{k-a}\binom{n-c-k}{gl-k}$ and $\binom{n-c-a}{n-gl-a}\binom{n-gl-a}{k-a}=
    \binom{n-c-a}{k-a}\binom{n-c-k}{n-gl-k}$, the left-hand side of \eqref{eq:E2} can be rewritten as
    $$
    \binom{n-c-a}{k-a}\sum_{l=0}^m (-1)^l\frac{\binom ml}{\binom{n/g-l}{m+1}\binom{n}{ gl}}\Big[
    \binom{n-c-k}{gl-k}-\binom{n-c-k}{gl-c}\Big],
    $$
which is zero by Lemma~\ref{lemma:C1}.
\end{proof}

\section{Generalization of the Pollatsek--Ruskai Conditions}\label{sec:PR conditions}
In \cite[Thm.~1]{ruskai-polatsek} the authors formulated necessary and sufficient conditions for permutation-invariant codes of a specific form \eqref{eq:codewords} to correct a single error (and some double errors). In this section we will generalize
their conditions to extend to arbitrary patterns of $t$ errors for all $t\ge 1.$
The permutation-invariant code $\cC_n$ of \cite{ruskai-polatsek} has logical codewords
    \begin{equation}
        \ket{\bfc_0} = \sum_{l=0}^{(n-1)/2}q_{2l}\sqrt{\binom{n}{2l}}\ket{D^n_{2l}} \quad \text{and} \quad \ket{\bfc_1} = \sum_{l=0}^{(n-1)/2}q_{n-2l-1}\sqrt{\binom{n}{2l+1}}\ket{D^n_{2l+1}}, 
        \label{eq:PRcode}
    \end{equation}
where the states are not normalized, and where $n$ is assumed to be an odd integer.
The conditions for the code $\cC_n$ to correct $t$ qubit errors have the following form.
\begin{proposition}\label{corollaryRK}
Let $a,b$ be nonnegative integers. For real coefficients $q_0,q_2,\ldots,q_{n-1},$ not all zero, conditions {\rm(C3)} and 
{\rm(C4)}
for the code $\cC_n$ can be equivalently stated as 
    \begin{alignat*}{2} 
    &\text{{\rm (D1)}} \text{ For all even $a$ and odd $b$,} 
                                  &&\hspace*{-.15in}{a,b\leq 2t}, \\       
        && &\hspace*{-1in}\sum_{k=0}^{\frac{n-1}{2}-t}\binom{n-2t}{2k}q_{2k+a}q_{n-2k-b}=0; \quad\\
      & \text{{\rm (D2)}} \text{ For all even $a$ and $b$, $ a\leq b $, } &&a+b < 2t,\\
          & & &\hspace*{-1in}\sum_{k=0}^{\frac{n-1}{2}-t}\binom{n-2t}{2k}\left(q_{2k+a}q_{2k+b} - q_{2k+2t-a}q_{2k+2t-b}\right) = 0; \\
       & \text{{\rm (D3)}} \text{ For all odd $ a $ and $ b $, $ a\leq b $, } && a+b < 2t, \\ 
       & & &\hspace*{-1in}\sum_{k=0}^{\frac{n-1}{2}-t}\binom{n-2t}{2k}\left(q_{n-2k-a}q_{n-2k-b} - q_{n-2k-2t+a}q_{n-2k-2t+b}\right) = 0.
    \end{alignat*}
\end{proposition}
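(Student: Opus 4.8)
The plan is to specialize the general conditions (C3) and (C4) from Theorem~\ref{theoremConditions} to the code $\cC_n$ of~\eqref{eq:PRcode} and simplify. First I would read off the coefficient vectors $\boldsymbol\alpha,\boldsymbol\beta$ for $\cC_n$: since the codewords use the \emph{unnormalized} Dicke states $\sqrt{\binom{n}{j}}\ket{D^n_j}$, we have $\alpha_j = q_j\sqrt{\binom{n}{j}}$ for $j$ even and $\alpha_j=0$ for $j$ odd, while $\beta_j = q_{n-j}\sqrt{\binom{n}{j}}$ for $j$ odd and $\beta_j=0$ for $j$ even (here I am using that $n$ is odd, so $n-j$ is even exactly when $j$ is odd, matching the indexing $q_0,q_2,\dots,q_{n-1}$). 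The key simplification is that multiplying $\alpha_{j+a}$ by $\beta_{j+b}$ (or $\alpha_{j+a}$ by $\alpha_{j+b}$, etc.) introduces a factor $\sqrt{\binom{n}{j+a}\binom{n}{j+b}}$ that \emph{exactly cancels} the denominator $\sqrt{\binom{n}{j+a}\binom{n}{j+b}}$ appearing in (C3) and (C4), leaving only $\binom{n-2t}{j}$ times products of the $q$'s.

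Next I would handle the parity bookkeeping. In (C3), the product $\alpha_{j+a}\beta_{j+b}$ is nonzero only when $j+a$ is even and $j+b$ is odd, i.e.\ $a+b$ is odd; writing $a=a'$ even and $b=b'$ odd (the case $a$ odd, $b$ even gives the same equation by the symmetry $a\leftrightarrow b$, $j\mapsto j$), and substituting $j=2k-a$, the nonzero terms are indexed by $k$ with $2k=j+a$ even, and $\binom{n-2t}{j}=\binom{n-2t}{2k-a}$. A reindex $k\mapsto k$ (absorbing the shift) should turn the sum into $\sum_k \binom{n-2t}{2k} q_{2k+a}q_{n-2k-b}$ after also using $\beta_{j+b}=q_{n-(j+b)}\sqrt{\binom{n}{j+b}}$ and tracking that $n-(j+b)=n-2k+a-b$; care is needed to line up the index so that the two free shifts combine into the single pattern shown in (D1), and to pin down the summation range $0\le k\le \frac{n-1}{2}-t$ from the support constraints $0\le j\le n-2t$ together with $\binom{n-2t}{2k}=0$ outside that window. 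For (C4), the term $\alpha_{j+a}\alpha_{j+b}$ forces $a,b$ both even (else it vanishes) and $\beta_{j+a}\beta_{j+b}$ forces $a,b$ both odd, so (C4) splits into two independent families, giving (D2) and (D3) respectively; the restriction $a\le b$ removes the duplicate obtained by swapping $a,b$, and the condition $a+b<2t$ (rather than $a+b\le 4t$) should come from noting that the supports of $\alpha_{j+a}$ and $\alpha_{j+b}$ overlap only when $|a-b|$ is small and, more importantly, from the reindexing: the second product $q_{2k+2t-a}q_{2k+2t-b}$ is the image of the first under $a\mapsto 2t-a$, $b\mapsto 2t-b$, so the pair $(a,b)$ and $(2t-a,2t-b)$ give the same equation, and one representative per orbit is picked out by $a+b<2t$ (the boundary case $a+b=2t$ being trivial). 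I would also note that (D2) with $a=b$ recovers, after setting $a=b=0$, the original normalization/weight-enumerator identity, which is a useful sanity check.

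The main obstacle I anticipate is purely the index gymnastics: getting the two independent shift parameters $a$ and $b$ in (C3)/(C4) to collapse correctly onto the single ``reflected'' pattern in (D1)--(D3), keeping the argument of $q$ inside the valid range $\{0,2,\dots,n-1\}$, and verifying that terms falling outside the stated summation range genuinely vanish (either because $\binom{n-2t}{2k}=0$ or because the relevant $\alpha$ or $\beta$ coefficient is zero). There is no deep idea here beyond the cancellation of the $\sqrt{\binom{n}{\cdot}}$ factors; the content is organizing the parity cases and confirming that the resulting finite sums match (D1), (D2), (D3) term by term. A secondary, minor point is to justify that these three families are \emph{equivalent} to (C3)+(C4) and not merely implied by them: every instance of (C3) and (C4) must be shown to reduce to one of (D1)--(D3) (up to the symmetries above) and conversely, so that no condition is lost.
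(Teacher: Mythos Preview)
Your overall plan is exactly the paper's: specialize (C3) and (C4) to the coefficients $\alpha_j=q_j\sqrt{\binom nj}$ for $j$ even (else $0$) and $\beta_j=q_{n-j}\sqrt{\binom nj}$ for $j$ odd (else $0$), observe that the square-root factors cancel the denominator in (C3)--(C4), and then sort out the parities. Your treatment of (C3) and of the reduction to $a+b<2t$ via the involution $(a,b)\leftrightarrow(2t-a,2t-b)$ is along the right lines and matches the paper.

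There is, however, a genuine error in your (C4) analysis. You write that ``$\alpha_{j+a}\alpha_{j+b}$ forces $a,b$ both even \ldots\ and $\beta_{j+a}\beta_{j+b}$ forces $a,b$ both odd, so (C4) splits into two independent families, giving (D2) and (D3) respectively.'' This is not how the split arises. The product $\alpha_{j+a}\alpha_{j+b}$ is nonzero precisely when $j+a$ and $j+b$ are both even, which only forces $a\equiv b\pmod 2$, not that $a,b$ are individually even; the same is true of $\beta_{j+a}\beta_{j+b}$. Hence for \emph{every} pair $(a,b)$ of equal parity, \emph{both} the $\alpha$-sum and the $\beta$-sum contribute to (C4): the $\alpha$-part runs over one parity of $j$ and the $\beta$-part over the other. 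Concretely, when $a,b$ are both even, the $\alpha$-part (even $j=2k$) yields $\sum_k\binom{n-2t}{2k}q_{2k+a}q_{2k+b}$, while the $\beta$-part (odd $j$), after reindexing $j\mapsto n-2t-j$, yields $\sum_k\binom{n-2t}{2k}q_{2k+2t-a}q_{2k+2t-b}$; their \emph{difference} is exactly (D2). Similarly, $a,b$ both odd gives (D3). So (D2) and (D3) are not the ``$\alpha$-family'' and the ``$\beta$-family''; they are the even-$(a,b)$ and odd-$(a,b)$ instances of (C4), each already built from an $\alpha$-piece minus a $\beta$-piece. With this corrected, the rest of your outline goes through as in the paper. (A minor related point: in (C3) the reduction from ``$a$ odd, $b$ even'' to (D1) is not literally the swap $a\leftrightarrow b$; it comes from the reindex $j\mapsto n-2t-j$, which sends that case to the (D1) condition at $(2t-b,\,2t-a)$.)
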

\begin{proof} Observe that the coefficient vectors $ \underline{\alpha} $ and $ \underline{\beta} $ in \eqref{eq:PRcode} are
\begin{align*}
    &\alpha_j = \sum_{l=0}^{(n-1)/2}q_{j}\sqrt{\binom{n}{j}}\delta_{j,2l},\\
    &\beta_j = \sum_{l=0}^{(n-1)/2}q_{n-j}\sqrt{\binom{n}{j}}\delta_{j,2l+1}.
\end{align*}
We begin with writing the term
\begin{align*}
    \alpha_{j+a}\beta_{j+b} &= \sum_{l=0}^{(n-1)/2}\sum_{l^\prime=0}^{(n-1)/2}q_{j+a}q_{n-j-b}\sqrt{\binom{n}{j+a}\binom{n}{j+b}}\delta_{2l-a,2l^\prime+1-b}\delta_{j,2l-a}.
\end{align*}
We see that the product $ \alpha_{j+a}\beta_{j+b} \ne 0$ only if $ 2l^\prime = 2l + b - a -1 $. For this to hold, since both $ l $ and $ l^\prime $ are integers, the numbers $a$ and $b$ must be of different parity. 
By symmetry, the cases (odd,even) and (even,odd) lead to the same expression.
Thus, it suffices to consider only one of them, say that of even $a$ and odd $b$ where $ 0\leq a,b \leq 2t $.
Condition $(C3)$ yields
\begin{align*}
    \sum_{j=0}^n\frac{\binom{n-2t}{j}}{\sqrt{\binom{n}{j+a} \binom{n}{j+b}}} \alpha_{j+a}\beta_{j+b} &= \sum_{l=0}^{(n-1)/2}\sum_{j=0}^n\binom{n-2t}{j}q_{j+a}q_{n-j-b}\delta_{j,2l-a}\\
    &=\sum_{l=0}^{(n-1)/2}\binom{n-2t}{2l-a}q_{2l}q_{n-2l+a-b}\\
    &=\sum_{k=0}^{\frac{n-1}{2}-t}\binom{n-2t}{2k}q_{2k+a}q_{n-2k-b}.
\end{align*}
Hence, for the code $\cC_n$, conditions (C3) and (D1) are equivalent. Now, let us write the terms
\begin{align*}
    \alpha_{j+a}\alpha_{j+b} &= \sum_{l=0}^{(n-1)/2}\sum_{l^\prime=0}^{(n-1)/2}q_{j+a}q_{j+b}\sqrt{\binom{n}{j+a}\binom{n}{j+b}}\delta_{2l-a,2l^\prime-b}\delta_{j,2l-a} 
\end{align*}
and
\begin{align*}
    \beta_{j+a}\beta_{j+b} &= \sum_{l=0}^{(n-1)/2}\sum_{l^\prime=0}^{(n-1)/2}q_{n-j-a}q_{n-j-b}\sqrt{\binom{n}{j+a}\binom{n}{j+b}}\delta_{2l+1-a,2l^\prime+1-b}\delta_{j,2l+1-a}. 
\end{align*}
The products $ \alpha_{j+a}\alpha_{j+b}  $ and $ \beta_{j+a}\beta_{j+b} $ are nonzero only when $2l = 2l^\prime + a- b$, implying that $a$ and $b$ have the same parity. 
We start with both being even, obtaining
\begin{align*}
        \sum_{j=0}^n\frac{\binom{n-2t}{j}}{\sqrt{\binom{n}{j+a} \binom{n}{j+b}}}\alpha_{j+a}\alpha_{j+b} &= \sum_{l=0}^{(n-1)/2} \sum_{j=0}^n \binom{n-2t}{j}q_{j+a}q_{j+b}\delta_{j,2l-a}\\
        &=\sum_{l=0}^{(n-1)/2}\binom{n-2t}{2l-a}q_{2l}q_{2l-a+b}\\
        &=\sum_{k=0}^{\frac{n-1}{2}-t}\binom{n-2t}{2k}q_{2k+a}q_{2k+b},
\end{align*}
and
\begin{align*}
        \sum_{j=0}^n\frac{\binom{n-2t}{j}}{\sqrt{\binom{n}{j+a} \binom{n}{j+b}}}\beta_{j+a}\beta_{j+b} &= \sum_{l=0}^{(n-1)/2} \sum_{j=0}^n \binom{n-2t}{j}q_{n-j-a}q_{n-j-b}\delta_{j,2l+1-a} \\
        &=\sum_{l=0}^{(n-1)/2}\binom{n-2t}{2l+1-a}q_{n-2l-1}q_{n-2l-1+a-b}\\
        &=\sum_{k=0}^{\frac{n-1}{2}-t}\binom{n-2t}{2k}q_{2k+2t-a}q_{2k+2t-b}.
\end{align*}
In the end we obtain
\begin{align}\label{exp1}
     \sum_{j=0}^n\frac{\binom{n-2t}{j}}{\sqrt{\binom{n}{j+a} \binom{n}{j+b}}}\left(\alpha_{j+a}\alpha_{j+b}-\beta_{j+a}\beta_{j+b}\right) = \sum_{j=0}^{\frac{n-1}{2}-t}\binom{n-2t}{2k}\left(q_{2k+a}q_{2k+b} - q_{2k+2t-a}q_{2k+2t-b}\right).
\end{align}
Likewise, for $a$ and $b$ odd, we compute
\begin{multline}\label{exp2}
    \sum_{j=0}^n\frac{\binom{n-2t}{j}}{\sqrt{\binom{n}{j+a} \binom{n}{j+b}}}\left(\alpha_{j+a}\alpha_{j+b}-\beta_{j+a}\beta_{j+b}\right) \\=\sum_{k=0}^{\frac{n-1}{2}-t}\binom{n-2t}{2k}\left(q_{n-2k-a}q_{n-2k-b} - q_{n-2k-2t+a}q_{n-2k-2t+b}\right).
\end{multline}
Observe that \eqref{exp1} and \eqref{exp2} are invariant under the exchange $a\leftrightarrow b$, and they are trivially zero when $ a+b=2t $. Furthermore, up to the sign, they have the same values in the regions
$ a+b>2t $ and $ a+b<2t $. Since conditions (D2) and (D3) require \eqref{exp1} and \eqref{exp2} to be zero,
they can result in different values only if $ a\leq b $ and $ a+b<2t $. 
This shows that conditions (D2) and (D3) together are equivalent to (C4). 
The proof is now complete.
\end{proof}

For $ t=1 $, conditions (D1), (D2), and (D3) are equivalent to the conditions 
for error correction in \cite[Thm.~1]{ruskai-polatsek}. Here we have extended them for any $t$, and thus
the code $\cC_n$ corrects all $t$ qubit errors if and only if it has real coefficients that satisfy conditions
(D1)-(D3).

We end this section with a remark concerning code construction. Observe that condition (D1) produces
$t(t+1) $ quadratic equations for the coefficients, and condition (D2) and (D3) together yield $t(t+1)/2$ more.
Therefore, to construct a $t$-error-correcting code that satisfies Proposition~\ref{corollaryRK}, 
we need to solve a system of $ 3t(t+1)/2 $ quadratic equations with respect to $(n+1)/2 $ real unknowns, where $ n $ is the code length. Generally, such a system is more likely than not to be incompatible, except for trivial solutions
if the length $ n<3t^2+3t-1 $ since it would be over-determined.
For codes of length $ n\geq3t^2+3t-1 $, there is also no guarantee of a non-trivial solution, and even
attempting to solve the system by computer is a non-trivial task. Along this path, for $ t=1 $, Pollatsek and
Ruskai \cite{ruskai-polatsek} showed that there is no code of length $5$ that satisfies these conditions.
The shortest permutation-invariant code for the single errors they obtained has length 7. 
For $ t=2 $, it can be shown that there is no code of length shorter than $ 19 $. 
For $ n\geq 19 $, one can try to solve a set of $ 9 $ quadratic equations to find an explicit code for double errors.

As an example, let us examine the case where $ t = 1 $, and $ n= 7 $. By (D1) the options for $(a,b)$ are $ (0,1) $ 
or $ (2,1) $, which yields the following equations:
\begin{align*}
    0&=\binom{5}{0}q_0q_6 + \left(\binom{5}{2}+\binom{5}{4}\right)q_2q_4,  \\
    0&=\left(\binom{5}{0}+\binom{5}{4} \right)q_2q_6 + \binom{5}{2}q_4^2.
\end{align*}
From (D2) for $ (a,b)=(0,0) $ we have
\begin{align*}
    0 = \binom{5}{0}\left(q_0^2-q_2^2\right)+\binom{5}{2}\left(q_2^2-q_4^2\right) + \binom{5}{4}\left( q_4^2-q_6^2 \right).
\end{align*}
As far as (D3) is concerned, there are no pairs $(a,b)$, both odd, such that $a+b<2t$, so this condition is vacuous and can be ignored. Performing simplifications, we obtain the following set of equations
\begin{align*}
    3q_2q_6 + 5q_4^2 = 0&\\
    q_0q_6 + 15q_2q_4 = 0\\
    q_0^2 + 9q_2^2 - 5q_4^2 - 5q_6^2 = 0.    
\end{align*}
Solving this system, we obtain the ((7,2,3)) permutation-invariant code of \cite{ruskai-polatsek}. This is to be expected since when $t=1,$ conditions (D1)-(D3) are equivalent to the conditions in \cite[Thm.~1]{ruskai-polatsek}. 
At the same time, conditions (D1)-(D3) are sufficient for the existence of codes that correct any number $t$ of errors, although their use becomes more difficult as $t$ increases.
For instance, for
$t=2$ and $n=19$, conditions (D1)-(D3) give rise to 9 quadratic equations. Solving them by computer, we determine that there exists a $((19,2,5))$ permutation-invariant code, and one choice of the coefficients $q_{2i},i=0,\dots,9$ has the form
{\small
\begin{gather*}
q_0=1, q_{2}=0.0477572, q_{4}=-0.0267249, q_{6}=-0.00506367, 
  q_{8}=0.00332914, q_{10}=0.00527235,\\ q_{12}=-0.000947223, 
  q_{14}=0.0152707, q_{16}=0.00888631, 
  q_{18}=0.32678.
\end{gather*}  
}  
These numbers are approximations of the solution, produced by {\tt msolve}, a C library for solving systems of polynomial equations  \cite{msolve}. Its output is an interval for each of the variables, where the solution is actually contained. These numbers
were also verified by Wolfram Mathematica. 

  \remove{
  {q[2]=0.0895044, q[4]=0.0265365, 
  q[6]=-0.00749379, q[8]=0.00481188, q[10]=-0.00464018, 
  q[12]=-0.00848491, q[14]=-0.00575057, q[16]=0.0440988, 
  q[18]=-0.340289}, {q[2]=-0.0895044, q[4]=0.0265365, 
  q[6]=0.00749379, q[8]=0.00481188, q[10]=0.00464018, 
  q[12]=-0.00848491, q[14]=0.00575057, q[16]=0.0440988, 
  q[18]=0.340289}, {q[2]=0.0441768, q[4]=0.00569365, 
  q[6]=0.00748978, q[8]=-0.00540083, q[10]=-0.00166005, 
  q[12]=-0.00170267, q[14]=-0.00530244, q[16]=-0.0423825, 
  q[18]=0.223268}, {q[2]=0.0441768, q[4]=0.00569365, 
  q[6]=0.00748978, q[8]=-0.00540083, q[10]=-0.00166005, 
  q[12]=-0.00170267, q[14]=-0.00530244, q[16]=-0.0423825, 
  q[18]=0.223268}, {q[2]=-0.0441768, q[4]=0.00569365, 
  q[6]=-0.00748978, q[8]=-0.00540083, q[10]=0.00166005, 
  q[12]=-0.00170267, q[14]=0.00530244, q[16]=-0.0423825, 
  q[18]=-0.223268}, {q[2]=0.0460251, q[4]=0.0168968, 
  q[6]=-0.00418959, q[8]=0.00470797, q[10]=0.0016692, 
  q[12]=-0.000325222, q[14]=-0.00730742, q[16]=-0.0362696, 
  q[18]=0.264061}, {q[2]=-0.0460251, q[4]=0.0168968, 
  q[6]=0.00418959, q[8]=0.00470797, q[10]=-0.0016692, 
  q[12]=-0.000325222, q[14]=0.00730742, q[16]=-0.0362696, 
  q[18]=-0.264061}}

\section{Concluding remarks}
In this paper, we introduced the necessary and sufficient conditions for a permutation-invariant code to
correct arbitrary $ t $ errors. We also presented a family of permutation-invariant codes that
can be defined explicitly using parameters $ g,m,\delta,\epsilon $. By adjusting these parameters, one can show
that the proposed codes correct arbitrary $t$ Pauli errors, $t$ amplitude damping errors, or $ t $ deletion errors. 
The minimum length of our codes is smaller than in the
previous explicit permutation-invariant code constructions. Since any permutation-invariant state must
necessarily be a ground state of the ferromagnetic Heisenberg model in the absence of an external magnetic field, 
the proposed codes are also suitable for a range of applications discussed in \cite{ouyangStorage,ouyangSensors,ouyanggCommunication}.

\section{Acknowledgment} The research of A. Barg was partially supported by NSF grants CCF-2110113 (NSF-BSF), CCF-2104489, and CCF-2330909.

\vspace*{.3in}
\appendix
\section{Proofs of combinatorial lemmas from Sec.~\ref{sec:new family}}
\label{sec:Proofs}

\begin{proof}[Proof of Lemma~\ref{lemma:C1}] 
First, let us notice that
\begin{align*}
    \frac{\binom{m}{l}}{\binom{n/g-l}{m+1}} &= \binom{n/g}{l}\binom{n/g-l-m-1}{m-l}\frac{m!(m+1)!\Gamma\left( n/g-2m \right)}{\Gamma\left( n/g+1 \right)}
\end{align*}
\new{The following {\em negation relation} is obtained directly by definition:
$
      \binom{x}{r}=(-1)^r\binom{r-x-1}{r}.
      $
Negating the second binomial on the right in the above equality, we obtain}
\begin{align*}
    = \frac{\binom{m}{l}}{\binom{n/g-l}{m+1}} 
    &= (-1)^{m-l}\binom{n/g}{l}\binom{2m-n/g}{m-l}\frac{m!(m+1)!\Gamma\left( n/g-2m \right)}{\Gamma\left( n/g+1 \right)}.
\end{align*}
Similarly, we obtain
\begin{align*}
    \frac{\binom{n-r}{gl-a}}{\binom{n}{gl}} = &\binom{gl}{a}\binom{n-gl}{r-a}\frac{a!(r-a)!(n-r)!}{n!},\\
    \frac{\binom{n-r}{gl-r+a}}{\binom{n}{gl}} = &\binom{gl}{r-a}\binom{n-gl}{a}\frac{a!(r-a)!(n-r)!}{n!}.   
\end{align*}
Hence, identity \eqref{eq:E1} is equivalent to the following identity:
\begin{align}
    \sum_{l=0}^m\binom{n/g}{l}\binom{2m-n/g}{m-l}\binom{gl}{a}\binom{n-gl}{r-a} = \sum_{l=0}^m\binom{n/g}{l}\binom{2m-n/g}{m-l}\binom{gl}{r-a}\binom{n-gl}{a}.
\label{eq:two_sums}
\end{align}
To prove \eqref{eq:two_sums} at once for all $a,r$ satisfying $0\leq a\leq r\leq 2m$, we first convert it into a power series identity. To do so, we multiply it by $x^a$, sum over $a=0,1,\dots,r$, and note that
\begin{align*}
      \sum_{a=0}^{r}\binom{gl}{a}\binom{n-gl}{r-a}x^a = \left[y^{r}\right]\ \left( 1+xy \right)^{gl}\left( 1+y \right)^{n-gl}
\end{align*}
and
\begin{align*}
      \sum_{a=0}^{r}\binom{gl}{r-a}\binom{gl}{a}x^a = \left[y^{r}\right]\ \left( 1+y \right)^{gl}\left( 1+xy \right)^{n-gl},
\end{align*}
where $\left[y^{r}\right]$ denotes the operator of taking the coefficient of $y^r$.
It follows that \eqref{eq:two_sums} is equivalent to the following power series having equal coefficients of $y^r$ (which are polynomials in $x$) for all $r\leq 2m$:
\begin{align*}
    F(x,y) := \sum_{l=0}^m\binom{n/g}{l}\binom{2m-n/g}{m-l}\left( 1+xy \right)^{gl}\left( 1+y \right)^{n-gl},\\
    G(x,y) := \sum_{l=0}^m\binom{n/g}{l}\binom{2m-n/g}{m-l}\left( 1+y \right)^{gl}\left( 1+xy \right)^{n-gl}.
\end{align*}
In other words, to prove the lemma, we need to show that 
$F(x, y) \equiv G(x, y) \pmod{y^{2 m+1}}$. 

It is easy to see that
\[\begin{split}
F(x,y) &= [z^m]\ (1+z(1+xy)^g)^{n/g} (1+z(1+y)^g)^{2m-n/g} (1+y)^{n-mg}\\
&= [z^m]\ \left(\frac{(1+y)^g+z(1+xy)^g}{1+z}\right)^{n/g} (1+z)^{2m},
\end{split}\]
and
\[\begin{split}
G(x,y) &= [z^m]\ (1+z(1+y)^g)^{n/g} (1+z(1+xy)^g)^{2m-n/g} (1+xy)^{n-mg} \\
&=[z^m]\ \left(\frac{(1+xy)^g+z(1+y)^g}{1+z}\right)^{n/g} (1+z)^{2m}.
\end{split}\]
Introducing $A:=(1+y)^g$ and $B:=(1+x y)^g$ for the sake of simplicity, we get that
\begin{align}
    F(x, y) &= \left[ z^m \right]\ \left( \frac{A+zB}{1+z} \right)^{n/g}(1+z)^{2m}, \label{eq:F(x,y)}\\
    G(x, y) &= \left[ z^m \right]\ \left( \frac{B+zA}{1+z} \right)^{n/g}(1+z)^{2m}.\notag
\end{align}

\new{Let us define a function $g(z)=\frac{z}{(1+z)^2}$ and introduce a new variable $w=g(z)$. Note that $
z=f(w):=\frac{1-2w-\sqrt{1-4w}}{2w}$, and thus $f(g(z))\equiv 1$. 
Let us write \eqref{eq:F(x,y)} using our new variable. For this, we introduce a function $H(w)$ obtained from the right-hand side of \eqref{eq:F(x,y)} upon the variable change:
\begin{align}
    H(w)=H(g(z)) &:=\Big(\frac{A+B}{2}+\frac{A-B}{2} \sqrt{1-4 w}\Big)^{n / g}\left(\frac{1-\sqrt{1-4w}}{2w}\right)^{2m} \label{eq:H(w)}\\
      &=\Big(\frac{A+zB}{1+z}\Big)^{n/g}(1+z)^{2m}.\notag
\end{align}
Let
\begin{align*}
    \Phi(w) :=\frac w{f(w)}= \frac{2w^2}{1-2w-\sqrt{1-4w}}.
\end{align*}
Our plan is to express $F(x,y)$ using $H(w)$. This task is resolved by the B\"urmann--Lagrange 
lemma \cite[p.733]{flajolet2009analytic} which says that
\begin{align*}
F(x,y)=[z^m]H(g(z)) = [w^m]\{H(w)\Phi(w)^{m-1}( \Phi(w) - w\Phi^\prime(w))\}.
\end{align*}
Substituting $\Phi$ and simplifying, we further obtain}
   $$
   F(x,y) = [w^m]\ \left( \frac{A+B}2 + \frac{A-B}2\sqrt{1-4w} \right)^{n/g}\frac1{\sqrt{1-4w}}.
   $$
Noticing that $A-B$ is a multiple of $y$, we can expand the last formula modulo $y^{2m+1}$ as follows:
\begin{align*}
    F(x,y) = (-4)^m\Big(\frac{A+B}{2}\Big)^{n/g}\sum_{j=0}^{2m}\binom{n/g}{j}\left( \frac{A-B}{A+B} \right)^j \binom{j/2-1/2}{m} + O(y^{2m+1}).
\end{align*}
Note that the expression for $G(x, y)$ can be obtained by exchanging $A$ and $B$. Observe that the terms with odd $j$ are zero (since $\binom{j/2-1/2}{m}=0$), while for even $j$, the corresponding terms in $F(x,y)$ and $G(x,y)$ coincide. This completes the proof of Lemma~\ref{lemma:C1}. 
\end{proof}

\begin{proof}[Proof of Lemma~\ref{lemma:C2}] Straightforward by Zeilberger's ``creative telescoping''. Let $ F(m,l) :=\frac{\binom{m}{l}}{\binom{2x-l}{m+1}} $ and let $ s(m) = \sum_lF(m,l) $, where the summation can be extended to all $ l\in \mathbb{Z} .$  
First notice that
\begin{align}\label{LemmaEqn2}
    2(m+2)F(m,l) - 2(x-m-1)F(m+1,l) = G(m,l+1)-G(m,l),
\end{align}
where
\begin{align*}
    G(m,l):=F(m,l)\frac{l(m+2)}{m-l+1}.
\end{align*}
To see this, divide both sides of (\ref{LemmaEqn2}) by $ F(m,l) $ and use
\begin{align*}
    \frac{F(m,l+1)}{F(m,l)}=\frac{(m-l)(2x-l)}{(2x-l-m-1)(l+1)}, \quad\quad \frac{F(m+1,l)}{F(m,l)} = \frac{(m+2)(m+1)}{(m-l+1)(2x-l-m-1)},
\end{align*}
obtaining the same expression on both sides. Now, sum (\ref{LemmaEqn2}) on $l$ to obtain
\begin{align*}
    2(m+2)s(m) - 2(x-m-1)s(m+1)=0,
\end{align*}
or
\begin{align*}
    s(m+1) = s(m)\frac{m+2}{x-m-1} = s(0)\prod_{j=0}^m\frac{m+2-j}{x-m-1+j} = \frac{1}{2x}\prod_{j=0}^{m}\frac{j+2}{x-1-j}.
\end{align*}
Thus,
\begin{align*}
    s(m) = \frac{1}{2x}\frac{(m+1)!}{(x-m)(x-1)_{(m-1)}} = \frac{m+1}{2(x-m)}\frac{1}{\frac{x}{m}\binom{x-1}{m-1}},
\end{align*}
which is the same as \eqref{eq:Z}. 
\end{proof}

\bibliographystyle{abbrvurl}
\bibliography{PI}

\end{document}